\newtheorem{thm}{Theorem}
\newtheorem{remark}{Remark}
\newtheorem{cor}{Corollary}
\newtheorem{lem}{Lemma}
\newtheorem{prop}{Proposition}
\theoremstyle{definition}
\theoremstyle{remark}
\newcommand{\U}{\mathcal{U}}
\newcommand{\N}{\mathcal{N}}
\newcommand{\F}{\mathcal{F}}
\newcommand{\cA}{\mathcal{A}}
\newcommand{\cL}{\mathcal{L}}
\newcommand{\cK}{\mathcal{K}}
\newcommand{\cH}{\mathcal{H}}
\newcommand{\cU}{\mathcal{U}}
\newcommand{\cF}{\mathcal{F}}
\newcommand{\cN}{\mathcal{N}}
\newcommand{\bx}{{\bf x}}
\newcommand{\bR}{{\mathbb R}}
\newcommand{\bC}{{\mathbb C}}
\newcommand{\bN}{{\mathbb N}}
\newcommand{\tr}{\mbox{Tr}}
\newcommand{\ph}{\varphi}
\newcommand{\wt}{\widetilde}
\numberwithin{equation}{section} \numberwithin{lem}{section}
\numberwithin{thm}{section} \numberwithin{prop}{section}
\numberwithin{cor}{section} \numberwithin{rem}{section}
\begin{document}

\thispagestyle{headings}

\title{Rate of Convergence Towards Hartree Dynamics}

\author{Li Chen}
\address{Department of Mathematical Sciences, Tsinghua University, Beijing, 100084, People's Republic of China} \email{{\tt lchen@math.tsinghua.edu.cn}}
\thanks{Li Chen is partially supported by National Natural Science Foundation of China (NSFC), grant number 10871112}

\author{Ji Oon Lee}
\address{Department of Mathematical Sciences, Korea Advanced Institute of Science and Technology, Daejeon, 305701, Republic of Korea}
\email{jioon.lee@kaist.edu}

\author{Benjamin Schlein}
\address{Institute for Applied Mathematics, University of Bonn, Endenicher Allee 60, 53115 Bonn, Germay}
\email{benjamin.schlein@hcm.uni-bonn.de}
\thanks{Benjamin Schlein is partially supported by an ERC Starting Grant}

\maketitle

\begin{abstract}
We consider a system of $N$ bosons interacting through a two-body potential with, possibly, Coulomb-type singularities. We show that the difference between the many-body Schr\"odinger evolution in the mean-field regime and the effective nonlinear Hartree dynamics is at most of the order $1/N$, for any fixed time. The $N$-dependence of the bound is optimal.  
\end{abstract}

\section{Introduction}

A system of $N$ bosons in three dimensions is described by $\psi_N \in L^2_s (\bR^{3N} , dx_1 \dots dx_N)$, the subspace of $L^2 (\bR^{3N}, dx_1 \dots dx_N)$ consisting of functions which are invariant with respect to permutations of the $N$ particles (the invariance w.r.t. permutation expresses the bosonic symmetry; fermionic systems are described by antisymmetric wave functions). We always assume $\psi_N$ to be normalized so that $\| \psi_N \|_2 = 1$ ($|\psi_N (x_1, \dots , x_N)|^2$ is interpreted as the probability density for finding particles close to $(x_1, \dots , x_N)$). We consider Hamilton operators with two-body interactions, having the form 
\begin{equation}\label{eq:ham0} H_{N,\lambda} = \sum_{j=1}^N -\Delta_{x_j} + \lambda \sum_{i<j}^N V (x_i -x_j) \end{equation}
and acting as self-adjoint operators on the Hilbert space $L^2_s (\bR^{3N} , dx_1 \dots dx_N)$. In (\ref{eq:ham0}), the sum of the Laplacians is the kinetic energy of the $N$ particles, $\lambda \in \bR$ is a coupling constant, and the sum of $V(x_i -x_j)$ over all pairs of particles describes the potential energy of the system ($V(x_i -x_j)$ acts as a multiplication operator); appropriate conditions on $V$ will be specified below. 

The evolution of the system is governed by the $N$ particle Schr\"odinger equation 
\begin{equation}\label{eq:schr} i \partial_t \psi_{N,t} = H_{N,\lambda} \psi_{N,t} \,. \end{equation}
The solution of the Schr\"odinger equation can be obtained by applying the unitary group generated by $H_{N,\lambda}$ to the initial wave function $\psi_{N, t=0}$; in other words, (\ref{eq:schr}) is always solved by $\psi_{N,t} = e^{-i H_{N,\lambda} t} \psi_{N, 0}$. In this sense, establishing existence and uniqueness of solutions of (\ref{eq:schr}) is not an issue. What makes the study of (\ref{eq:schr}) challenging is the fact that, in systems of interest in physics, the number of particles $N$ involved in the evolution is typically huge ($N$ ranges from values of the order $10^3$ in extremely dilute samples of Bose-Einstein condensates, up to values of the order $10^{23}$ in chemical samples). For such values of $N$, the expression $\psi_{N,t} = e^{-i H_{N,\lambda} t} \psi_{N,0}$ is not useful if one is interested in establishing quantitative or even qualitative properties of the dynamics. For this reason, one of the main goals of quantum statistical mechanics is the derivation of effective evolution equations which, on the one hand, can be approached by numerical methods (in contrast with (\ref{eq:schr})), and, on the other hand, approximate the solution of (\ref{eq:schr}) in the interesting regimes. 

One of the simplest regime where effective evolution equations can be used to approximate the full many-body evolution is the so-called mean field limit, which is characterized by a large number of very weak collisions among the particles. To realize the mean field limit, we consider large values of $N$ (many collisions) and small values of the coupling constant $\lambda$ (weak interactions). A non-trivial effective dynamics can only emerge when the many collisions produce a total force of order one on each particle; in other words, when $N \lambda$ is of order one. To study the mean-field regime, we set therefore $\lambda = 1/N$ and we consider the evolution generated by the Hamiltonian 
\begin{equation}\label{eq:ham-mf} H_N = \sum_{j=1}^N -\Delta_{x_j} + \frac{1}{N} \sum_{i<j}^N V (x_i - x_j) \end{equation}
in the limit of large $N$. In particular, we are interested in the evolution of factorized initial wave functions of the form $\psi_N = \ph^{\otimes N}$ (here, we use the notation 
$\ph^{\otimes N}  (x_1, \dots , x_N) = \prod_{j=1}^N \ph (x_j)$). Because of the interaction, factorization is not preserved by the time-evolution. However, since collisions are very weak, we may still expect that factorization is approximately preserved in the limit of large $N$. In other words, we may expect that, for large $N$, the solution $\psi_{N,t} = e^{-i H_N t} \psi_N$ of the Schr\"odinger equation can be approximated (in a sense to be made precise later), by 
\begin{equation}\label{eq:fact} \psi_{N,t}  \simeq \ph_t^{\otimes N} \end{equation}
for a suitable one-particle wave function $\ph_t$. Assuming (\ref{eq:fact}) to be correct, it is simple to derive a self-consistent equation for the one-particle orbital $\ph_t$. In fact, (\ref{eq:fact}) implies that, at time $t \in \bR$, particles are distributed in space, independently of each other, with probability density $|\ph_t|^2$. This means that the potential experienced by a particle at $x \in \bR^3$ can be approximated by the average, mean field, potential $(V* |\ph_t|^2) (x)$ and therefore, that $\ph_t$ must satisfy the nonlinear Hartree equation
\begin{equation}\label{eq:hartree0} i \partial_t \ph_t = -\Delta \ph_t + (V * |\ph_t|^2) \ph_t \end{equation}
with initial data $\ph_{t=0} = \ph$. 

In which sense can we expect the solution of the $N$-particle Schr\"odinger equation $\psi_{N,t}$ to be approximated by the factorized wave function on the r.h.s. of (\ref{eq:fact})? It turns out that one cannot expect convergence in norm (see, however, the recent works \cite{GMM,GMM2} where second order corrections to the mean-field dynamics are taken into account to obtain a norm approximation of the full dynamics). Instead, (\ref{eq:fact}) has to be understood on the level of the reduced density matrices. Let $|\psi_{N,t} \rangle \langle \psi_{N,t}|$ denote the orthogonal projection onto $\psi_{N,t}$. Then, for $k=1,\dots, N$ we define the $k$-particle reduced density matrix by taking the partial trace of $|\psi_{N,t} \rangle \langle \psi_{N,t}|$ over the degrees of freedom associated with the last $N-k$ particles, that is \[ \gamma^{(k)}_{N,t} = \tr_{k+1, \dots , N} \,  |\psi_{N,t} \rangle \langle \psi_{N,t} | \, .\] 
In other words, $\gamma^{(k)}_{N,t}$ is defined as the non-negative trace class operator on $L^2 (\bR^{3k}, dx_1, \dots dx_k)$ with the kernel 
\begin{equation}\label{eq:marg} \begin{split} \gamma^{(k)}_{N,t} ( \bx_k ; \bx'_k) &= \int dx_{k+1} \dots  dx_N \, \psi_{N,t}  (\bx_k , x_{k+1}, \dots, x_N) \overline{\psi}_{N,t} (\bx'_k, x_{k+1}, \dots, x_N) \end{split} \end{equation} 
where we set $\bx_k = (x_1, \dots , x_k)$ and, similarly, $\bx'_k = (x'_1, \dots , x'_k)$. 
{F}rom the normalization $\| \psi_{N,t} \| = 1$, we conclude that $\tr \, \gamma^{(k)}_{N,t} = 1$ for all $1 \leq k \leq N$ and all $t \in \bR$. Observe that, for $1 \leq k < N$, the $k$-particle reduced density $\gamma^{(k)}_{N,t}$ does not contain the full information about the $N$-particle system.  Nevertheless, knowledge of $\gamma^{(k)}_{N,t}$ is sufficient to compute the expectation of $k$-particle observables, that is of observables of the form $O^{(k)} \otimes 1^{(N-k)}$ which only act non-trivially on $k$ particles. In fact, 
\[ \left\langle \psi_{N,t} , \left( O^{(k)} \otimes 1^{(N-k)} \right) \psi_{N,t} \right\rangle = \tr\, \gamma_{N,t} \, \left( O^{(k)} \otimes 1^{(N-k)} \right) = \tr \, \gamma^{(k)}_{N,t} O^{(k)} \, . \]
It turns out that the reduced density matrices are the right quantities to 
understand (\ref{eq:fact}). For a large class of potentials $V$, one can show that the reduced density matrices associated with $\psi_{N,t}$ converge, in the limit of large $N$, to the reduced density matrices associated with the factorized wave function $\ph_t^{\otimes N}$. In other words, one can show that, for any fixed $t \in \bR$, 
\begin{equation}\label{eq:diff} \tr \left| \gamma^{(1)}_{N,t} - |\ph_t\rangle \langle \ph_t| \right| \to 0 \end{equation} as $N \to \infty$. Observe that convergence of the one-particle density towards a rank-one projection immediately implies convergence of higher order reduced densities as well; for any fixed $k \in \bN$ and $t\in \bR$, it follows from (\ref{eq:diff}) that $\gamma^{(k)}_{N,t} \to |\ph_t \rangle \langle \ph_t|^{\otimes k}$ as $N \to \infty$ in the trace norm topology (see Remark~\ref{rem} below). 

The convergence (\ref{eq:diff}) has first been established by Spohn in \cite{Sp} for bounded potentials. In \cite{EY}, Erd\"os and Yau extended the techniques of Spohn to prove (\ref{eq:diff}) for potentials with a Coulomb-type singularity $V(x) = \pm 1/ |x|$ (partial results in this direction were also obtained in \cite{BGM}). In \cite{RS}, (\ref{eq:diff}) was established again for potentials with Coulomb singularities. In contrast with the previous results, the bound obtained in \cite{RS} gives an explicit estimate on the rate of the convergence. For factorized initial data, it is shown in \cite{RS} that
\begin{equation}\label{eq:rate} \tr\; \left| \gamma^{(1)}_{N,t} - |\ph_t \rangle \langle \ph_t| \right| \leq \frac{C e^{Kt}}{\sqrt{N}} \end{equation}
for constants $C,K$ depending only on the initial one-particle wave function $\ph$. The approach used in \cite{RS} is based on techniques first introduced by Hepp in \cite{He} and then extended by Ginibre and Velo in \cite{GV} for the study of the related problem of the classical limit of quantum mechanics. More recently, bounds of the form (\ref{eq:rate}) on the rate of convergence of the Schr\"odinger evolution towards the Hartree dynamics, were obtained by Knowles and Pickl in \cite{KP}, for potential with singularities of the form $|x|^{-\alpha}$, for $\alpha < 5/2$ (the bound on the rate of convergence obtained in \cite {KP} deteriorates, compared to (\ref{eq:rate}), for potentials of the form $|x|^{-\alpha}$, with $\alpha > 1$). In \cite{ES,MS}, the convergence (\ref{eq:diff}) was established for particles with a relativistic dispersion (the kinetic energy $-\Delta_{x_j}$ is replaced by $\sqrt{1-\Delta_{x_j}}$, for $j=1,\dots , N$) and with Coulomb type interaction $V(x) = \pm \lambda / |x|$ (this situation is physically interesting because it describes systems of gravitating bosons, so called boson stars, and the related phenomenon of stellar collapse). In order to describe the dynamics of Bose-Einstein condensates, it is interesting to consider, in (\ref{eq:ham-mf}), two-body potentials which scale with the number of particles $N$, and tend to a delta-function in the limit of large $N$. In this regime, the many-body quantum dynamics is approximated by the Gross-Pitaevskii equation; this problem has been studied in \cite{ESY1,ESY2,ESY3,P}. 

In this paper, we extend the techniques developed in \cite{RS}, and we improve the bound (\ref{eq:rate}) on the rate of convergence towards the Hartree dynamics. For interaction potentials with Coulomb type singularities and for factorized initial wave functions, we show that the difference between the reduced one-particle density associated with the solution of the $N$-particle Schr\"odinger equation and the orthogonal projection onto the solution of the Hartree equation (\ref{eq:hartree0}) is at most of the order $1/N$, for any fixed time $t \in \bR$. The $N$-dependence of this bound is expected to be optimal. Note that the same bound on the rate of convergence was obtained in \cite{ErS}, for bounded potentials, and, more recently, in \cite{CL} under the condition that $V \in L^3 (\bR^3) + L^{\infty} (\bR^3)$, which excludes a Coulomb type singularity. The main result of this paper is the following theorem. 

\begin{thm} \label{thm:main}
Suppose that the potential $V (x)$ satisfies the operator inequality
\begin{equation} \label{eq:Vcond}
V (x)^2 \leq D (1-\Delta_x)
\end{equation}
for some constant $D>0$. Let $\ph \in H^1 (\bR^3)$ with $\| \ph \|_2 = 1$, and let $\ph_t$ be the solution of the Hartree equation
\begin{equation}\label{eq:hartree} i\partial_t \ph_t = -\Delta \ph_t + (V*|\ph_t|^2) \ph_t \end{equation} with initial data $\ph_{t=0} = \ph$.  Let $\psi_{N,t} = e^{-iH_N t} \ph^{\otimes N}$ and $\gamma_{N, t}^{(1)}$ be the one-particle reduced density associated with $\psi_{N,t}$, as defined in \eqref{eq:marg}. Then, there exist constants $C$ and $K$, depending only on $\| \varphi \|_{H^1}$ and $D$, such that
\begin{equation} \label{eq:main}
\textrm{Tr } \Big| \gamma_{N, t}^{(1)} - | \varphi_t \rangle \langle \varphi_t | \Big| \leq \frac{C e^{Kt}}{N}.
\end{equation}
\end{thm}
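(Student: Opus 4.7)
My plan is to adapt the coherent-state strategy of Rodnianski and Schlein, pushing the analysis one order further to reach the improved $1/N$ rate. I would work in the bosonic Fock space $\cF=\bigoplus_{n\geq 0}L^2_s(\bR^{3n})$, embed $\psi_{N,t}$ in the $N$-particle sector, and write the Hamiltonian in second-quantized form as $\mathcal{H}_N$. Introducing the Weyl operator $W(\sqrt N\varphi_t)=\exp\{\sqrt N(a^*(\varphi_t)-a(\varphi_t))\}$, which satisfies $W(\sqrt N\varphi_t)^*\,a_x\,W(\sqrt N\varphi_t)=a_x+\sqrt N\,\varphi_t(x)$, I form the fluctuation dynamics
\[ \cU_N(t;s):=W^*(\sqrt N\varphi_t)\,e^{-i\mathcal{H}_N(t-s)}\,W(\sqrt N\varphi_s). \]
A direct computation shows that the linear-in-$(a,a^*)$ terms of its generator $\cL_N(t)$ cancel identically when $\varphi_t$ solves (\ref{eq:hartree}), leaving a Bogoliubov-type quadratic piece $\cL_2(t)$ of order one, a cubic correction $N^{-1/2}\cL_3(t)$ built from $V(x-y)\varphi_t$, and the unscaled quartic $N^{-1}\cL_4=\tfrac{1}{2N}\int V(x-y)\,a_x^*a_y^*a_x a_y$.

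The analytic core is the uniform-in-$N$ moment bound
\[ \bigl\langle \cU_N(t;0)\Omega,(\cN+1)^k\,\cU_N(t;0)\Omega\bigr\rangle \leq C_k\,e^{K_k t}\qquad(k=1,2), \]
proved by Gr\"onwall on $\partial_t\langle\cU_N\Omega,(\cN+1)^k\cU_N\Omega\rangle=i\langle\cU_N\Omega,[\cL_N(t),(\cN+1)^k]\cU_N\Omega\rangle$. Every non-trivial contribution to the commutator is of the form $V(x-y)$ sandwiched between creation and annihilation operators; the operator inequality $V^2\leq D(1-\Delta)$ from (\ref{eq:Vcond}) is exactly what is needed to dominate such terms by $C(\|\varphi_t\|_{H^1},D)\,(\cN+1)^k$, with the Coulomb singularity absorbed into the kinetic energy of one particle after a Cauchy--Schwarz in the integrand. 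Since $\|\varphi_t\|_{H^1}$ remains uniformly bounded along the Hartree flow by standard energy conservation for (\ref{eq:hartree}), Gr\"onwall closes.

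To convert the moment bound into the trace-norm estimate (\ref{eq:main}), I use the Stirling identity
\[ \varphi^{\otimes N}=d_N\,P_N\,W(\sqrt N\varphi)\Omega,\qquad d_N=\sqrt{N!}\,e^{N/2}\,N^{-N/2}\sim (2\pi N)^{1/4}, \]
where $P_N$ projects onto the $N$-sector. Since $[\mathcal{H}_N,P_N]=0$, this propagates to $\psi_{N,t}=d_N P_N W(\sqrt N\varphi_t)\cU_N(t;0)\Omega$. Inserting this into $\gamma^{(1)}_{N,t}(x;y)=N^{-1}\langle\psi_{N,t},a_y^*a_x\psi_{N,t}\rangle$ and shifting $a_x$, $a_y^*$ through the Weyl operator produces $\varphi_t(x)\overline{\varphi_t(y)}$ from the constant terms. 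The potentially dangerous linear-in-$(a,a^*)$ pieces, which would otherwise contribute an $O(N^{-1/2})$ error, are killed by $P_N$ since $a_x$ and $a_y^*$ shift the particle number; the remainder is then a matrix element of products of at most two operators $a,a^*$ on $\cU_N(t;0)\Omega$. Bounding it by Cauchy--Schwarz in terms of the moment estimate above yields a trace-norm control of order $N^{-1}\langle(\cN+1)^2\rangle^{1/2}\leq C\,e^{Kt}/N$.

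The main obstacle is the moment estimate in the Coulomb-singular regime: $[\cL_3,(\cN+1)^k]$ and $[\cL_4,(\cN+1)^k]$ produce several distinct terms with weight $V(x-y)$ between varying numbers of creation/annihilation operators, and to close Gr\"onwall one must repeatedly apply (\ref{eq:Vcond}) after carefully rearranging these operators so that each singular factor is paired with a $(1-\Delta)^{1/2}$ acting either on $\varphi_t$ or on one of the $a,a^*$. A secondary delicate point is tracking the Stirling prefactor $d_N^2\sim\sqrt N$ in the final step: it is precisely compensated by an $N^{-1/2}$ gain coming from the $N$-particle projection acting on the coherent state, and this cancellation is what distinguishes the present $1/N$ bound from the earlier $1/\sqrt N$ result.
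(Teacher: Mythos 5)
Your proposal correctly identifies the Fock-space/coherent-state framework and the Stirling factor $d_N\sim N^{1/4}$, but the central mechanism you invoke for the improvement from $1/\sqrt N$ to $1/N$ does not work, and it is not what the paper does.

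You claim that the linear-in-$(a,a^*)$ contributions to $\gamma^{(1)}_{N,t}$ ``are killed by $P_N$ since $a_x$ and $a_y^*$ shift the particle number.'' This is false. After conjugating $a_y^*a_x$ through the Weyl operator, the linear pieces are matrix elements of the form $d_N^2 N^{-1/2}\langle W(\sqrt N\varphi_t)\cU_N\Omega,\,P_N\,W(\sqrt N\varphi_t)\,a_x\,\cU_N\Omega\rangle$; the projection $P_N$ sits \emph{after} the Weyl operator, which already spreads $\cU_N\Omega$ across all particle sectors, so $P_N$ does not annihilate anything. A Cauchy--Schwarz estimate on such a term gives $\|P_N W\cU\Omega\|\cdot\|P_N W a_x\cU\Omega\|\sim d_N^{-2}$, which, against the prefactor $d_N^2/\sqrt N$, reproduces exactly the old $O(1/\sqrt N)$ bound. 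In other words, with only the $(\cN+1)^k$ moment bounds on $\cU_N$ (which is the content of Proposition 5.1 in the paper and already appeared in Rodnianski--Schlein), you cannot get past $1/\sqrt N$.

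The paper's actual route to $1/N$ rests on two ingredients you do not have. First, it introduces the \emph{quadratic} approximate dynamics $\cU_2(t;s)$ generated by $\cL_2(t)$, and proves (Lemma 8.1) that $\cU_2^*(t;0)\phi(f)\cU_2(t;0)\Omega$ lies exactly in the one-particle sector $\cF^{(1)}$; combined with the algebraic identity $P_1 W^*(\sqrt N\varphi)a^*(\varphi)^N\Omega=0$, this makes the leading linear contribution vanish identically rather than merely being $O(1/\sqrt N)$. The remaining linear contributions carry a factor $\cU-\cU_2$ or $\cU^*-\cU_2^*$, which is genuinely $O(1/\sqrt N)$, yielding the extra smallness. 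Second, controlling $\cU-\cU_2$ forces you to estimate the quartic term $\cL_4=N^{-1}\int \wt V(x-y)a_x^*a_y^*a_ya_x$; the hypothesis $V^2\leq D(1-\Delta)$ alone does not suffice to bound $\cL_4^2$ on the whole Fock space, because one needs to absorb two factors of $V$ but only one kinetic energy per particle is available. This is why the paper regularizes $V\mapsto \wt V$ with cutoff $\alpha_N$ so that $|\wt V|\leq\alpha_N^{-1}$, uses this pointwise bound in the high-particle-number regime, and shows separately (Section 2) that the error introduced by the cutoff is $o(1/N)$. None of this appears in your outline, and without it the argument cannot close.
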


\begin{remark}\label{rem}
The same techniques used to show (\ref{eq:main}) can be extended to prove an analogous bound for higher order reduced densities; for any fixed $k \in \bN$, one can show the existence of constants $C_k ,K_k$ such that
\[ \tr \; \left| \gamma^{(k)}_{N,t} - |\ph_t \rangle \langle \ph_t|^{\otimes k} \right| \leq \frac{C_k e^{K_k t}}{N} \, . \]
Note that, if one is satisfied with a slower rate of convergence for higher order reduced densities, a simple argument, outlined in Section 2 of \cite{KP}, shows that, for any $k \in \bN$, 
\[ \tr \, \big|\gamma_{N,t}^{(k)}-|\ph_t \rangle\langle\ph_t|^{\otimes k} \,\big| \leq 
2 \sqrt{2k \tr \, \big| \gamma_{N,t}^{(1)} - |\ph_t \rangle \langle \ph_t| \big|}  \leq \frac{C k^{1/2} e^{K |t|}}{\sqrt{N}} \, \] for constants $C,K$ independent of $k$. 
\end{remark}

\begin{remark}
With exactly the same techniques used to prove Theorem \ref{thm:main}, one can also consider mean-field Hamiltonians with external potential $V_{\text{ext}} (x)$ acting on the $N$ particles. The requirements on $V_{\text{ext}}$ are minimal (the conditions to make sure that the dynamics exists).
\end{remark}

As in \cite{RS,CL}, the main challenge to prove Theorem \ref{thm:main} consists in controlling the fluctuations around the mean-field dynamics. After second quantization, the evolution of these fluctuations is described by a two parameter group of unitary transformations $\cU (t;s)$ (see (\ref{eq:U})). It turns out that the growth of the fluctuations can be bounded by comparing first $\cU (t;s)$ with a simpler approximate dynamics $\cU_2 (t;s)$ having a quadratic generator $\cL_2 (t)$ (see (\ref{eq:U2}) and (\ref{eq:L2})). A similar approach was already used in \cite{RS}; to get the optimal bound on the fluctuations, however, we need to consider here, similarly to \cite{CL}, a different approximate dynamics. The problem reduces then to estimating the difference between the two evolutions $\cU (t;s)$ and $\cU_2 (t;s)$. While in \cite{CL} this difference was bounded using Strichartz-type estimates (requiring $V \in L^3 (\bR^3) + L^{\infty} (\bR^3)$ and therefore excluding Coulomb-type singularity), in the present paper we make use of an a-priori bound on the growth of the kinetic energy with respect to the approximate dynamics $\cU_2 (t;s)$. It turns out that, for this a-priori bound to be useful, we have to introduce a small, $N$-dependent, cutoff $\alpha_N$ in the interaction $V$; for sufficiently small $\alpha_N$, we show that the error due to the cutoff decays faster than $1/N$ and can therefore be absorbed in the right hand side of~(\ref{eq:main}). 

The paper is organized as follows. First, in Section \ref{sec:reg}, we show that the many body Schr\"odinger evolution with cutoffed potential remains close to the Schr\"odinger evolution with full potential $V$ (and, similarly, that the Hartree dynamics with regularized interaction remains close to the full Hartree dynamics), if the cutoff tends to zero sufficiently fast as $N \to \infty$. Hence, Theorem \ref{thm:main} follows by proving the corresponding bound for the difference between the regularized Schr\"odinger evolution and the regularized Hartree dynamics; this crucial bound is stated in Proposition~\ref{prop:main}. In Section \ref{sec:fock}, we define the bosonic Fock space and we recall some of its properties. In Section \ref{sec:proof}, we reformulate the convergence problem on the Fock space, and we prove Proposition~\ref{prop:main} making use of a series of estimates (in particular, the a-priori bound for the growth of the kinetic energy, which follows by combining Lemma \ref{lm:7} and Lemma \ref{lm:L2}) deferred to Sections \ref{sec:bdN}-\ref{sec:U2}.

\section{Regularization of the interaction}
\label{sec:reg}

For an arbitrary sequence $\alpha_N > 0$, we set  
\begin{equation}\label{eq:wtV}
\widetilde V (x) = \text{sgn}(V (x)) \cdot \min \{ |V (x)|, \alpha_N^{-1} \} 
\end{equation}
where $\text{sgn} (V(x))$ denotes the sign of $V (x)$. We also define the regularized Hamiltonian 
\begin{equation}\label{eq:regham}
\widetilde{H}_N = \sum_{j=1}^N -\Delta_{x_j} + \frac{1}{N} \sum_{i<j}^N \widetilde{V}(x_i -x_j). \end{equation}
Note that, by definition $|\wt{V} (x)| \leq \alpha_N^{-1}$. Moreover, (\ref{eq:Vcond})  implies the operator inequality
\begin{equation}\label{eq:wtVcond} \wt{V}^2 (x) \leq D (1-\Delta_x) \, .\end{equation}

Instead of proving directly Theorem \ref{thm:main}, we show that it is enough to prove the corresponding statement for the dynamics generated by the regularized Hamiltonian (\ref{eq:regham}), if $\alpha_N$ converges to zero sufficiently fast. First, we bound the difference between the evolution of the initial $N$-particle wave function $\psi_N$ w.r.t.  $H_N$ and w.r.t. the regularized Hamiltonian $\wt{H}_N$. 
\begin{lem}\label{lm:compare1}
Let $\psi_N = \ph^{\otimes N}$ for some $\ph \in H^1 (\bR^3)$ with $\| \ph \| = 1$. Let $\psi_{N,t} = e^{-iH_N t} \psi_N$ and $\wt{\psi}_{N,t} = e^{-i\widetilde{H}_N t} \psi_N$. Then there exists a universal constant $C>0$ such that 
\begin{equation}
\left\| \psi_{N,t} - \wt{\psi}_{N,t} \right\|^2 \leq C N \alpha_N \, |t| \,
\end{equation}
for all $N \in \bN$, $t \in \bR$.
\end{lem}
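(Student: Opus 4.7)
The plan is to differentiate $\|\psi_{N,t}-\wt\psi_{N,t}\|^2$ in $t$, bound its derivative pointwise by a constant multiple of $N\alpha_N$, and integrate. Since $\psi_{N,t}$ and $\wt\psi_{N,t}$ share the initial datum $\psi_N=\ph^{\otimes N}$ but evolve under the self-adjoint Hamiltonians $H_N$ and $\wt H_N$, a direct computation gives
$$\frac{d}{dt}\|\psi_{N,t}-\wt\psi_{N,t}\|^2 \;=\; 2\,\mathrm{Im}\,\langle \psi_{N,t},(H_N-\wt H_N)\wt\psi_{N,t}\rangle.$$
Writing $H_N-\wt H_N=\tfrac{1}{N}\sum_{i<j}(V-\wt V)(x_i-x_j)$ and exploiting bosonic symmetry of both $\psi_{N,t}$ and $\wt\psi_{N,t}$ to collapse the $\binom{N}{2}$ pair terms to a single representative, I would reduce the problem to showing
$$\bigl|\langle \psi_{N,t},(V-\wt V)(x_1-x_2)\wt\psi_{N,t}\rangle\bigr|\leq C\alpha_N$$
uniformly in $t$; the factor of $(N-1)$ produced by symmetrization would then yield $|\tfrac{d}{dt}\|\psi_{N,t}-\wt\psi_{N,t}\|^2|\leq CN\alpha_N$ and, after integration, the lemma. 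Crucially, using $\|\psi_{N,t}\|=1$ (rather than the cruder factor $\|\psi_{N,t}-\wt\psi_{N,t}\|$) is what will keep the bound linear in $|t|$.

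The smallness in $\alpha_N$ is to be extracted from the pointwise inequality $|V-\wt V|(x)\leq \alpha_N V(x)^2$, which holds because $(V-\wt V)(x)$ vanishes outside $\{|V(x)|\geq \alpha_N^{-1}\}$, and on that set $\alpha_N V(x)^2\geq |V(x)|\geq |V-\wt V|(x)$. Combining this with Cauchy--Schwarz I obtain
$$\bigl|\langle \psi_{N,t},(V-\wt V)(x_1-x_2)\wt\psi_{N,t}\rangle\bigr|\leq \alpha_N\bigl(\langle \psi_{N,t},V^2(x_1-x_2)\psi_{N,t}\rangle\,\langle \wt\psi_{N,t},V^2(x_1-x_2)\wt\psi_{N,t}\rangle\bigr)^{1/2},$$
and then invoke the hypothesis $V^2\leq D(1-\Delta)$, applied in the $x_1$-variable with $x_2,\ldots,x_N$ frozen and integrated out, to conclude $\langle \psi_{N,t},V^2(x_1-x_2)\psi_{N,t}\rangle\leq D(1+\langle \psi_{N,t},-\Delta_{x_1}\psi_{N,t}\rangle)$, and similarly for $\wt\psi_{N,t}$ (via $\wt V^2\leq V^2\leq D(1-\Delta)$).

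What remains is an \emph{a priori} bound on the per-particle kinetic energy $\langle \psi_{N,t},-\Delta_{x_1}\psi_{N,t}\rangle$ (and its analogue for $\wt\psi_{N,t}$) that is uniform in $N$ and $t$. By bosonic symmetry this quantity equals
$$\tfrac{1}{N}\langle \psi_{N,t},H_N\psi_{N,t}\rangle-\tfrac{N-1}{2N}\langle \psi_{N,t},V(x_1-x_2)\psi_{N,t}\rangle,$$
and the first summand equals $\tfrac{1}{N}\langle \ph^{\otimes N},H_N\ph^{\otimes N}\rangle$ by energy conservation, which is of order one because $\ph\in H^1$ and $V^2\leq D(1-\Delta)$ make the pair-interaction energy of the product state at most of order $N$. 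The second summand is controlled via Cauchy--Schwarz and $V^2\leq D(1-\Delta)$ by a quantity of the form $CD^{1/2}(1+\langle \psi_{N,t},-\Delta_{x_1}\psi_{N,t}\rangle)^{1/2}$, and a short bootstrap will close the estimate. The same argument, with $H_N$ replaced by $\wt H_N$, handles $\wt\psi_{N,t}$.

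The main obstacle is this kinetic-energy bootstrap; the rest is algebraic. The conceptual heart of the argument is the operator inequality $|V-\wt V|\leq \alpha_N V^2$, which converts the bare fact that the cutoff is small into the quantitative estimate with the sharp $\alpha_N$-dependence needed to reach the linear-in-$|t|$ bound stated in the lemma.
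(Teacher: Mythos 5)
Your proposal is correct and follows essentially the same route as the paper: differentiate $\|\psi_{N,t}-\wt\psi_{N,t}\|^2$, exploit $|V-\wt V|\leq\alpha_N V^2$ together with Cauchy--Schwarz and $V^2\leq D(1-\Delta)$, and close with a uniform-in-$(N,t)$ bound on the per-particle kinetic energy obtained from energy conservation. The only cosmetic difference is that the paper establishes the kinetic-energy control via the operator inequality $N(1-\Delta_{x_1})\leq C(H_N+N)$ evaluated on the conserved energy at $t=0$, whereas you phrase it as a self-consistent bootstrap $x\leq a+b\sqrt{1+x}$ for $x=\langle\psi_{N,t},-\Delta_{x_1}\psi_{N,t}\rangle$; both rest on the same estimate of the pair potential by kinetic energy and yield the same conclusion.
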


\begin{proof}
We consider the derivative
\begin{equation}\label{eq:ddtpsi} \begin{split}
\frac{d}{dt} \left\| \psi_{N,t} - \wt{\psi}_{N,t} \right\|^2 &= -2 \text{Re } \frac{d}{dt} \, \langle \psi_{N,t} , \wt{\psi}_{N,t} \rangle \\
&= 2 \, \text{Im } \langle (H_N - \widetilde{H}_N) \psi_{N,t} , \wt{\psi}_{N,t} \rangle \\
&= \frac{2}{N} \sum_{i<j}^N \text{Im } \left\langle \left(V(x_i - x_j) - \widetilde{V}(x_i -x_j) \right) \psi_{N,t}, \wt{\psi}_{N,t} \right\rangle \, .
\end{split} \end{equation}
Observe that the definition (\ref{eq:wtV}) of $\widetilde{V}$ implies that 
\begin{equation}\label{eq:V-V}
|V-\widetilde{V}| \leq |V| \cdot \mathbf{1} (|V| \geq \alpha^{-1}_N) \leq |V|^2 \alpha_N.
\end{equation}
Hence, from (\ref{eq:ddtpsi}), we obtain (using also the assumption (\ref{eq:Vcond})) 
\begin{equation} \label{eq:ddtpsi2} \begin{split}
\left| \frac{d}{dt} \left\| \psi_{N,t} - \wt{\psi}_{N,t} \right\|^2 \right|  \leq &C N \left|\left\langle  \left(V(x_1 - x_2) - \widetilde{V}(x_1 -x_2) \right) \psi_{N,t} , \wt{\psi}_{N,t} \right\rangle \right|\\
\leq & C N \alpha_N \left\langle \psi_{N,t} , (1-\Delta_{x_1}) \psi_{N,t} \right\rangle^{1/2} \left\langle \wt{\psi}_{N,t} , (1-\Delta_{x_1}) \wt{\psi}_{N,t} \right\rangle^{1/2} \,.
\end{split}
\end{equation}
Next, we note that, again from (\ref{eq:Vcond}),  
\[ \begin{split} 
N \langle \psi_{N,t} , (1- \Delta_{x_1}) \psi_{N,t} \rangle &\leq C \langle \psi_{N,t} , (H_N + N) \psi_{N,t} \rangle  \\ &\leq C \langle \ph^{\otimes N} , (H_N + N) \ph^{\otimes N} \rangle \\ &\leq C N \| \ph \|^2_{H^1} \,. \end{split} \]
Similarly, from (\ref{eq:wtVcond}), 
\[ N \langle \wt{\psi}_{N,t} , (1- \Delta_{x_1}) \wt{\psi}_{N,t} \rangle \leq N \| \ph \|^2_{H^1} \,.  \]
Therefore (\ref{eq:ddtpsi2}) implies that 
\[ \left| \frac{d}{dt} \left\| \psi_{N,t} - \wt{\psi}_{N,t} \right\|^2 \right| \leq CN  \alpha_N \, . \]
The lemma follows after integrating over $t$.
\end{proof}

As a consequence, we obtain a bound on the difference between the marginal densities associated with $\psi_{N,t}$ and $\wt{\psi}_{N,t}$. 
\begin{cor}\label{prop:gm-wtgm}
For any $k \in \bN$, let $\gamma^{(k)}_{N,t}$ and $\wt{\gamma}^{(k)}_{N,t}$ be the $k$-particle reduced densities associated with $\psi_{N,t} = e^{-iH_N t} \ph^{\otimes N}$ and $\wt{\psi}_{N,t} = e^{-i \wt{H}_N t} \ph^{\otimes N}$. Suppose $\alpha_N \leq N^{-3}$ in the definition (\ref{eq:wtV}). Then there exists a constant $C>0$, independent of $k$,  such that 
\[ \tr \, \left| \gamma^{(k)}_{N,t} - \wt{\gamma}^{(k)}_{N,t} \right| \leq  \frac{C \, |t|^{1/2}}{N} \, . \] 
\end{cor}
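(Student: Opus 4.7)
The plan is to deduce the corollary directly from Lemma \ref{lm:compare1} via the standard observation that trace-norm differences of reduced densities are controlled by $L^2$-differences of the underlying wave functions. Concretely, I would argue in three short steps.

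First, for any two normalized $N$-particle wave functions $\psi_1, \psi_2$, the rank-at-most-two operator $P_1 - P_2 := |\psi_1\rangle\langle\psi_1| - |\psi_2\rangle\langle\psi_2|$ satisfies the elementary bound
\[
\tr \, |P_1 - P_2| \;\le\; 2\, \| \psi_1 - \psi_2 \|,
\]
which one obtains by diagonalizing $P_1 - P_2$ on the two-dimensional span of $\psi_1, \psi_2$ and using $\tr|P_1 - P_2| = 2\sqrt{1 - |\langle \psi_1,\psi_2\rangle|^2}$ together with $\|\psi_1 - \psi_2\|^2 = 2 - 2\,\mathrm{Re}\,\langle \psi_1,\psi_2\rangle$. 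Second, partial tracing is a contraction in the trace norm, so taking $\tr_{k+1,\dots,N}$ yields
\[
\tr\,\bigl|\gamma^{(k)}_{N,t} - \wt\gamma^{(k)}_{N,t}\bigr| \;\le\; \tr\,\bigl|\,|\psi_{N,t}\rangle\langle\psi_{N,t}| - |\wt\psi_{N,t}\rangle\langle\wt\psi_{N,t}|\,\bigr| \;\le\; 2\,\|\psi_{N,t} - \wt\psi_{N,t}\|,
\]
with the crucial feature that the right-hand side does not depend on $k$.

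Third, I would invoke Lemma \ref{lm:compare1} to conclude $\|\psi_{N,t} - \wt\psi_{N,t}\| \le C\, (N \alpha_N |t|)^{1/2}$, and then insert the assumption $\alpha_N \le N^{-3}$ to obtain
\[
\tr\,\bigl|\gamma^{(k)}_{N,t} - \wt\gamma^{(k)}_{N,t}\bigr| \;\le\; 2C\, \bigl(N \cdot N^{-3}\, |t|\bigr)^{1/2} \;=\; \frac{2C\, |t|^{1/2}}{N},
\]
which is the claimed estimate with a constant independent of $k$.

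There is essentially no serious obstacle here: the only nontrivial input is Lemma \ref{lm:compare1} itself, and the reduction from wave-function distance to reduced-density distance is a classical one-line fact about partial traces of rank-one projections. The only thing worth being careful about is not to lose a factor of $N$ somewhere: it is precisely the choice $\alpha_N \le N^{-3}$ which converts the $\sqrt{N\alpha_N}$ coming from the energy estimate in Lemma \ref{lm:compare1} into the desired $1/N$, and this is consistent with the heuristic that $\alpha_N$ must decay faster than $1/N$ for the regularization error to be negligible compared with the $1/N$ rate of Theorem \ref{thm:main}.
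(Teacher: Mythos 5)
Your proof is correct and follows essentially the same path as the paper's: both arguments reduce the trace-norm difference of the reduced densities to the bound $2\,\|\psi_{N,t} - \wt{\psi}_{N,t}\|$ and then invoke Lemma~\ref{lm:compare1} together with the choice $\alpha_N \le N^{-3}$. The only difference is cosmetic in the intermediate step: you diagonalize the rank-two operator $|\psi_{N,t}\rangle\langle\psi_{N,t}| - |\wt\psi_{N,t}\rangle\langle\wt\psi_{N,t}|$ and appeal to contractivity of the partial trace, whereas the paper dualizes against bounded $k$-particle observables $O^{(k)}$ and expands the expectation values directly.
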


\begin{proof}
We have
\begin{equation}
\tr \left| \gamma^{(k)}_{N,t} - \wt{\gamma}^{(k)}_{N,t} \right| = \sup_{\| O^{(k)} \| \leq 1} \left| \tr \, O^{(k)} \left( \gamma^{(k)}_{N,t} - \wt{\gamma}^{(k)}_{N,t} \right) \right|
\end{equation}
where the supremum is taken over all compact operators $O^{(k)}$ over $L^2 (\bR^{3k},  dx_1 \dots dx_k)$, with operator norm $\| O^{(k)} \| \leq 1$. Observe that
\begin{equation} \begin{split}
\tr \, O^{(k)} \left( \gamma^{(k)}_{N,t} - \wt{\gamma}^{(k)}_{N,t} \right) &= \langle \psi_{N,t}, (O^{(k)} \otimes 1) \, \psi_{N,t} \rangle - \langle \wt{\psi}_{N,t} , (O^{(k)} \otimes 1) \, \wt{\psi}_{N,t} \rangle \\
&= \langle (\psi_{N,t} - \wt{\psi}_{N,t}), (O^{(k)} \otimes 1) \psi_{N,t} \rangle + \langle \wt{\psi}_{N,t} , (O^{(k)} \otimes 1) \, (\psi_{N,t} - \wt{\psi}_{N,t} ) \rangle \, . 
\end{split} \end{equation}
Taking absolute value, we find
\begin{equation}
\left| \tr \, O^{(k)} \left( \gamma^{(k)}_{N,t} - \wt{\gamma}^{(k)}_{N,t} \right) \right| \leq 2 \| \psi_{N,t} - \wt{\psi}_{N,t} \|
\end{equation}
for all observables $O^{(k)}$ with $\| O^{(k)} \| \leq 1$.
The corollary follows then from Lemma \ref{lm:compare1}.
\end{proof}

Finally, we estimate the distance between the solutions of the nonlinear Hartree equations with the full potential $V$ and with the regularized potential $\wt{V}$. 
\begin{lem} \label{lm:comp-h}
Let $\ph \in H^1 (\bR^3)$. Let $\ph_t$ be the solution of the Hartree equation (\ref{eq:hartree}) and $\wt{\ph}_t$ the solution of the Hartree equation
\begin{equation}\label{eq:hartree-reg} i\partial_t \wt{\ph}_t = -\Delta \wt{\ph}_t + (\wt{V} * |\wt{\ph}_t|^2) \wt{\ph}_t \end{equation}
with regularized potential $\wt{V}$, with $\ph_{t=0} = \wt{\ph}_{t=0} = \ph$. Then
\begin{equation}\label{eq:ph-wtph} \| \ph_t - \wt{\ph}_t \| \leq C \alpha_N e^{K |t|} \,. \end{equation}
Therefore \begin{equation}\label{eq:gm-wtgm} \tr \; \left| |\ph_t \rangle \langle \ph_t |^{\otimes k} - |\wt{\ph}_t \rangle \langle \wt{\ph}_t |^{\otimes k} \right| \leq 2 k \, \| \ph_t - \wt{\ph}_t \| \leq C k \alpha_N e^{K |t|} \end{equation} for any $k \in \bN$.
\end{lem}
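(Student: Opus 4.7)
The plan is to prove \eqref{eq:ph-wtph} by a standard energy-type argument applied to $w_t := \ph_t - \wt\ph_t$, and then to deduce \eqref{eq:gm-wtgm} by a telescoping estimate on tensor products of rank-one projections. Both Hartree flows preserve the $L^2$ norm, so $\|\ph_t\| = \|\wt\ph_t\| = 1$ for all $t$; they also conserve the Hartree energy, which, using the operator bound $V^2 \leq D(1-\Delta)$ (and likewise $\wt V^2 \leq D(1-\Delta)$), yields an a-priori uniform-in-$t$ bound on $\|\ph_t\|_{H^1}$ and $\|\wt\ph_t\|_{H^1}$ in terms of $\|\ph\|_{H^1}$ and $D$. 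I will use this $H^1$ bound freely in what follows.

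Subtracting \eqref{eq:hartree} and \eqref{eq:hartree-reg} gives
\begin{equation*}
i\partial_t w_t = -\Delta w_t + \bigl((V-\wt V)*|\ph_t|^2\bigr)\ph_t + \bigl(\wt V*(|\ph_t|^2-|\wt\ph_t|^2)\bigr)\ph_t + \bigl(\wt V*|\wt\ph_t|^2\bigr) w_t .
\end{equation*}
Taking the inner product with $w_t$ and then the imaginary part, the kinetic term and the last term (a real multiplication operator acting on $w_t$) drop out, so that
\begin{equation*}
\tfrac{1}{2}\tfrac{d}{dt}\|w_t\|^2 \leq \|w_t\|\Bigl( \bigl\|[(V-\wt V)*|\ph_t|^2]\,\ph_t\bigr\|_2 + \bigl\|[\wt V*(|\ph_t|^2-|\wt\ph_t|^2)]\,\ph_t\bigr\|_2 \Bigr).
\end{equation*}
The two terms on the right are the ones to estimate. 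For the first I bound $|V-\wt V| \leq \alpha_N V^2$ pointwise (by \eqref{eq:V-V}) and then apply the operator inequality in translated form: for every $x\in\bR^3$,
\begin{equation*}
\bigl((V-\wt V)*|\ph_t|^2\bigr)(x) \leq \alpha_N \int V(x-y)^2 |\ph_t(y)|^2\,dy \leq \alpha_N D \|\ph_t\|_{H^1}^2,
\end{equation*}
which gives the desired uniform bound and hence an $O(\alpha_N)$ estimate on the first term. For the second term I write $|\ph_t|^2 - |\wt\ph_t|^2 = w_t\bar\ph_t + \wt\ph_t\bar w_t$, apply Cauchy--Schwarz in $y$ to each piece, and use $\wt V^2 \leq V^2 \leq D(1-\Delta)$, obtaining
\begin{equation*}
\bigl\|\wt V*(|\ph_t|^2-|\wt\ph_t|^2)\bigr\|_\infty \leq \sqrt D\,\bigl(\|\ph_t\|_{H^1}+\|\wt\ph_t\|_{H^1}\bigr)\|w_t\|_2 ,
\end{equation*}
which together with $\|\ph_t\|_2 = 1$ yields an $O(\|w_t\|)$ bound on the second term. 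Combining these and dividing by $\|w_t\|$ gives a differential inequality of the form $\tfrac{d}{dt}\|w_t\| \leq C\alpha_N + K \|w_t\|$ with $w_0 = 0$, and Gronwall immediately delivers \eqref{eq:ph-wtph}.

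For \eqref{eq:gm-wtgm}, I use the telescoping identity
\begin{equation*}
|\ph_t\rangle\langle\ph_t|^{\otimes k} - |\wt\ph_t\rangle\langle\wt\ph_t|^{\otimes k} = \sum_{j=1}^{k} |\wt\ph_t\rangle\langle\wt\ph_t|^{\otimes(j-1)} \otimes \bigl(|\ph_t\rangle\langle\ph_t| - |\wt\ph_t\rangle\langle\wt\ph_t|\bigr) \otimes |\ph_t\rangle\langle\ph_t|^{\otimes(k-j)} ,
\end{equation*}
take trace norms, and apply the elementary estimate $\bigl\| |u\rangle\langle u| - |v\rangle\langle v| \bigr\|_{\mathrm{tr}} \leq 2\|u-v\|$ for unit vectors $u,v$. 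This gives the first inequality in \eqref{eq:gm-wtgm}, and substituting \eqref{eq:ph-wtph} finishes the proof. The only genuinely delicate step is the $L^\infty$ bound on $\wt V*(|\ph_t|^2-|\wt\ph_t|^2)$, where one cannot afford the cheap bound $\|\wt V\|_\infty \leq \alpha_N^{-1}$ (it would swamp the gain from \eqref{eq:ph-wtph}); circumventing this by squaring inside the convolution and invoking the operator inequality is the key idea.
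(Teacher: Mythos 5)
Your proof is correct and gives the claimed rate in \eqref{eq:ph-wtph}. Conceptually it is the same Gronwall-type energy argument as the paper's, but the bookkeeping differs in a way that matters. The paper reduces $\tfrac{d}{dt}\|\ph_t-\wt\ph_t\|^2$ to the cross term $-2\,\mathrm{Re}\,\tfrac{d}{dt}\langle\ph_t,\wt\ph_t\rangle$ via conservation of the $L^2$ norm, splits the resulting expression using the cancellation $\mathrm{Im}\,\langle\ph_t,[\wt V*(|\ph_t|^2-|\wt\ph_t|^2)]\ph_t\rangle=0$, and arrives at $\bigl|\tfrac{d}{dt}\|w_t\|^2\bigr|\le C\alpha_N + C\|w_t\|^2$; Gronwall then yields $\|w_t\|^2 \le C\alpha_N(e^{C|t|}-1)$, that is only $\|w_t\|=O(\sqrt{\alpha_N}\,e^{K|t|})$, a square root short of the stated bound. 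Your version instead differentiates the equation satisfied by $w_t$ directly and exploits the reality of $\langle w_t,-\Delta w_t\rangle$ and of the multiplier $\wt V*|\wt\ph_t|^2$, so that the inhomogeneous $O(\alpha_N)$ contribution appears already multiplied by $\|w_t\|$: $\tfrac12\tfrac{d}{dt}\|w_t\|^2 \le \|w_t\|\bigl(C\alpha_N+C\|w_t\|\bigr)$. Gronwall (after Young's inequality, or by dividing through by $\|w_t\|$) then gives $\|w_t\|\le C\alpha_N e^{K|t|}$, matching the lemma. This is a genuine improvement over the paper's own proof as written (harmless downstream, since the paper eventually takes $\alpha_N\le N^{-3}$, but a real mismatch between the published proof and the statement). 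The estimate of $\|\wt V*(|\ph_t|^2-|\wt\ph_t|^2)\|_\infty$ via the symmetric decomposition $|\ph_t|^2-|\wt\ph_t|^2=w_t\,\overline{\ph}_t+\wt\ph_t\,\overline{w}_t$, Cauchy--Schwarz, and $\wt V^2\le D(1-\Delta)$ is exactly the paper's mechanism, and the tensor-product telescoping in the second part differs from the paper's only by re-indexing $j\mapsto k+1-j$. Two minor points: the first convolution bound should carry absolute values, $\bigl|\bigl((V-\wt V)*|\ph_t|^2\bigr)(x)\bigr|\le\alpha_N\int V(x-y)^2|\ph_t(y)|^2\,dy$, since $V-\wt V$ need not have a sign; and the formal division by $\|w_t\|$ at $w_0=0$ is most cleanly avoided by applying Young's inequality $\alpha_N\|w_t\|\le\tfrac12(\alpha_N^2+\|w_t\|^2)$ and running Gronwall on $\|w_t\|^2$.
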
 

\begin{proof}
{F}rom (\ref{eq:Vcond}) and (\ref{eq:wtVcond}) it is easy to check that $\| \ph_t \|_{H^1}, \| \wt{\ph}_t \|_{H^1} \leq C$, for a constant $C$ which only depends on $\| \ph \|_{H^1}$.  We compute
\begin{equation} \begin{split}
\frac{d}{dt} \left\| \ph_t - \wt{\ph}_t \right\|^2 = \; &2 \, \text{Im } \left\langle \ph_t, \left[ V* |\ph_t|^2 - \widetilde{V}* |\wt{\ph}_t|^2 \right] \wt{\ph}_t \right\rangle \\
= \; &2 \, \text{Im } \left\langle \ph_t, \left[\left( V - \widetilde{V}\right)* |\ph_t|^2 \right]  \wt{\ph}_t \right\rangle \\
&+2  \, \text{Im }
 \left\langle \ph_t,  \left[ \widetilde{V}* \left( |\ph_t|^2 - |\wt{\ph}_t|^2\right) \right] (\wt{\ph}_t - \ph_t) \right\rangle
\end{split}\end{equation}
where, in the last line, we used the fact that
\begin{equation}
\text{Im } \left\langle \ph_t, \left( \widetilde{V}* \left( |\ph_t|^2 - |\wt{\ph}_t|^2\right) \right) \ph_t  \right\rangle = 0 \,.
\end{equation}

Using (\ref{eq:V-V}) we find, taking the absolute value,
\begin{equation}\label{eq:gr1} \begin{split}
\left| \frac{d}{dt} \left\| \ph_t - \wt{\ph}_t \right\|^2 \right| \leq  \; & 2  \alpha_N \| \ph_t \| \| \wt{\ph}_t \| \, \sup_x \int dy V^2(x-y)\, |\ph_t (y)|^2 \\ & + 2  \, \| \ph_t - \wt{\ph}_t \| \, \| \ph_t\| \, 
\sup_x \int dy \, |\wt V(x-y)| \, | \ph_t (y) - \wt\ph_t (y)| ( |\ph_t (y)| + |\wt{\ph}_t (y)|) \\
\leq  \; & C \alpha_N  + C \, \| \ph_t - \wt{\ph}_t \|^2 \, . 
\end{split}\end{equation}
In the last inequality we used that, from (\ref{eq:wtVcond}), 
\begin{equation} \begin{split}
\int dy \, |\wt V(x-y)| \, & | \ph_t (y) - \wt\ph_t (y)| \, ( |\ph_t (y)| + |\wt{\ph}_t (y)|) \\
& \leq \, \left( \int dy \,  | \ph_t (y) - \wt\ph_t (y)|^2 \right)^{1/2} \left( \int dy \wt V^2(x-y) (|\ph_t (y)| + |\wt{\ph}_t (y)|)^2 \right)^{1/2}  \\
& \leq C \| \ph_t - \wt\ph_t \| \, (\| \ph_t \|_{H^1} + \| \wt\ph_t \|_{H^1}) \, . 
\end{split}\end{equation}
From \eqref{eq:gr1} we obtain (by Gronwall)
\begin{equation}
\left\| \ph_t - \wt{\ph}_t \right\|^2 \leq C \alpha_N (e^{C |t|} - 1)
\end{equation}
which concludes the proof of \eqref{eq:ph-wtph}. To show \eqref{eq:gm-wtgm}, we write
\begin{equation}
|\ph_t \rangle \langle \ph_t|^{\otimes k} - |\wt{\ph}_t \rangle \langle \wt{\ph}_t|^{\otimes k} = \sum_{j=1}^k |\ph_t \rangle \langle \ph_t|^{\otimes (j-1)} \otimes \left( |\ph_t \rangle \langle \ph_t | - |\wt{\ph}_t \rangle \langle \wt{\ph}_t| \right) \otimes |\wt{\ph}_t \rangle \langle \wt{\ph}_t|^{\otimes (k -j)}
\end{equation}
and we use the fact that
\begin{equation}
\tr \, \Big| |\ph_t \rangle \langle \ph_t | - |\wt{\ph}_t \rangle \langle \wt{\ph}_t| \Big| \leq
2 \| \ph_t - \wt\ph_t \| \,.
\end{equation}
\end{proof}
As a consequence of Corollary \ref{prop:gm-wtgm} and of Lemma \ref{lm:comp-h}, Theorem \ref{thm:main} follows from the next proposition, which only involves regularized dynamics (regularized $N$-particle evolution and regularized Hartree dynamics).
\begin{prop} \label{prop:main}
Let $\wt{V}$ be as in (\ref{eq:wtV}) with some $\alpha_N \geq N^{-r}$ for some $r \in \bN$. Let $\ph \in H^1 (\bR^3)$, $\wt \gamma_{N, t}^{(1)}$ the one-particle reduced density associated with $e^{-i \wt{H}_N t} \ph^{\otimes N}$ and $\wt{\ph}_t$ the solution of the regularized Hartree equation \begin{equation}\label{eq:hartreereg} i\partial_t \wt{\ph}_t = -\Delta \wt{\ph}_t + (\wt{V} * |\wt{\ph}_t|^2) \wt{\ph}_t   \end{equation} with initial data $\ph_{t=0} = \ph$. Then, there exist constants $C$ and $K$ such that
\begin{equation} \label{trace norm bound}
\textrm{Tr } \Big| \wt \gamma_{N, t}^{(1)} - | \wt\varphi_t \rangle \langle \wt \varphi_t | \Big| \leq \frac{C e^{Kt}}{N}.
\end{equation}
\end{prop}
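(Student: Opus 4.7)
The plan is to reformulate the problem on the bosonic Fock space $\F$ introduced in Section~\ref{sec:fock}, following the approach of \cite{RS,CL}. One introduces a two-parameter fluctuation dynamics $\cU(t;s)$, built by conjugating the many-body propagator $e^{-i\wt H_N (t-s)}$ with Weyl operators $W(\sqrt N\,\wt\ph_s)$ (and $W^*(\sqrt N\,\wt\ph_t)$) that shift by the coherent field $\sqrt N\,\wt\ph_s$ at the relevant time. This dynamics captures the deviation of the true evolution from the classical trajectory $\wt\ph_t$. A standard argument reduces the trace-norm estimate \eqref{trace norm bound} to an a-priori bound on the number of fluctuations of the form
\[
\langle \cU(t;0)\xi,\,\cN\,\cU(t;0)\xi\rangle \;\leq\; C\, e^{Kt},
\]
uniform in $N$, for a suitable initial vector $\xi$ essentially supported on the vacuum sector; the $1/N$ factor in \eqref{trace norm bound} then appears because the one-particle density of a nearly-coherent state deviates from $|\wt\ph_t\rangle\langle\wt\ph_t|$ by a trace-norm error of order $\langle\cN\rangle/N$.

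To obtain this $\cN$-bound uniformly in $N$, I would compare $\cU(t;s)$ with the quadratic approximate dynamics $\cU_2(t;s)$ of \eqref{eq:U2}--\eqref{eq:L2}. The full generator splits schematically as $\cL(t) = \cL_0(t) + \cL_2(t) + N^{-1/2}\cL_3(t) + N^{-1}\cL_4(t)$, the cubic and quartic remainders $\cL_3,\cL_4$ carrying explicit small factors. The Bogoliubov-type character of $\cU_2$ gives by itself a bound $\langle\cU_2(s;0)\xi,\cN\,\cU_2(s;0)\xi\rangle \leq C e^{Ks}$ uniformly in $N$, so the remaining task is to estimate
\[
(\cU-\cU_2)(t;0)\xi \;=\; -i\int_0^t \cU(t;s)\,\big(\cL(s)-\cL_2(s)\big)\,\cU_2(s;0)\xi\,ds
\]
in Fock-space norm and show that $\|(\cU-\cU_2)(t;0)\xi\|^2 \leq C e^{Kt}/N$.

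The main obstacle is bounding the matrix elements of $\cL_3$ and $\cL_4$ on $\cU_2(s;0)\xi$. Both contain the regularized potential $\wt V$ coupled to creation and annihilation operators; using $\wt V^2\leq D(1-\Delta)$ from \eqref{eq:wtVcond} together with Cauchy--Schwarz, one controls them by the second-quantized kinetic energy $\langle\cU_2(s;0)\xi,\,d\Gamma(1-\Delta)\,\cU_2(s;0)\xi\rangle$. The delicate point, and the sole reason for introducing the cutoff \eqref{eq:wtV}, is that $\cL_2$ does not commute with $d\Gamma(-\Delta)$, and the commutator is driven by $\wt V$ itself, which is only bounded by $\alpha_N^{-1}$. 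Propagating a kinetic-energy bound through $\cU_2$ therefore costs a factor polynomial in $\alpha_N^{-1}=N^r$; this is what the a-priori estimate announced in the introduction, obtained by combining Lemma~\ref{lm:7} with Lemma~\ref{lm:L2}, is designed to provide. Since the Duhamel integrand already carries the explicit $N^{-1/2}$ factor from $\cL_3$, choosing $r$ sufficiently small within the admissible window $N^{-r}\leq\alpha_N\leq N^{-3}$ (dictated by the hypothesis of the proposition and by Corollary~\ref{prop:gm-wtgm}) absorbs the polynomial loss in $\alpha_N^{-1}$ and yields the required $C e^{Kt}/N$ bound, hence \eqref{trace norm bound}.
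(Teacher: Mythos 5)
The structural ingredients you list (Fock space, Weyl operators, the fluctuation dynamics $\cU(t;s)$, the quadratic approximation $\cU_2(t;s)$, the kinetic-energy a-priori bound via Lemmas~\ref{lm:7} and~\ref{lm:L2}) are all the right ones, but the logical architecture is off in a way that would lose the $1/N$ rate and land you back at $1/\sqrt{N}$.

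The central problem is your first reduction: you claim that a uniform bound $\langle \cU(t;0)\xi,\cN\,\cU(t;0)\xi\rangle\le Ce^{Kt}$ suffices to deduce a trace-norm error of order $\langle\cN\rangle/N$. That implication is valid for \emph{coherent} initial data, but here the initial datum is the factorized vector $\ph^{\otimes N}$, which one writes as $d_N P_N W(\sqrt{N}\ph)\Omega$ with $d_N\sim N^{1/4}$. Expanding $\wt\gamma^{(1)}_{N,t}-|\wt\ph_t\rangle\langle\wt\ph_t|$ as in \eqref{eq:diff1} produces two pieces: a $d\Gamma(J)$ piece with prefactor $d_N/N$, which is indeed $O(1/N)$ after Lemma~\ref{lm:2}, and a linear $\phi(J\wt\ph_t)$ piece with prefactor $d_N/\sqrt{N}$. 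Estimating the latter with Lemma~\ref{lm:2} and the $\cN$-bounds alone gives $O(1/\sqrt{N})$ --- this is exactly the bound of \cite{RS}, and the number-of-particles estimate for $\cU$ (Proposition~\ref{lm:1}) was already available there. Your proposal as written does not improve on it, because the quantity you propose to control was never the bottleneck.

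The ingredient you are missing is a cancellation, not an a-priori bound. The $\phi(J\wt\ph_t)$ term is split by inserting $\cU_2$. The leading piece $\langle \cdots, W(\sqrt{N}\ph)\,\cU_2^*(t;0)\phi(J\wt\ph_t)\cU_2(t;0)\Omega\rangle$ vanishes \emph{identically}: Lemma~\ref{lm:3} shows $\cU_2^*(t;0)\phi(f)\cU_2(t;0)\Omega$ lies entirely in the one-particle sector $\cF^{(1)}$, while an explicit computation gives $P_1 W^*(\sqrt{N}\ph)\,a^*(\ph)^N\Omega=0$. Only the remainder terms involving $\cU-\cU_2$ survive, and Proposition~\ref{lm:4} and Proposition~\ref{prop:comp2} show that this difference contributes an extra $O(1/\sqrt{N})$, converting the naive $1/\sqrt{N}$ into $1/N$. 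So comparing $\cU$ with $\cU_2$ is not about bounding $\cN$ under $\cU$; it is about exhibiting the orthogonality that kills the linear term, and then paying only the replacement error.

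Two smaller inaccuracies: the cutoff is not introduced because Lemma~\ref{lm:L2} requires it (its proof uses only $\wt V^2\le D(1-\Delta)$); it is needed in Proposition~\ref{lm:4} and Proposition~\ref{prop:comp2} to control $\cL_4^2$ on the large-$\cN$ sectors, where one lacks mixed moments of the form $\langle\cN^p\cK\rangle$ under $\cU_2$ and must instead use the crude bound $|\wt V|\le\alpha_N^{-1}$ together with the fast decay of $\langle\cN^p\rangle$. Also, the statement ``choose $r$ sufficiently small within $N^{-r}\le\alpha_N\le N^{-3}$'' is backwards --- that window requires $r\ge 3$, and in the paper one fixes $r$ and then tunes the auxiliary exponents ($m$, $p$, $\kappa$, $\delta$) in those propositions so large that the $\alpha_N^{-2}$ loss is absorbed.
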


The proof of this proposition is given in Section \ref{sec:proof}. It makes use of a representation of the problem on the bosonic Fock space, which we introduce 
in the next section.  

\section{Fock Space Representation}
\label{sec:fock}

The bosonic Fock space over $L^2 (\bR^3, dx)$ is defined as the Hilbert space 
\[ \cF = \bigoplus_{n \geq 0} L^2 (\bR^3, dx)^{\otimes_s n} = \bC \oplus \bigoplus_{n \geq 1} L^2_s (\bR^{3n} , dx_1 \dots dx_n) \, . \] 
Here $L^2_s (\bR^{3n}, dx_1 \dots dx_n)$ denotes the subspace of $L^2 (\bR^{3n}, dx_1 \dots dx_n)$ consisting of functions symmetric with respect to any permutation of the $n$ variables $x_1, \dots , x_n$. In other words, $\cF$ contains sequences $\psi = \{ \psi^{(n)} \}_{n \geq 0}$
of $n$-particle wave functions $\psi^{(n)} \in L^2_s (\bR^{3n}, dx_1 \dots dx_n)$.
For $\psi_1, \psi_2 \in \cF$, we define the scalar product  
\[ \langle \psi_1 , \psi_2 \rangle = \sum_{n \geq 0} \langle
\psi_1^{(n)} , \psi_2^{(n)} \rangle_{L^2 (\bR^{3n})} =
\overline{\psi_1^{(0)}} \psi_2^{(0)} + \sum_{n \geq 1} \int d x_1
\dots d x_n \, \overline{\psi_1^{(n)}} (x_1 , \dots , x_n)
\psi_2^{(n)} (x_1, \dots ,x_n) \,. \] A sequence $\{ 0, \dots ,0,  \psi^{(m)}, 0, \dots\}$ describes a state with exactly $m$ particles. We will denote by $\cF^{(m)}$ the $m$-particle sector of $\cF$, which is spanned by vectors of the form $\{ 0, \dots ,0,  \psi^{(m)}, 0, \dots\}$. The vector $\Omega = \{1, 0, 0, \dots \} \in \cF$ is known as the vacuum and spans the zero-particle sector $\cF^{(0)}$. 

The number of particles operator $\cN$ is defined on the Fock space $\cF$ by $(\cN
\psi)^{(n)} = n \psi^{(n)}$. For $f \in L^2 (\bR^3, dx)$ we define the
creation operator $a^* (f)$ and the annihilation operator $a(f)$ by
\begin{equation}
\begin{split}
\left(a^* (f) \psi \right)^{(n)} (x_1 , \dots ,x_n) &=
\frac{1}{\sqrt n} \sum_{j=1}^n f(x_j) \psi^{(n-1)} ( x_1, \dots,
x_{j-1}, x_{j+1},
\dots , x_n) \\
\left(a (f) \psi \right)^{(n)} (x_1 , \dots ,x_n) &= \sqrt{n+1} \int
d x \; \overline{f (x)} \, \psi^{(n+1)} (x, x_1, \dots ,x_n) \, .
\end{split}
\end{equation}
For any $f \in L^2 (\bR^3, dx)$, the operators $a^* (f)$ and $a(f)$ are unbounded, densely defined, closed operators. The creation operator $a^*(f)$ is the adjoint of
the annihilation operator $a(f)$ (note that by definition $a(f)$ is
anti-linear in $f$), and they satisfy the canonical commutation relations; for any $f,g \in L^2 (\bR^3,dx)$, 
\begin{equation}\label{eq:comm} [ a(f) , a^* (g) ] =
\langle f,g \rangle_{L^2 (\bR^3)}, \qquad [ a(f) , a(g)] = [ a^*
(f), a^* (g) ] = 0 \,. \end{equation} For every $f\in L^2 (\bR^3, dx)$,
we introduce the self adjoint operator
\begin{equation}\label{eq:phi} \phi (f) = a^* (f) + a(f) \,. \end{equation}
It is interesting to note that the $n$-particle product state $\{ 0, \dots, 0, f^{\otimes n}, 0, \dots \}$ can be produced starting from the vacuum $\Omega$ by applying the creation operator $a^* (f)$ for $n$ times. More precisely, we have
\begin{equation} \label{eq:factor} \{ 0, \dots , 0, f^{\otimes n}, 0, \dots \} = \frac{a^* (f)^n}{\sqrt{n!}} \Omega \, . \end{equation}
The normalization can be easily checked using the canonical commutation relation. We will be interested in the time evolution of these product states. 

We will also make use of operator valued distributions $a^*_x$ and
$a_x$ ($x \in \bR^3$), defined so that \begin{equation}\begin{split}
a^* (f) &= \int d x \, f(x) \, a_x^* \\ a(f) & = \int d x \,
\overline{f (x)} \, a_x \end{split}
\end{equation}
for every $f \in L^2 (\bR^3 , dx)$. The canonical commutation relations
take the form \[ [ a_x , a^*_y ] = \delta (x-y) \qquad [ a_x, a_y ] = [ a^*_x , a^*_y] = 0 \, .\]

For an operator $J$ acting on the one-particle space $L^2 (\bR^3, dx)$, we define the second quantization $d\Gamma (J)$ of $J$ as the operator on $\cF$ whose action on the $n$-particle sector is given by
\[ \left(d\Gamma (J) \psi \right)^{(n)} = \sum_{j=1}^n J_j \psi^{(n)} \] 
where $J_j = 1 \otimes \dots 1 \otimes J \otimes 1 \dots \otimes 1$ is the operator $J$ acting only on the $j$-th variable. As an example, the number operator is the second quantization of the identity, i.e. $\cN = d\Gamma (1)$. If the one-particle operator $J$ has a kernel $J(x;y)$, then the second quantization $d\Gamma (J)$ can be written in terms of the operator valued distributions $a_x, a_x^*$ as
\[ d\Gamma (J)  = \int dx dy \, J(x;y) \, a_x^* a_y \,. \]
For example, we have 
\[ \cN = \int dx \, a_x^* a_x  \,.\]

The following lemma provides useful bounds to control creation and annihilation operators as well as operators of the form $d\Gamma (J)$ 
in terms of the number of particle operator $\cN$.
\begin{lem}\label{lm:0} 
For $\alpha >0$, let $D(\cN^{\alpha}) = \{ \psi \in \cF : \sum_{n \geq 1} n^{2\alpha} \| \psi^{(n)} \|^2 < \infty \}$ denote the domain of the operator $\cN^{\alpha}$. For any $f \in L^2 (\bR^3, dx)$ and any $\psi \in D (\cN^{1/2})$, we have 
\begin{equation}\label{eq:bd-a} 
\begin{split}  \| a(f) \psi \| & \leq \| f \| \, \| \cN^{1/2} \psi \|, \\ 
\| a^* (f) \psi \| &\leq \| f \| \, \| (\cN+1)^{1/2} \psi \|, \\ 
\| \phi (f) \psi \| &\leq 2 \| f \| \| \left( \cN + 1 \right)^{1/2}
\psi \| \, . \end{split} \end{equation}
Moreover, for any bounded one-particle operator $J$ on $L^2 (\bR^3, dx)$ and for every $\psi \in D (\cN)$, we find 
\begin{equation}\label{eq:J-bd} \| d\Gamma (J) \psi \| \leq \| J \| \| \cN \psi \|  \, .\end{equation}
\end{lem}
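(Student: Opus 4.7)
The proof proposal is to verify each inequality sector by sector on $\mathcal{F}$, making direct use of the explicit formulas for $a(f)$, $a^*(f)$, and $d\Gamma(J)$, plus the canonical commutation relations.

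First I would prove the bound on $a(f)$. From the definition,
\[ \left(a(f)\psi\right)^{(n)}(x_1,\dots,x_n) = \sqrt{n+1}\int dx\,\overline{f(x)}\,\psi^{(n+1)}(x,x_1,\dots,x_n), \]
so a Cauchy-Schwarz inequality in the integration variable $x$ gives, on each sector, $\|(a(f)\psi)^{(n)}\|^2 \leq (n+1)\|f\|^2\|\psi^{(n+1)}\|^2$. Summing over $n$ and re-indexing $m=n+1$ produces $\|a(f)\psi\|^2 \leq \|f\|^2 \sum_{m\geq 1} m\,\|\psi^{(m)}\|^2 = \|f\|^2\|\mathcal{N}^{1/2}\psi\|^2$, which is the first bound in \eqref{eq:bd-a}.

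Next, the bound on $a^*(f)$ follows from the bound on $a(f)$ together with the canonical commutation relation \eqref{eq:comm}. Indeed, $a(f)a^*(f) = a^*(f)a(f) + \|f\|^2\,\mathbf{1}$, so
\[ \|a^*(f)\psi\|^2 = \langle\psi,a(f)a^*(f)\psi\rangle = \|a(f)\psi\|^2 + \|f\|^2\|\psi\|^2 \leq \|f\|^2\,\|(\mathcal{N}+1)^{1/2}\psi\|^2. \]
The bound on $\phi(f) = a(f)+a^*(f)$ is then immediate from the triangle inequality and from $\|\mathcal{N}^{1/2}\psi\| \leq \|(\mathcal{N}+1)^{1/2}\psi\|$.

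Finally, to control $d\Gamma(J)$ I would bound each sector directly. Since $(d\Gamma(J)\psi)^{(n)} = \sum_{j=1}^n J_j\psi^{(n)}$ and each $J_j$ acts as $J$ on a single variable (with identity on the rest), the triangle inequality gives $\|(d\Gamma(J)\psi)^{(n)}\| \leq n\|J\|\,\|\psi^{(n)}\|$. Squaring and summing over $n$ yields $\|d\Gamma(J)\psi\|^2 \leq \|J\|^2 \sum_{n\geq 0} n^2\|\psi^{(n)}\|^2 = \|J\|^2\|\mathcal{N}\psi\|^2$, establishing \eqref{eq:J-bd}. There is no genuine obstacle here; the only point to be slightly careful about is that all manipulations take place a priori on vectors $\psi$ with finitely many nonzero sectors (say the algebraic Fock space), and the stated bounds then extend by density to all $\psi \in D(\mathcal{N}^{1/2})$ or $D(\mathcal{N})$ respectively, using that $\mathcal{N}^{1/2}$ and $\mathcal{N}$ are self-adjoint.
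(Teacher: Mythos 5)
Your proof is correct and follows the standard elementary route: Cauchy-Schwarz sector-by-sector for $a(f)$, the commutation relation $a(f)a^*(f) = a^*(f)a(f) + \|f\|^2$ to pass to $a^*(f)$, the triangle inequality for $\phi(f)$, and the per-sector bound $\|(d\Gamma(J)\psi)^{(n)}\|\leq n\|J\|\|\psi^{(n)}\|$ for the second quantization. The paper simply cites Lemma~2.1 of \cite{RS} for the first three bounds and gives the $d\Gamma(J)$ estimate by expanding $\sum_{i,j}\langle J_i\psi^{(n)}, J_j\psi^{(n)}\rangle$ rather than applying the triangle inequality first; these are the same computation reorganized, so there is no substantive difference.
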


\begin{proof}
These bounds are standard. A proof of (\ref{eq:bd-a}) can be found, for example, in Lemma 2.1 of \cite{RS}. As for (\ref{eq:J-bd}), it is enough to observe that 
\[ \| d\Gamma (J) \psi \|^2 = \sum_{n\geq 1} \sum_{i,j = 1}^n \langle J_i \psi^{(n)}, J_j \psi^{(n)} \rangle \leq \sum_{n \geq 1} n^2 \| J \|^2 \| \psi^{(n)} \|^2 = \| J \|^2 \| \cN \psi \|^2 \] because, clearly, $\| J_i \| = \|J \|$ for all $i = 1, \dots ,n$.
\end{proof}

Given $\psi \in \cF$, we define the one-particle density
$\gamma^{(1)}_{\psi}$ associated with $\psi$ as the positive trace
class operator on $L^2 (\bR^3, dx)$ with kernel given by
\begin{equation}\label{eq:margi} \gamma^{(1)}_{\psi} (x; y) = \frac{1}{\langle \psi,
\cN \psi \rangle} \, \langle \psi, a_y^* a_x \psi \rangle\, .
\end{equation} By definition, $\gamma_{\psi}^{(1)}$ is a positive trace
class operator on $L^2 (\bR^3, dx)$ with $\tr \, \gamma_{\psi}^{(1)}
=1$. For an arbitrary $N$-particle state $\{ 0, \dots , 0 , \psi_N , 0, \dots \}$ it is simple to check that (\ref{eq:margi}) coincides with the definition (\ref{eq:marg}) given in the introduction. 

\medskip

On $\cF$, we define the Hamilton operator $\cH_N$ by $ (\cH_N \psi)^{(n)} =
\cH^{(n)}_N \psi^{(n)}$, with
\[ \cH^{(n)}_N = - \sum_{j=1}^n \Delta_j + \frac{1}{N} \sum_{i<j}^n
\wt V(x_i -x_j) \, , \] where $\wt{V}$ denotes the regularized potential introduces in (\ref{eq:wtV}). Using the distributions $a_x, a^*_x$, $\cH_N$
can be rewritten as
\begin{equation}\label{eq:ham2} \cH_N = \int d x \nabla_x a^*_x
\nabla_x a_x + \frac{1}{2N} \int d x d y \, \wt V(x-y) a_x^* a_y^*
a_y a_x \, . \end{equation} 
It is clear that, on the $N$-particle sector, the operator $\cH_N$
coincides with the Hamiltonian $\wt H_N$ defined in (\ref{eq:regham}). 
To study the dynamics generated by the Hamiltonian $\cH_N$ on the Fock space, coherent states will be useful. 

\medskip

For $f \in L^2 (\bR^3, dx)$, we define the Weyl-operator
\begin{equation}
W(f) = \exp \left( a^* (f) - a(f) \right) = \exp \left( \int d x
\, (f(x) a^*_x - \overline{f} (x) a_x) \right) \, .
\end{equation}
The coherent state with one-particle wave function $f$ is the vector $W(f) \Omega$. Notice that \begin{equation}\label{eq:coh} 
W(f) \Omega = e^{-\| f\|^2 /2} \sum_{n \geq 0} \frac{ (a^* (f))^n}{n!} \Omega  =
e^{-\| f\|^2 /2} \sum_{n \geq 0} \frac{1}{\sqrt{n!}} \, \{ 0, \dots , 0, f^{\otimes n}, 0 \dots \} \,.
\end{equation}
Hence, if $P_n$ denotes the projection onto the $n$-particle sector $\cF^{(n)}$, we have, comparing with (\ref{eq:factor}),
\begin{equation}\label{eq:proj} P_n W(f) \Omega = e^{-\| f \|^2/2}  \frac{a^* (f)^n}{n!} \Omega = \frac{e^{-\| f \|^2/2}}{\sqrt{n!}} \,  \{ 0, \dots , 0, f^{\otimes n} , 0, \dots \} \, .\end{equation} 
We will use this formula with $n=N$ to write the initial product state $\ph^{\otimes N}$ as the projection onto the $N$-particle sector of an appropriately chosen coherent state. 
Eq. (\ref{eq:coh}) is a consequence of the expression 
\[ \exp (a^* (f) - a (f)) = e^{-\|f \|^2/2} \exp (a^* (f)) \exp
(-a(f)) \] which follows from the fact that $[a (f) , a^* (f)] = \| f \|^2$ commutes 
with $a(f)$ and $a^* (f)$. It implies that coherent states are
superpositions of states with different number of particles (the number of particles is a random variable with Poisson distribution having average $\| f \|^2$) . 
In the next lemma we list some key facts about Weyl operators and coherent states.
\begin{lem}\label{lm:coh}
Let $f,g \in L^2 (\bR^3, dx)$.
\begin{itemize}
\item[i)] The Weyl operator satisfy the relations
\[ W(f) W(g) = W(g) W(f) e^{-2i \, \text{Im} \, \langle f,g \rangle} = W(f+g) e^{-i\, \text{Im} \, \langle f,g \rangle} \,. \]
\item[ii)] $W(f)$ is a unitary operator and
\[ W(f)^* = W(f)^{-1}  = W (-f). \]
\item[iii)] We have \[ W^* (f) a_x W(f) = a_x + f(x), \qquad \text{and} \quad W^* (f) a^*_x
W(f) = a^*_x + \overline{f} (x) \, .\]
\item[iv)] {F}rom iii) we see that coherent states are eigenvectors of annihilation operators
\[ a_x \psi (f) = f(x) \psi (f)  \qquad \Rightarrow \qquad a (g)
\psi (f) = \langle g, f \rangle_{L^2} \psi (f) \, .\]
\item[v)] The expectation of the number of particles in the coherent
state $\psi (f)$ is given by $\| f\|^2$, that is
\[ \langle \psi (f), \cN \psi (f) \rangle = \| f \|^2
\, . \] Also the variance of the number of particles in $\psi (f)$
is given by $\|f \|^2$ (the distribution of $\cN$ is Poisson), that
is
\[ \langle \psi (f), \cN^2 \psi (f) \rangle - \langle \psi (f) ,
\cN \psi (f) \rangle^2 = \| f \|^2 \, .\]
\item[vi)] Coherent states are normalized but not orthogonal to each
other. In fact
\[ \langle \psi (f) , \psi (g) \rangle = e^{-\frac{1}{2}\left( \| f
\|^2 + \| g \|^2 - 2 (f,g) \right)}  \quad \Rightarrow \quad
|\langle \psi (f) , \psi (g) \rangle| = e^{-\frac{1}{2} \| f- g
\|^2} \, .\]
\end{itemize}
\end{lem}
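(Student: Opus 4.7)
The plan is to derive all six items from a single master identity---the Baker--Campbell--Hausdorff formula in the case where the commutator is central---together with the canonical commutation relations \eqref{eq:comm}. Writing $A(f)=a^*(f)-a(f)$ so that $W(f)=e^{A(f)}$, the relations \eqref{eq:comm} give
\[
[A(f),A(g)] = -[a^*(f),a(g)] - [a(f),a^*(g)] = \langle g,f\rangle - \langle f,g\rangle = -2i\,\text{Im}\,\langle f,g\rangle,
\]
a multiple of the identity. Centrality of $[A(f),A(g)]$ collapses BCH to $e^{A(f)}e^{A(g)}=e^{A(f)+A(g)}e^{\frac{1}{2}[A(f),A(g)]}$, and linearity gives $A(f)+A(g)=A(f+g)$. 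This yields both equalities of (i); swapping $f$ and $g$ shows the symmetric identity. For (ii), $A(f)$ is skew-adjoint so $W(f)$ is unitary, and $A(f)^*=-A(f)=A(-f)$ gives $W(f)^*=W(-f)$, which is also $W(f)^{-1}$ by the $f+g=0$ case of (i).

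For (iii), the commutator $[a_x,A(f)]=[a_x,a^*(f)]=f(x)$ is a scalar, so all higher iterated commutators vanish, and the Hadamard expansion
\[
W(-f)\,a_x\,W(f) = a_x + [a_x,A(f)] + \tfrac{1}{2}[[a_x,A(f)],A(f)] + \cdots
\]
truncates after the first commutator to yield $a_x+f(x)$. Taking adjoints and applying (ii) produces the corresponding identity for $a_x^*$ with the conjugate $\overline{f}(x)$. Item (iv) is then a one-line consequence: using $a_x\Omega=0$,
\[
a_x\,W(f)\Omega = W(f)\,W(-f)a_x W(f)\Omega = W(f)\,(a_x+f(x))\Omega = f(x)\,W(f)\Omega,
\]
and integrating against $\overline{g(x)}$ gives the statement for $a(g)$.

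For (v), I write $\cN=\int dx\,a_x^*a_x$ and use (iv) to get $\langle W(f)\Omega,\cN W(f)\Omega\rangle=\int dx\,|f(x)|^2=\|f\|^2$. For the second moment, I normal-order via the CCR, $a_x^*a_xa_y^*a_y = a_x^*a_y^*a_xa_y+\delta(x-y)\,a_x^*a_y$, and apply (iv) to each factor reaching an end, producing $\|f\|^4+\|f\|^2$; subtracting the square of the mean gives the Poisson variance $\|f\|^2$. For (vi), combining (i) and (ii),
\[
\langle W(f)\Omega,W(g)\Omega\rangle = \langle\Omega,W(-f)W(g)\Omega\rangle = e^{i\,\text{Im}\,\langle f,g\rangle}\,\langle\Omega,W(g-f)\Omega\rangle,
\]
and the series \eqref{eq:coh} for $W(g-f)\Omega$ has only its zero-particle sector pair nontrivially with $\Omega$, so $\langle\Omega,W(g-f)\Omega\rangle=e^{-\|g-f\|^2/2}$. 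Expanding $\|g-f\|^2=\|f\|^2+\|g\|^2-2\,\text{Re}\,\langle f,g\rangle$ and recombining with the phase via $\text{Re}\,\langle f,g\rangle+i\,\text{Im}\,\langle f,g\rangle=\langle f,g\rangle$ gives the stated exponent; taking modulus kills the imaginary part and leaves $e^{-\|f-g\|^2/2}$.

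The only genuine obstacle is the rigorous justification of BCH and of the Hadamard expansion for the unbounded operators $a(f),a^*(f)$. I would handle this by first verifying every identity on the dense invariant domain of vectors with finitely many particles and smooth one-particle components, where the exponential series defining $W(f)$ converges absolutely and all termwise rearrangements are legitimate; the bounds $\|a^{\#}(f)\psi\|\le\|f\|\,\|(\cN+1)^{1/2}\psi\|$ from Lemma~\ref{lm:0} then allow extension by continuity, and (ii) ensures that $W(f)$ is bounded on all of $\cF$ so no further domain considerations obstruct the passage to the full Hilbert space.
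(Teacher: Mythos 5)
Your proof is correct. The paper states Lemma~\ref{lm:coh} without proof (it is a standard collection of facts about Weyl operators and coherent states), but the brief expository remark just before it---deriving \eqref{eq:coh} from the fact that $[a(f),a^*(f)]=\|f\|^2$ is central---indicates exactly the Baker--Campbell--Hausdorff/Hadamard route you take, so your argument is consistent with the paper's implicit approach and fills in the omitted details, including a sensible treatment of the domain issues for the unbounded operators $a(f),a^*(f)$.
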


\section{Proof of Proposition \ref{prop:main}}
\label{sec:proof}

Formulating the problem on the Fock space $\cF$, the one-particle density associated with the evolution $e^{-i \wt H_N t} \psi_N$ of the initial product state $\psi_N = \ph^{\otimes N}$ has the kernel  
\begin{equation}
\wt\gamma^{(1)}_{N,t} (y ; x) =  \frac{1}{N} \left\langle \frac{a^* (\ph)^N}{\sqrt{N!}} \Omega , e^{i\cH_N t} a_x^* a_y e^{-i \cH_N t} \frac{a^* (\ph)^N}{\sqrt{N!}} \Omega\right\rangle \, .
\end{equation}
Here we used (\ref{eq:factor}), (\ref{eq:margi}) and the fact that, on the $N$-particle sector, $\cH_N |_{\cF^{(N)}} = \wt H_N$. Now, using (\ref{eq:proj}), we write 
\begin{equation}\label{eq:PNW} 
\frac{a^* (\ph)^N}{\sqrt{N!}} \Omega =  \frac{\sqrt{N!}}{N^{N/2} e^{-N/2}} P_N W(\sqrt{N} \ph) \Omega = d_N \, P_N W(\sqrt{N} \ph) \Omega 
\end{equation}
where we defined 
\begin{equation}\label{eq:dN} d_N = \frac{\sqrt{N!}}{e^{-N/2} N^{N/2}} \leq C N^{1/4} \,.  \end{equation}
Thus, we obtain 
\begin{equation}\label{eq:gm1} \begin{split}
\wt \gamma^{(1)}_{N,t} (y;x) = \; &\frac{1}{N} \left\langle \frac{a^* (\ph)^N}{\sqrt{N!}} \Omega , e^{i\cH_N t} a_x^* a_y e^{-i \cH_N t} \frac{a^* (\ph)^N}{\sqrt{N!}} \Omega\right\rangle  \\ = \; &\frac{d_N}{N} \left\langle \frac{a^* (\ph)^N}{\sqrt{N!}} \Omega , e^{i\cH_N t} a_x^* a_y e^{-i \cH_N t} P_N W (\sqrt{N}\ph) \Omega\right\rangle  \\
 = \; &\frac{d_N}{N} \left\langle \frac{a^* (\ph)^N}{\sqrt{N!}} \Omega , P_N e^{i\cH_N t} a_x^* a_y e^{-i \cH_N t} W (\sqrt{N}\ph) \Omega\right\rangle \\
   = \; &\frac{d_N}{N} \left\langle \frac{a^* (\ph)^N}{\sqrt{N!}} \Omega , e^{i\cH_N t} a_x^* a_y e^{-i \cH_N t} W (\sqrt{N}\ph) \Omega\right\rangle \, .
\end{split}
\end{equation}

Following, similarly to \cite{RS}, an idea 
first introduced by Hepp in \cite{He}, we define the unitary evolution 
\begin{equation}\label{eq:U} \cU (t;s) = W^* (\sqrt{N} \wt \ph_t) e^{-i (t-s) \cH_N} W(\sqrt{N} \wt \ph_s) \, . \end{equation}
Lemma \ref{lm:coh} implies that 
\begin{equation}\label{eq:hepp} 
\cU^* (t;0) \, a_y \, \cU (t;0) = W^* (\sqrt{N} \ph) e^{i\cH_N t} (a_y - \sqrt{N} \wt \ph_t (y)) e^{-i\cH_N t} W (\sqrt{N} \ph) \, . \end{equation}
Moreover, $\cU (t;s)$ satisfies the Schr\"odinger equation 
\[ i \frac{d}{dt} \cU (t;s) = \cL (t) \cU (t;s), \qquad \text{with } \cU (s;s) = 1 \]
with the time dependent generator 
\begin{equation}\label{eq:cL} \begin{split} 
\cL (t) = \; & \int dx \, \nabla_x a^*_x \nabla_x a_x + \int dx \, (\wt V*|\wt\ph_t|^2 ) (x)\, a^*_x a_x \\ &+ \int dx dy \, \wt V(x-y) a_x^* a_y \wt\ph_t (x) \overline{\wt\ph}_t (y) + \int dx dy \, \wt V(x-y) \left( a_x^* a_y^* \, \wt \ph_t (x) \wt \ph_t (y) +  a_x a_y \, \overline{\wt \ph}_t (x) \overline{\wt \ph}_t (y) \right) \\ &+\frac{1}{\sqrt{N}} \int dx dy \, \wt V(x-y) a_x^* \left( a_y^* \wt \ph_t (y) + a_y \overline{\wt \ph}_t (y) \right) a_x \\ &+\frac{1}{N} \int dx dy \wt V(x-y) a_x^* a_y^* a_y a_x  \, .\end{split} \end{equation}
{F}rom (\ref{eq:gm1}) and (\ref{eq:hepp}), we obtain 
\begin{equation}\begin{split}
\wt \gamma^{(1)}_{N,t} (y;x) = \; &\frac{d_N}{N} \left\langle \frac{a^* (\ph)^N}{\sqrt{N!}} \Omega , W (\sqrt{N} \ph) \, \cU^* (t;0) (a_x^* + \sqrt{N} \, \overline{\wt\ph}_t (x)) (a_y + \sqrt{N} \wt \ph_t (y)) \cU(t;0) \Omega\right\rangle \\ =\; &  \wt \ph_t (y) \, \overline{\wt \ph_t} (x)+ \frac{d_N}{N} \left\langle \frac{a^* (\ph)^N}{\sqrt{N!}} \Omega , W (\sqrt{N} \ph) \cU^* (t;0) a_x^* a_y \cU(t;0) \Omega\right\rangle \\ &+ \frac{d_N}{\sqrt{N}} \, \overline{\wt \ph_t} (x) \left\langle \frac{a^* (\ph)^N}{\sqrt{N!}} \Omega , W (\sqrt{N} \ph) \cU^* (t;0) a_y \cU(t;0) \Omega\right\rangle
 \\ &+ \frac{d_N}{\sqrt{N}} \, \wt \ph_t (y) \left\langle \frac{a^* (\ph)^N}{\sqrt{N!}} \Omega , W (\sqrt{N} \ph) \cU^* (t;0) a^*_x \cU(t;0) \Omega\right\rangle \, .
\end{split}
\end{equation}
Integrating against the kernel $J (x;y)$ of a compact hermitian one-particle 
operator $J$ (hermiticity implies that $J(y;x) = \overline{J}(x;y)$), we find 
\begin{equation}\label{eq:diff1}
\begin{split}
\tr \, J \left( \wt\gamma^{(1)}_{N,t}  - |\wt \ph_t \rangle \langle \wt \ph_t| \right) = \; & \int dx dy \, J(x;y) \left(\wt \gamma^{(1)}_{N,t} (y;x) - \wt \ph_t (y) \, \overline{\wt \ph_t} (x) \right)  \\ =\; & \frac{d_N}{N} \left\langle \frac{a^* (\ph)^N}{\sqrt{N!}} \Omega , W (\sqrt{N} \ph) \cU^* (t;0) d\Gamma (J) \cU(t;0) \Omega\right\rangle \\ &+ \frac{d_N}{\sqrt{N}} \left\langle \frac{a^* (\ph)^N}{\sqrt{N!}} \Omega , W (\sqrt{N} \ph) \cU^* (t;0) \phi (J \wt\ph_t) \cU(t;0) \Omega\right\rangle 
\end{split}
\end{equation}
where we used the definition (\ref{eq:phi}). To bound the second term on the r.h.s. of (\ref{eq:diff1}), we write 
\[ \begin{split} \frac{d_N}{\sqrt{N}} \Big\langle \frac{a^* (\ph)^N}{\sqrt{N!}} \Omega , &W (\sqrt{N} \ph) \cU^* (t;0) \phi (J \wt\ph_t) \cU(t;0) \Omega\Big\rangle  \\ 
= &\;  \frac{d_N}{\sqrt{N}} \left\langle \frac{a^* (\ph)^N}{\sqrt{N!}} \Omega , W (\sqrt{N} \ph) \, \cU_2^* (t;0) \phi (J \wt\ph_t) \cU_2 (t;0) \Omega\right\rangle \\
&+ \frac{d_N}{\sqrt{N}} \left\langle \frac{a^* (\ph)^N}{\sqrt{N!}} \Omega , W (\sqrt{N} \ph) \, \left( \cU^* (t;0) - \cU_2^* (t;0) \right) \, \phi (J \wt\ph_t) \cU_2 (t;0) \Omega\right\rangle 
\\ &+ \frac{d_N}{\sqrt{N}} \left\langle \frac{a^* (\ph)^N}{\sqrt{N!}} \Omega , W (\sqrt{N} \ph) \, \cU^* (t;0) \, \phi (J \wt\ph_t) \left( \cU (t;0) -  \cU_2 (t;0)\right)  \Omega\right\rangle
\end{split}\]
where we compared the fluctuation dynamics $\cU (t;0)$ with the dynamics $\cU_2 (t;0)$ defined by the equation
\begin{equation}\label{eq:U2} i \frac{d}{dt} \cU_2 (t;s) = \cL_2 (t) \cU_2 (t;s), \qquad \text{with } \quad \cU_2 (s;s) = 1 \end{equation}
and with the quadratic generator 
\begin{equation}\label{eq:L2} \begin{split} 
\cL_2 (t) = \; & \int dx \nabla_x a^*_x \nabla_x a_x + \int dx \,  (\wt V*|\wt \ph_t|^2 ) (x) a^*_x a_x \\ &+ \int dx dy \, \wt V(x-y) a_x^* a_y \wt \ph_t (x) \overline{\wt \ph}_t (y) \\ &+ \int dx dy \, \wt V(x-y) \left( a_x^* a_y^* \, \wt \ph_t (x) \wt \ph_t (y) +  a_x a_y \, \overline{\wt \ph}_t (x) \overline{\wt \ph}_t (y)\right) \,. \end{split}  \end{equation}
The existence of the evolution $\cU_2 (t;s)$ has been established in \cite{GV}. Taking absolute value in (\ref{eq:diff1}), we find 
\begin{equation}\label{eq:diff2}
\begin{split}
\Big| \tr \, J \, \Big( \wt\gamma^{(1)}_{N,t}  - &|\wt \ph_t \rangle \langle\wt \ph_t| \Big)  \Big| \\ \leq \; & 
\frac{d_N}{N} \left\| (\cN+1)^{-1/2} W^* (\sqrt{N} \ph) \frac{a^* (\ph)^N}{\sqrt{N!}} \Omega \right\| \, \left\| (\cN+1)^{1/2} \, \cU^* (t;0) d\Gamma (J) \cU(t;0) \Omega \right\| \\ &+ \frac{d_N}{\sqrt{N}} \left| \left\langle W^* (\sqrt{N} \ph)  \frac{a^* (\ph)^N}{\sqrt{N!}} \Omega , \cU_2^* (t;0) \phi (J\wt \ph_t) \cU_2 (t;0) \Omega\right\rangle \right| \\
&+ \frac{d_N}{\sqrt{N}} \left| \left\langle W^* (\sqrt{N} \ph)  \frac{a^* (\ph)^N}{\sqrt{N!}} \Omega , (\cU_2^* (t;0) - \cU^* (t;0)) \, \phi (J\wt \ph_t) \cU_2 (t;0) \Omega\right\rangle \right| \\
&+ \frac{d_N}{\sqrt{N}} \left| \left\langle W^* (\sqrt{N} \ph)  \frac{a^* (\ph)^N}{\sqrt{N!}} \Omega , \cU^* (t;0) \, \phi (J\wt \ph_t) (\cU_2 (t;0) -\cU (t;0) )\Omega\right\rangle \right| 
\, .
\end{split}
\end{equation}

The first term on the r.h.s. can be bounded, using Lemma \ref{lm:2}, Proposition \ref{lm:1}, and Lemma \ref{lm:0}, by
\[ \begin{split} \frac{d_N}{N} \left\| (\cN+1)^{-1/2} W^* (\sqrt{N} \ph) \frac{a^* (\ph)^N}{\sqrt{N!}} \Omega \right\| \,& \left\| (\cN+1)^{1/2} \, \cU^* (t;0) d\Gamma (J) \cU(t;0) \Omega \right\| 
\\ \leq \; & \frac{C}{N} e^{\wt{K} |t|} \left\| (\cN+1)^2 \, d\Gamma (J) \cU(t;0) \Omega \right\| 
\\ \leq \; & \frac{C \| J \|}{N} e^{\wt{K} |t|} \left\| (\cN+1)^3 \,  \cU(t;0) \Omega \right\| \\
\leq \; & \frac{C \| J \|}{N} e^{2 \wt{K} |t|} \, \| ( \cN+1)^{9/2}\,  \Omega \| \leq \frac{C \| J \|}{N} e^{K |t|} \, . \end{split} \]

The second term on the r.h.s. of (\ref{eq:diff2}) vanishes. This follows from Lemma \ref{lm:3} and because 
\[ \begin{split} P_1 W^*(\sqrt{N} \ph) a^* (\ph)^N \Omega = \; & P_1 (a^* (\ph) + \sqrt{N})^N W^* (\sqrt{N} \ph) \Omega \\ = \; & e^{-N/2} P_1 (a^* (\ph) + \sqrt{N})^N (\Omega - \sqrt{N} a^* (\ph) \Omega) \\ = \; & -e^{-N/2} N^{N/2} \sqrt{N} a^* (\ph) \Omega + e^{-N/2} N^{\frac{N-1}{2}} N a^* (\ph) \Omega = 0 \, . \end{split}\]

The third term on the r.h.s. of (\ref{eq:diff2}) is bounded, from Proposition \ref{prop:comp2}, by
\[ \frac{d_N}{\sqrt{N}} \left| \left\langle W^* (\sqrt{N} \ph)  \frac{a^* (\ph)^N}{\sqrt{N!}} \Omega , (\cU_2^* (t;0) - \cU^* (t;0)) \, \phi (J\wt \ph_t) \cU_2 (t;0) \Omega\right\rangle \right| \leq \frac{C \| J \| e^{K |t|}}{N} \, . \]

The fourth and last term on the r.h.s. of (\ref{eq:diff2}) is bounded, using Lemma \ref{lm:2} and Lemma \ref{lm:1}, 
\[ \begin{split}
 \frac{d_N}{\sqrt{N}} &\left| \left\langle W^* (\sqrt{N} \ph)  \frac{a^* (\ph)^N}{\sqrt{N!}} \Omega , \cU^* (t;0) \, \phi (J\wt \ph_t) (\cU_2 (t;0) -\cU (t;0) )\Omega\right\rangle \right| 
\\ \leq\; & \frac{d_N}{\sqrt{N}} \left\| (\cN+1)^{-1/2}  W^* (\sqrt{N} \ph)  \frac{a^* (\ph)^N}{\sqrt{N!}} \Omega \right\| \, \left\| (\cN+1)^{1/2} \cU^* (t;0) \, \phi (J\wt \ph_t) (\cU_2 (t;0) -\cU (t;0) )\Omega \right\| \\ \leq\; & \frac{Ce^{\wt{K}|t|}}{\sqrt{N}} \left\| (\cN+1)^{2} \phi (J\wt \ph_t) (\cU_2 (t;0) -\cU (t;0) )\Omega \right\|
\end{split} \]
Writing $(\cN+1)^2 \phi (J\ph_t) = a^* (J \ph_t) (\cN+2)^2 + a (J \ph_t) \cN^2$, and using Lemma \ref{lm:0}, we find
\[  \begin{split} \frac{d_N}{\sqrt{N}} \Big| \Big\langle W^* (\sqrt{N} \ph)   \frac{a^* (\ph)^N}{\sqrt{N!}} \Omega , \cU^* (t;0) \,& \phi (J\wt \ph_t) (\cU_2 (t;0) -\cU (t;0) )\Omega\Big\rangle \Big| 
\\ \leq\; & \frac{C \| J \| \, e^{\wt{K} |t|}}{\sqrt{N}} \left\| (\cN+1)^{5/2} \, (\cU_2 (t;0) -\cU (t;0) )\Omega \right\| \\
\leq\; & \frac{C \| J \| \, e^{K |t|}}{N} \end{split} \]
where, in the last step, we used Proposition \ref{lm:4}. 

Summarizing, we showed that
\[ \Big|\tr \, J \left( \wt\gamma^{(1)}_{N,t}  - |\wt \ph_t \rangle \langle \wt \ph_t| \right)  \Big| \leq  \frac{C \| J \|}{N} e^{K |t|}\] 
for all compact hermitian operators $J$ on $L^2 (\bR^3, dx)$. Since the space of compact operators is the dual to the trace class operators, and since $\wt \gamma^{(1)}_{N,t}$ and $|\wt \ph_t \rangle \langle \wt \ph_t|$ are hermitian, we immediately obtain that  
\[ \tr \, \left| \wt \gamma^{(1)}_{N,t} - |\wt \ph_t \rangle \langle \wt \ph_t| \right| \leq  \frac{C}{N} e^{K |t|} \]
which concludes the proof of Proposition \ref{prop:main}.

\section{Bounds on the growth of number of particles}
\label{sec:bdN}

One of the most important ingredients in the proof of Theorem \ref{thm:main} presented in the previous section is a bound on the growth of the number of particles with respect to the evolutions $\cU (t;s)$ and $\cU_2 (t;s)$. 

\begin{prop}\label{lm:1}
Suppose that $\cU (t;s)$ and $\cU_2 (t;s)$ are the unitary evolutions defined in (\ref{eq:U}) and (\ref{eq:U2}), respectively. Then, for every $j \in \bN$, there exist constants $C_j, K_j >0$ such that 
\[ \| (\cN+1)^j \cU_2 (t;s) \psi \| \leq C_j \, e^{K_j |t-s|} \| (\cN +1)^j \psi \| \]
and 
\[  \| (\cN+1)^j \cU (t;s) \psi \| \leq C_j \, e^{K_j |t-s|} \| (\cN+1)^{2j+1} \psi \| \]
for every $\psi \in \cF$, $t \in \bR$. This implies that, for any $j \in \bN$, the operators $(\cN+1)^j \cU_2 (t;s) (\cN+1)^{-j}$ and $(\cN+1)^j \cU (t;s) (\cN+1)^{-2j-1}$ extend as bounded operators on the Fock space $\cF$ with norm bounded by 
\[ 
\| (\cN+1)^j \, \cU_2 (t;s)\, (\cN+1)^{-j} \| \leq C_j \, e^{K_j |t-s|} \] 
and \[
\| (\cN+1)^j  \, \cU (t;s) \, (\cN+1)^{-2j-1} \| \leq C_j \, e^{K_j |t-s|} \,. \]
\end{prop}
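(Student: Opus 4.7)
The plan is to apply Gronwall's inequality to $\langle\psi_t,(\cN+1)^{2j}\psi_t\rangle$ with $\psi_t:=\cU_\#(t;s)\psi$, where $\cU_\#\in\{\cU,\cU_2\}$. Differentiating gives
\[
\frac{d}{dt}\langle\psi_t,(\cN+1)^{2j}\psi_t\rangle = i\langle\psi_t,[\cL_\#(t),(\cN+1)^{2j}]\psi_t\rangle,
\]
so everything reduces to estimating the commutator on the right. Only terms of $\cL_\#$ that fail to commute with $\cN$ contribute, so the analysis splits cleanly according to how each piece of the generator shifts particle number.

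For $\cU_2$: The diagonal $a^*a$-pieces of $\cL_2(t)$ all commute with $\cN$; only the pairing term $\int \wt V(x-y)\wt\ph_t(x)\wt\ph_t(y)a_x^*a_y^*\,dx\,dy$ and its hermitian conjugate survive. The pull-through formula gives $[a_x^*a_y^*,(\cN+1)^{2j}] = a_x^*a_y^*\bigl((\cN+3)^{2j}-(\cN+1)^{2j}\bigr)$, whose leading term is $4j\,a_x^*a_y^*(\cN+1)^{2j-1}$. The key manipulation will be to split $(\cN+1)^{2j-1}=(\cN+1)^{j-1}(\cN+1)^j$, move $(\cN+1)^{j-1}$ across $a_ya_x$ via pull-through, and then apply Cauchy-Schwarz to the $(x,y)$-integral. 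The integrated kernel norm is bounded by the operator inequality $\wt V^2\leq D(1-\Delta)$ used in the form $\int\wt V^2(x-y)|\wt\ph_t(x)|^2|\wt\ph_t(y)|^2\,dx\,dy \leq D\|\wt\ph_t\|_{H^1}^2$, and Lemma~\ref{lm:0} absorbs the remaining $a_ya_x$. This produces
\[
\Bigl|\frac{d}{dt}\|(\cN+1)^j\psi_t\|^2\Bigr| \leq C_j\|\wt\ph_t\|_{H^1}\,\|(\cN+1)^j\psi_t\|^2,
\]
which, combined with a uniform bound on $\|\wt\ph_t\|_{H^1}$ (obtained separately by a standard Gronwall for the regularized Hartree equation), yields the first estimate.

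For $\cU$: The generator $\cL(t)$ contains in addition a quartic term $\frac{1}{N}\int\wt V\,a_x^*a_y^*a_ya_x$, which commutes with $\cN$ and contributes nothing, and a cubic term
\[
R(t) = \frac{1}{\sqrt N}\int\wt V(x-y)a_x^*\bigl(a_y^*\wt\ph_t(y)+a_y\overline{\wt\ph_t}(y)\bigr)a_x\,dx\,dy
\]
that shifts $\cN$ by $\pm1$. Rewriting $a_x^*a_y^*a_x = a_y^*\,d\Gamma(\wt V(\cdot-y))$ and bounding $d\Gamma(\wt V(\cdot-y))$ via Lemma~\ref{lm:0} together with $\|\wt V\|_\infty\leq\alpha_N^{-1}$ leads to a contribution of the form $\frac{1}{\sqrt N}\|(\cN+1)^j\psi_t\|\,\|(\cN+1)^{j+1}\psi_t\|$, which does not close the differential inequality at level $\|(\cN+1)^j\psi_t\|^2$. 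The natural remedy is to iterate: the inequality for $\|(\cN+1)^j\psi_t\|^2$ feeds into one for $\|(\cN+1)^{j+1}\psi_t\|^2$, and so on. After $j+1$ such bootstraps the chain closes at level $\|(\cN+1)^{2j+1}\psi\|^2$, producing the stated bound. Equivalently, one may run the argument by induction on $j$, with the base case $j=0$ being trivial. The boundedness assertions for $(\cN+1)^j\cU_\#(t;s)(\cN+1)^{-k}$ then follow by density.

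The hardest step will be the uniform-in-$N$ control of the cubic commutator. The kinetic-energy estimates that would allow one to exploit $\wt V^2\leq D(1-\Delta)$ in this term are deferred to Section~\ref{sec:U2}, so within the scope of this section the only available tool is the crude pointwise bound $|\wt V|\leq\alpha_N^{-1}$. The $N^{-1/2}$ prefactor in $R(t)$ must then be balanced against the $\alpha_N^{-1}\leq N^r$ factor using the extra powers of $\cN$ afforded by the weakened right-hand side $\|(\cN+1)^{2j+1}\psi\|$; this loss of $j+1$ powers of $\cN$ is precisely the price paid for closing the estimate without kinetic control, and is the source of the discrepancy between the two bounds in the statement.
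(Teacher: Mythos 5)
The paper does not prove this proposition; it cites Proposition~3.3 and Lemma~3.5 of~\cite{RS}. Your outline for the $\cU_2$ part is essentially sound (modulo a sign in the pull-through identity), but the $\cU$ part has a genuine gap in the way you propose to control the cubic term $\cL_3(t)$.

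You claim that, without the kinetic-energy machinery of later sections, the only available control on $\wt V$ in $\cL_3(t)$ is the pointwise bound $|\wt V|\le\alpha_N^{-1}$, and that the resulting $\alpha_N^{-1}/\sqrt N$ factor can be ``balanced'' by the extra powers of $\cN$ on the right-hand side of the estimate. This cannot work. The extra powers in $\|(\cN+1)^{2j+1}\psi\|$ are powers of the number \emph{operator} applied to a fixed $\psi\in\cF$; they carry no factor of $N$ and hence provide no cancellation of $\alpha_N^{-1}/\sqrt N\sim N^{r-1/2}$, which diverges for every $r>1/2$. Moreover, the constants $C_j,K_j$ in the proposition are stated with no reference to $\alpha_N$ (and must be independent of it for the later use in Propositions~\ref{lm:4} and~\ref{prop:comp2}); indeed the result is imported from~\cite{RS}, which treats the unregularized Coulomb interaction, for which no $L^\infty$ bound on $V$ even exists. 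Your premise is also false: the operator inequality $\wt V^2\le D(1-\Delta)$ \emph{is} usable here, without any kinetic-energy control of the Fock-space state, through the convolution bound $\|\wt V^2*|\wt\ph_t|^2\|_\infty\le D\|\wt\ph_t\|_{H^1}^2\le C$, a statement about the one-particle Hartree orbital only. This is exactly how Lemma~\ref{lm:L3} obtains $\|(\cN+1)^k\cL_3(t)\psi\|\le CN^{-1/2}\|(\cN+1)^{k+3/2}\psi\|$ with constants uniform in $N$ and $\alpha_N$, which is the correct input to the (nontrivial, nested) Gronwall argument of \cite{RS}. As a secondary point, your bootstrap ``closes after $j+1$ steps'' is not justified as written: each application of Gronwall pushes the index up by one without supplying a terminating bound, and the base case $j=0$ is norm conservation, which does not feed the chain. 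The mechanism by which the chain actually terminates in~\cite{RS} is more delicate and needs to be spelled out.
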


The proof of this proposition can be found in \cite{RS}. More precisely, the bound for the dynamics $\cU (t;s)$ is given in Proposition 3.3 of \cite{RS}. On the other hand, the bound for the dynamics $\cU_2 (t;s)$ (which is much simpler), can be obtained using arguments very similar to those of Lemma 3.5 of \cite{RS} (where a different cutoffed dynamics is studied).

\section{Comparison of Dynamics}
\label{sec:compare}

The goal of this section is to estimate the difference between the full fluctuation evolution $\cU (t;s)$ and the dynamics $\cU_2 (t;s)$. 
\begin{prop}\label{lm:4}
Suppose that, in the definition (\ref{eq:wtV}) of the regularized potential $\wt V$, the cutoff $\alpha_N$ is such that $\alpha_N \geq N^{-r}$, for some $r \in \bN$. 
Suppose that $\cU (t;s)$ and $\cU_2 (t;s)$ are the unitary evolutions defined in (\ref{eq:U}) and (\ref{eq:U2}), respectively. Then, for any $j \in \bN$, there exist constants $C_{j,r}, K_{j,r} >0$ such that 
 \[ \| (\cN+1)^j \, \left( \cU (t;s) - \cU_2 (t;s)\right) \psi \| \leq \frac{C_{j,r} e^{K_{j,r} |t-s|}}{\sqrt{N}} \, \langle \psi, (\cK + \cN^{4r (2j+3)} + 1) \psi \rangle^{1/2},  \] where $\cK$ is the kinetic energy operator 
 \begin{equation}\label{eq:cK} \cK = d\Gamma (-\Delta) = \int dx \, \nabla_x a_x^* \nabla_x a_x \, . \end{equation}
 \end{prop}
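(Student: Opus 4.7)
I would start from Duhamel's formula comparing the two generators. Since $\cU$ and $\cU_2$ satisfy $i\partial_\tau \cU(\tau;s) = \cL(\tau)\cU(\tau;s)$ and $i\partial_\tau\cU_2(\tau;s) = \cL_2(\tau)\cU_2(\tau;s)$ with the same initial condition at $\tau=s$, we have
\begin{equation*}
\cU(t;s) - \cU_2(t;s) = -i \int_s^t d\tau \, \cU(t;\tau)\bigl(\cL(\tau)-\cL_2(\tau)\bigr)\cU_2(\tau;s),
\end{equation*}
where $\cL - \cL_2 = \cL_3 + \cL_4$ consists of the cubic term (prefactor $1/\sqrt N$) and the quartic term (prefactor $1/N$) in~(\ref{eq:cL}). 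Multiplying by $(\cN+1)^j$, taking the triangle inequality, and using Proposition~\ref{lm:1} to move $(\cN+1)^j$ past $\cU(t;\tau)$ (at the cost of promoting the weight to $(\cN+1)^{2j+1}$), the task reduces to controlling $\|(\cN+1)^{2j+1}\cL_i(\tau)\cU_2(\tau;s)\psi\|$ for $i=3,4$, uniformly in $\tau\in[s,t]$.

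Since $[\cN,\cL_4]=0$ and $\cL_3$ shifts the particle number by $\pm 1$, I can move the weight $(\cN+1)^{2j+1}$ past $\cL_i$ at the cost of a constant shift in the exponent. The core pointwise estimates on $\cF$ that I would prove are
\begin{equation*}
\|\cL_3(\tau)\xi\| \leq \frac{C\,\|\wt\ph_\tau\|_{H^1}}{\sqrt{N}}\,\|(\cK+\cN+1)^{1/2}(\cN+1)\xi\|,
\qquad
\|\cL_4(\tau)\xi\| \leq \frac{C}{N}\,\|(\cK+\cN+1)^{1/2}(\cN+1)^{3/2}\xi\|.
\end{equation*}
Both are obtained by writing the generators in terms of $a_x, a_x^*$, applying Lemma~\ref{lm:0}, and, crucially, \emph{trading the Coulomb singularity for kinetic energy} via the operator inequality $\wt V^2\leq D(1-\Delta)$: for fixed $x$ this yields $\|\wt V(x-\cdot)f\|_{L^2}\leq \sqrt D\,\|f\|_{H^1}$ (used for $\cL_3$ with $f=\wt\ph_\tau$), while for $\cL_4$ one combines Cauchy--Schwarz in $(x,y)$ with the same $L^2$-operator bound to produce the extra factor $(\cK+\cN+1)^{1/2}$. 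The $H^1$-norm of $\wt\ph_\tau$ is bounded uniformly in $\tau$ and $N$, as already used in Lemma~\ref{lm:comp-h}.

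What remains is to propagate the weighted norm $\|(\cK+\cN+1)^{1/2}(\cN+1)^{2j+5/2}\cU_2(\tau;s)\psi\|$ back to the initial state $\psi$. The $(\cN+1)^{2j+5/2}$ part is controlled directly by Proposition~\ref{lm:1}; the difficulty is the kinetic energy weight $(\cK+1)^{1/2}$, because the pairing terms $\int dx\,dy\,\wt V(x-y)\wt\ph_\tau(x)\wt\ph_\tau(y)\,a_x^* a_y^*$ in $\cL_2$ are singular. This is precisely the a-priori estimate that the paper defers to Lemmas~\ref{lm:7} and~\ref{lm:L2}: differentiating $\langle\cU_2\psi,\cK\cU_2\psi\rangle$ in $\tau$ produces commutators of $\cK$ with these singular pairing terms, and bounding them forces one to trade each occurrence of $\wt V$ for $\alpha_N^{-1}\leq N^r$ at the cost of a power of $\cN$. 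Iterating this roughly $2j+3$ times, and then squaring to pass from the $L^2$-norm back to the expectation, yields exactly the exponent $4r(2j+3)$ on the right-hand side of the stated bound. Combining with the Duhamel step and integrating in $\tau\in[s,t]$ concludes the proof.

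The main obstacle, as signalled in the introduction, is this kinetic-energy propagation under $\cU_2$: without the cutoff $\alpha_N$ the commutator $[\cK,\cL_2]$ involves uncontrollable $\nabla\wt V$-type terms, and the delicate bookkeeping of the $\alpha_N$-versus-$\cN$ trade-off is what forces the large exponent $4r(2j+3)$ in the estimate. Every other ingredient---Duhamel, Proposition~\ref{lm:1}, and the standard creation/annihilation bookkeeping---is essentially in the spirit of the arguments in~\cite{RS,CL}.
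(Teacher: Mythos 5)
Your overall skeleton is correct and matches the paper's strategy — Duhamel's formula, Proposition~\ref{lm:1} to move $(\cN+1)^j$ past $\cU(t;\tau)$, the operator inequality $\wt V^2\leq D(1-\Delta)$ to trade singularities for kinetic energy, and Lemmas~\ref{lm:7}, \ref{lm:L2} to propagate the kinetic-energy expectation under $\cU_2$. But the central technical step of the paper's proof is missing from your argument, and the way you have set things up does not close.

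The gap is here: your proposed bound
\[
\|\cL_4\,\xi\| \leq \frac{C}{N}\,\|(\cK+\cN+1)^{1/2}(\cN+1)^{3/2}\xi\|
\]
leads, after inserting $(\cN+1)^{2j+1}$, to the quantity $\|(\cK+\cN+1)^{1/2}(\cN+1)^{2j+5/2}\cU_2(\tau;s)\psi\|$, i.e.\ an expectation $\langle\cU_2\psi,\,(\cN+1)^{p}\cK(\cN+1)^{p}\,\cU_2\psi\rangle$ with $p=2j+5/2$. You then claim to control the $(\cN+1)^{p}$ part ``directly by Proposition~\ref{lm:1}'' and the $\cK$ part by Lemma~\ref{lm:L2} separately. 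This decoupling is not legitimate: the two weights are inserted simultaneously inside the same inner product, and Lemma~\ref{lm:L2} only controls the \emph{unweighted} expectation $\langle\cU_2\psi,\cL_2(\tau)\cU_2\psi\rangle$. There is no propagation estimate in the paper for mixed $\cK$-and-$\cN^{p}$ weights under $\cU_2$, and your ``iterating roughly $2j+3$ times'' does not describe a concrete mechanism for obtaining one. The paper's resolution is to estimate $(\cN+1)^{2j+1}\cL_4^2(\cN+1)^{2j+1}$ restricted to the $n$-particle sector and then split according to whether $n+1\lessgtr N^{1/(4j+5)}$: in the small-$n$ regime the power $(n+1)^{4j+5}\leq N$ is simply absorbed, leaving $\tfrac{1}{N}(\cK+\cN)$ which only needs the unweighted Lemma~\ref{lm:L2}; in the large-$n$ regime one uses $|\wt V|\leq\alpha_N^{-1}$ together with ${\bf 1}(\cN+1\geq N^{1/(4j+5)})\leq N^{-m}(\cN+1)^{m(4j+5)}$, and then chooses $m=2r-1$ to beat $\alpha_N^{-2}\leq N^{2r}$. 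It is this split — not an iteration inside Lemma~\ref{lm:L2} — that produces the exponent $4r(2j+3)$.

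Two related misattributions: Lemma~\ref{lm:L2} itself is $\alpha_N$-free. The Gronwall argument there differentiates $\langle\cU_2\psi,\cL_2(\tau)\cU_2\psi\rangle$ and controls $\dot\cL_2(\tau)$ via integration by parts and the Hartree equation, never invoking the cutoff; the $\nabla\wt V$ terms are handled by integrating by parts, not by $|\wt V|\leq\alpha_N^{-1}$. The $\alpha_N$-dependence enters the proof of Proposition~\ref{lm:4} only through the large-$n$ regime described above. Finally, your claimed $\cL_3$ bound carries an extraneous kinetic-energy factor; the paper's Lemma~\ref{lm:L3} shows $\|(\cN+1)^j\cL_3(t)\psi\|\leq\tfrac{C}{\sqrt N}\|(\cN+1)^{j+3/2}\psi\|$ using only $\cN$-weights, because $\wt V^2\leq D(1-\Delta)$ is applied against $\wt\ph_t$ (yielding $\|\wt\ph_t\|_{H^1}$), not against the Fock-space vector.
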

 
\begin{proof}
We fix $t \geq 0$ and $s =0$ (all other cases can be treated analogously). Using 
\[  \cU (t;0) - \cU_2 (t;0) = \int_0^t d\tau \, \cU (t;\tau) \left( \cL (\tau) - \cL_2 (\tau) \right) \cU_2 (\tau;0) \]
we find that 
\begin{equation}\label{eq:L3+L4} \begin{split} 
\| (\cN+1)^j \, \left( \cU (t;0) - \cU_2 (t;0) \right) \psi \| \leq \; &\int_0^t d\tau \, \| (\cN+1)^j \cU (t;\tau) \left( \cL (\tau) - \cL_2 (\tau) \right) \cU_2 (\tau;0) \psi \| \\ \leq \; &\int_0^t d\tau \, e^{K (t-\tau)} 
\| (\cN+1)^{2j+1} \cL_3 (\tau) \,  \cU_2 (\tau;0) \psi \|  
\\ & + \int_0^t d\tau \, e^{K (t-\tau)} \| (\cN+1)^{2j+1} \cL_4 \, \cU_2 (\tau;0) \psi \|  
\end{split} \end{equation}
where we used Proposition \ref{lm:1} and we wrote $\cL (\tau) -\cL_2 (\tau) = \cL_3 (\tau) + \cL_4 $, with 
\begin{equation}\label{eq:defL3L4}
\begin{split} \cL_3 (t) &= \frac{1}{\sqrt{N}} \int dx dy \, \wt V(x-y) \, a_x^* (a_y^* \wt \ph_t (y) + a_y \overline{\wt \ph}_t (y)) a_x \\   \cL_4 &= \frac{1}{N} \int dx dy \, \wt V(x-y) \, a_x^* a_y^* a_y a_x  \, . \end{split} \end{equation}
Using Lemma \ref{lm:L3} to bound the first term on the r.h.s. of (\ref{eq:L3+L4}), we find 
\begin{equation}\label{eq:NH4N} \begin{split} 
\| (\cN+1)^j \,& \left( \cU (t;0) - \cU_2 (t;0)\right) \psi \|  \\ \leq \; & \frac{C}{\sqrt{N}} e^{K t} \| (\cN+1)^{2j+5/2} \psi \| \\ & + \int_0^t d\tau \, e^{K(t-\tau)} \, \langle \cU_2 (\tau;0) \psi,  (\cN+1)^{2j+1} \cL^2_4 (\cN+1)^{2j+1} \cU_2 (\tau;0) \psi \rangle^{1/2} \, . \end{split} \end{equation}
To bound the second term on the r.h.s. of the last equation we observe that, restricting the operators on the $n$-particle sector $\cF^{(n)}$ of the Fock space, 
\[ \begin{split}
(\cN+1)^{2j+1} \,& \cL_4^2 \, (\cN+1)^{2j+1} |_{\cF^{(n)}} \\ = \; &\frac{(n+1)^{4j+2}}{N^2} \left( \sum_{i<j}^n \wt V (x_i - x_j) \right)^2 \leq \frac{(n+1)^{4j+4}}{N^2} \sum_{i<j}^n \wt V^2 (x_i - x_j) \\ \leq \; & \frac{{\bf 1} (n+1 \leq N^{\frac{1}{4j+5}})}{N^2} (n+1)^{4j+5} \sum_{j=1}^n (-\Delta_{x_j} + 1) + \frac{{\bf 1} (n+1 \geq N^{\frac{1}{4j+5}})}{N^2 \alpha_N^2} (n+1)^{4j+6} \\ \leq \; &
\left[ \frac{1}{N} (\cK + \cN) + \frac{{\bf 1} (\cN +1 \geq N^{\frac{1}{4j+5}})}{N^2 \alpha_N^2} \, (\cN+1)^{4j+6} \right] |_{\cF^{(n)}} \end{split} \]
where we used the bounds $\wt V^2 (x) \leq 1-\Delta$ and $|\wt V(x)| \leq \alpha^{-1}_N$ (see (\ref{eq:wtV})) and $\cK$ is defined in (\ref{eq:cK}). In Lemma \ref{lm:7} below we show that there exists a constant $C >0$ such that, for arbitrary $\tau\in \bR$,  
\[ \cK \leq \cL_2 (\tau) + C ( \cN +1) \,. \] 
Hence 
\[ (\cN+1)^{2j+1} \, \cL_4^2 \, (\cN+1)^{2j+1} \leq \frac{1}{N} (\cL_2 (\tau) + C (\cN+1)) + \frac{{\bf 1} (\cN +1 \geq N^{\frac{1}{4j+5}})}{N^2 \alpha_N^2} (\cN+1)^{4j+6} \]
and therefore
\begin{equation} \label{eq:H2+N} \begin{split} 
\langle \cU_2 (\tau;0) \psi,  (\cN+1)^{2j+1} &\, \cL^2_4 \, (\cN+1)^{2j+1} \cU_2 (\tau;0) \psi \rangle \\ \leq \; & \frac{1}{N} \, \langle \cU_2 (\tau;0) \psi, (\cL_2 (\tau) + C ( \cN +1)) \cU_2 (\tau;0) \psi \rangle \\ &+ \frac{1}{N^2 \alpha_N^2}  \langle \cU_2 (\tau;0) \psi, {\bf 1} (\cN +1 \geq N^{\frac{1}{4j+5}}) (\cN+1)^{4j+6} \cU_2 (\tau;0) \psi \rangle \, . 
\end{split}\end{equation}

Using the bound ${\bf 1} (x \geq 1) \leq x^{m(4j+5)}$, valid for every $m \in \bN$, Proposition \ref{lm:1} and the assumption $\alpha_N \geq N^{-r}$, we can estimate the second term on the r.h.s. of the last equation by  \[ \begin{split} \frac{1}{N^2 \alpha_N^2}  \langle \cU_2 (\tau;0) \psi, &{\bf 1} (\cN +1 \geq N^{\frac{1}{4j+5}}) (\cN+1)^{4j+6} \cU_2 (\tau;0) \psi \rangle \\ 
& \leq \frac{1}{N^2 \alpha_N^2} \frac{1}{N^m} \langle \cU_2 (\tau;0) \psi,  (\cN+1)^{(4j+6)(m+1)} \cU_2 (\tau;0) \psi \rangle \\ & \leq \frac{C}{N^{m+2-2r}} e^{K \tau}  \| (\cN+1)^{(2j+3) (m+1)} \psi \|^2 
\end{split} \]
for appropriate constants $C,K$ (depending on $m$ and $j$). Fixing $m = 2r - 1$, we find
\[  \frac{1}{N^2 \alpha_N^2}  \langle \cU_2 (\tau;0) \psi, {\bf 1} (\cN +1 \geq N^{\frac{1}{4j+5}}) (\cN+1)^{4j+6} \cU_2 (\tau;0) \psi \rangle \leq \frac{C}{N} e^{K \tau} \, \| (\cN+1)^{2r (2j+3)} \psi \|^2 
\,.  \]
To control the first term on the r.h.s. of (\ref{eq:H2+N}) we use Lemma \ref{lm:L2} and Proposition \ref{lm:1} together with the fact that, at $\tau=0$, 
\[ \langle \psi, (\cL_2 (0) + C ( \cN +1) ) \psi \rangle \leq C \langle \psi, (\cK + \cN + 1) \psi \rangle \,. \]
We conclude that there exist constants $C,K >0$, depending on $r$ and $j$, such that 
\[  \langle \cU_2 (\tau;0) \psi,  (\cN+1)^{2j+1} \, \cL^2_4 \, (\cN+1)^{2j+1} \cU_2 (\tau;0) \psi \rangle  \leq \frac{C}{N} e^{K \tau} \, \langle \psi, (\cK + \cN^{4r (2j+3)} + 1) \psi \rangle . \]
Inserting this bound in (\ref{eq:NH4N}), we obtain the desired estimate.
\end{proof}

We also need to bound the difference between the two evolutions $\cU (t;0)$ and $\cU_2 (t;0)$ in the third term on the r.h.s. of (\ref{eq:diff2}). This is the content of the next proposition.

\begin{prop}\label{prop:comp2}
Suppose that, in the definition (\ref{eq:wtV}) of the regularized potential $\wt V$, the cutoff $\alpha_N$ is such that $\alpha_N \geq N^{-r}$, for some $r \in \bN$. Let $\wt{\ph}_t$ be the solution of the regularized Hartree equation (\ref{eq:hartreereg}). Suppose that $\cU (t;s)$ and $\cU_2 (t;s)$ are the unitary evolutions defined in (\ref{eq:U}) and (\ref{eq:U2}), respectively. Suppose that $J$ is a bounded hermitian operator on $L^2 (\bR^3, dx)$. Then, there exist constants $C,K >0$ such that
\[ \left| \left\langle \frac{a^* (\ph)^N}{\sqrt{N!}} \Omega, W(\sqrt{N} \ph) \, (\cU^* (t;0) - \cU^*_2 (t;0) ) \, \phi (J \wt \ph_t) \,\cU_2 (t;0)  \Omega \right\rangle \right| \leq \frac{C \| J \| e^{K |t|}}{d_N \, \sqrt{N}} \] 
where $d_N = \sqrt{N!} / (e^{-N/2} N^{N/2}) \leq C N^{1/4}$ was defined in (\ref{eq:dN}).
\end{prop}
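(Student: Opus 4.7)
My approach is based on Duhamel's formula together with the bound
\[ \left\| (\cN+1)^{-1/2}\, W^* (\sqrt{N}\ph)\, \frac{a^* (\ph)^N}{\sqrt{N!}}\, \Omega \right\| \leq \frac{C}{d_N} \]
provided by Lemma~\ref{lm:2}. Setting $\xi_N := W^* (\sqrt{N}\ph)\, (N!)^{-1/2}\, a^* (\ph)^N\, \Omega$ and using the identity
\[ \cU^* (t;0) - \cU_2^* (t;0) \,=\, i\int_0^t ds\, \cU_2^* (s;0)\,\bigl(\cL_3 (s) + \cL_4\bigr)\,\cU^* (t;s), \]
obtained as the adjoint of a standard Duhamel expansion for $\cU - \cU_2$, the inner product to be estimated becomes $i\int_0^t ds\, \langle \cU_2 (s;0)\xi_N,\, (\cL_3(s)+\cL_4)\,\cU^*(t;s)\phi(J\wt\ph_t)\cU_2(t;0)\Omega\rangle$. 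Cauchy--Schwarz with weights $(\cN+1)^{-1/2}$ on the $\xi_N$ side and $(\cN+1)^{1/2}$ on the other, together with Proposition~\ref{lm:1} (and the duality $\|(\cN+1)^{-1/2}\cU_2(s;0)(\cN+1)^{1/2}\|\leq Ce^{Ks}$ that follows from it), yields
\[ \Bigl| \text{LHS} \Bigr| \leq \int_0^t ds\, \frac{C e^{K s}}{d_N}\, \left\| (\cN+1)^{1/2}\, \bigl(\cL_3 (s) + \cL_4\bigr)\, \cU^* (t;s)\, \phi (J \wt\ph_t)\, \cU_2 (t;0)\,\Omega \right\|, \]
so the task reduces to bounding the operator-norm factor by $C\|J\|e^{Kt}/\sqrt{N}$; this is where the advertised $1/(d_N\sqrt{N})$ emerges, and the remaining work is purely an operator-bound estimate for the integrand.

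The $\cL_3$-contribution is the easier one. The cubic structure of $\cL_3 (s)$, its prefactor $1/\sqrt{N}$, and the uniform bound $\sup_x \|\wt V (x-\cdot)\,\wt\ph_s\|_{L^2}\leq C\|\wt\ph_s\|_{H^1}\leq C$ (coming from~(\ref{eq:wtVcond}) and translation invariance of the $H^1$ norm) yield $\| (\cN+1)^{1/2}\, \cL_3 (s)\, \psi\|\leq (C/\sqrt{N})\|(\cN+1)^{2}\psi\|$. Applied to $\psi = \cU^* (t;s)\phi (J\wt\ph_t) \cU_2 (t;0)\Omega$, the combination of Proposition~\ref{lm:1} (for $\cU^*$), Lemma~\ref{lm:0}, and Proposition~\ref{lm:1} (for $\cU_2$) gives $\|(\cN+1)^{2}\cU^*(t;s)\phi(J\wt\ph_t)\cU_2(t;0)\Omega\|\leq C\|J\|e^{Kt}$, so the $\cL_3$-contribution is $\leq C\|J\|e^{Kt}/\sqrt{N}$. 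Crucially, this never invokes any kinetic-energy or $H^1$ estimate on $J\wt\ph_t$, only $\|J\|$ and $\|\wt\ph_t\|_{L^2}$.

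The $\cL_4$-contribution is the main obstacle. The natural bound $\|(\cN+1)^{1/2}\cL_4\psi\|\leq (C/N)\|(\cN+1)^{2}(\cK+\cN+1)^{1/2}\psi\|$, derived from $\wt V^2\leq D(1-\Delta)$, brings in the kinetic energy $\cK$, but Lemmas~\ref{lm:7} and~\ref{lm:L2} provide a-priori control of $\cK$ only along the quadratic dynamics $\cU_2$, not along the full fluctuation dynamics $\cU$. To deal with this, I adapt the sector-splitting strategy used in Proposition~\ref{lm:4}: for a small $\beta>0$, on sectors with $n+1\leq N^\beta$ one bounds $\cL_4^2$ by a multiple of $N^{3\beta-2}(\cK+\cN+1)$ using $\wt V^2\leq D(1-\Delta)$, while on $n+1> N^\beta$ one uses $|\wt V|\leq \alpha_N^{-1}$ combined with an indicator that is absorbed into arbitrarily high moments of $\cN$. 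On the high sectors, the moments of $\cN$ of $\cU^*(t;s)\phi(J\wt\ph_t)\cU_2(t;0)\Omega$ are polynomially bounded in $N$ by Proposition~\ref{lm:1} and Lemma~\ref{lm:0}, so ${\bf 1}(\cN+1>N^\beta)$ combined with the hypothesis $\alpha_N\geq N^{-r}$ produces any desired negative power of $N$. On the low sectors, I perform a further Duhamel expansion $\cU^*(t;s) = \cU_2^*(t;s) + i\int_s^t du\, \cU_2^*(u;s)(\cL_3(u)+\cL_4)\cU^*(t;u)$: the $\cU_2^*(t;s)$ piece brings the problem back to a $\cU_2$ evolution where Lemmas~\ref{lm:7} and~\ref{lm:L2} control $\cK$, while the remainder is quadratic in $(\cL_3+\cL_4)$ and thus carries an extra $1/\sqrt{N}$ that compensates the crude a-posteriori bound one has to accept for the residual $\cU^*$. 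Choosing $\beta$ sufficiently small relative to $r$, the $\cL_4$-contribution is also $\leq C\|J\|e^{Kt}/\sqrt{N}$, and integrating over $s\in[0,t]$ yields the claim.
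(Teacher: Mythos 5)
The structural setup — Duhamel expansion, Cauchy--Schwarz against $(\cN+1)^{\pm 1/2}$ with Lemma~\ref{lm:2} producing the factor $d_N^{-1}$, and the split into $\cL_3$ and $\cL_4$ contributions — follows the paper, and your treatment of the $\cL_3$ term is correct and essentially identical (Lemma~\ref{lm:L3} applied to $\cU^* \phi(J\wt\ph_t)\cU_2(t;0)\Omega$, controlled by Proposition~\ref{lm:1} and Lemma~\ref{lm:0}, never touching any $H^1$ norm of $J\wt\ph_t$).

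Your $\cL_4$ argument, however, has a genuine gap — in fact two interrelated ones. First, the ``further Duhamel expansion'' of $\cU^*(t;s)$ does not supply the extra $1/\sqrt{N}$ you claim. The remainder is quadratic in $(\cL_3+\cL_4)$, but $\cL_4$ by itself carries \emph{no} $N$-decay: its only workable bounds are $|\cL_4| \lesssim N^{-1}\alpha_N^{-1}\cN^2$, which with $\alpha_N \geq N^{-r}$ can be as large as $N^{r-1}\cN^2$, or a kinetic-energy bound — precisely the thing you were trying to avoid. A factor of $\cL_4$ in the remainder therefore leaves you exactly where you started, needing $\cK$-control along $\cU^*$; iterating Duhamel does not close the loop. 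Second, even the ``main piece'' $\cU_2^*(t;s)\phi(J\wt\ph_t)\cU_2(t;0)\Omega$ of your expansion requires, via Lemmas~\ref{lm:7} and~\ref{lm:L2}, a bound on $\langle \phi(J\wt\ph_t)\cU_2(t;0)\Omega,\, \cK\, \phi(J\wt\ph_t)\cU_2(t;0)\Omega\rangle$. Since $[\cK, a^*(f)] = a^*(-\Delta f)$, this demands $J\wt\ph_t \in H^1$, but $J$ is only assumed bounded on $L^2$; you yourself point out (for the $\cL_3$ term) that no $H^1$ estimate on $J\wt\ph_t$ is available, and the same caveat sinks your $\cL_4$ strategy.

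The paper's resolution is a move you do not make. Using self-adjointness of $\cL_4$, it flips $\cL_4$ to the \emph{other} side of the inner product before Cauchy--Schwarz: one side then carries $(\cN+1)^{\kappa}\,\cU^*(s;0)\phi(J\wt\ph_t)\cU_2(t;0)\Omega$, which only needs $\cN$-moments (handled by Proposition~\ref{lm:1} and Lemma~\ref{lm:0}, producing the factor $\|J\|$), while the other side is $(\cN+1)^{-\kappa}\cL_4\,\cU_2(t;s)W^*(\sqrt{N}\ph)\frac{a^*(\ph)^N}{\sqrt{N!}}\Omega$ — i.e.\ $\cL_4$ applied \emph{after a $\cU_2$ evolution} of a fully explicit vector that decomposes as $\sum_m \cA_m a^*(\ph)^m\Omega/\sqrt{m!}$. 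The kinetic energy there is controlled by Lemmas~\ref{lm:7} and~\ref{lm:L2} with no reference to $J$, and the $(\cN+1)^{-\kappa}$ weight together with sector splitting, $\alpha_N \geq N^{-r}$, and $d_N \lesssim N^{1/4}$ delivers the $1/(d_N\sqrt{N})$. Your Duhamel form would in fact allow exactly this move (write $\langle \cU_2(s;0)\xi_N, \cL_4\,\cdots\rangle = \langle \cL_4\,\cU_2(s;0)\xi_N, \cdots\rangle$), but you apply $\cL_4$ to the $\cU^*$-side instead, which is why you get stuck.
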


\begin{proof} 
We fix $t \geq 0$ and we write
\[ \begin{split}
\Big\langle \frac{a^* (\ph)^N}{\sqrt{N!}} \Omega, & W(\sqrt{N} \ph)  \, \left(\cU^* (t;0) - \cU^*_2 (t;0) \right)  \phi (J \wt \ph_t) \cU_2 (t;0) \Omega \Big\rangle \\ = \; & \int_0^t ds \, 
\Big\langle \frac{a^* (\ph)^N}{\sqrt{N!}} \Omega, W(\sqrt{N} \ph) \, \cU_2^* (t;s) \left( \cL_3 (s) + \cL_4 \right) \cU^* (s;0)   \phi (J \wt \ph_t) \cU_2 (t;0) \Omega \Big\rangle
\end{split} \]
where $\cL_3 (s)$ and $\cL_4$ are defined in (\ref{eq:defL3L4}). Taking the absolute value, we find
\begin{equation}\label{eq:AB} \begin{split} 
\Big| \Big\langle \frac{a^* (\ph)^N}{\sqrt{N!}} \Omega, W& (\sqrt{N} \ph) \, (\cU^* (t;0) - \cU^*_2 (t;0) ) \, \phi (J \wt \ph_t) \,\cU_2 (t;0)  \Omega \Big\rangle \Big|
 \\ \leq \; & \int_0^t ds \, \left\| (\cN+1)^{-1/2} \, W^* (\sqrt{N} \ph) \frac{a^* (\ph)^N}{\sqrt{N!}} \Omega \right\| \\ &\hspace{3cm} \times  \left\| (\cN+1)^{1/2} \, \cU^*_2 (t;s) \cL_3 (s) \cU^* (s;0) \phi (J \wt \ph_t) \cU_2 (t;0) \Omega \right\| \\
&+ \int_0^t ds \, \left\|  (\cN+1)^{-\kappa}  \, \cL_4 \, \cU_2 (t;s) W^* (\sqrt{N} \ph) 
\frac{a^* (\ph)^N}{\sqrt{N!}} \Omega \right\| \\ &\hspace{3cm} \times \Big\| (\cN+1)^\kappa \cU^* (s;0) \phi (J \wt \ph_t) \cU_2 (t;0) \Omega \Big\| \\
=: \, & \text{A} + \text{B} \,  
\end{split} \end{equation}
where the parameter $\kappa >0$ will be fixed later on. To bound the term $\text{A}$ we note that, by Lemma \ref{lm:2}, Lemma \ref{lm:1}, Lemma \ref{lm:L3}, we have
\[ \begin{split} 
\text{A} \leq \; &\frac{C}{d_N} \int_0^t ds \, e^{\wt{K} (t-s)} \left\| (\cN+1)^{1/2} \cL_3 (s) \cU^* (s;0) \phi (J \wt \ph_t) \cU_2 (t;0) \Omega \right\| \\ \leq \; &\frac{C}{\sqrt{N} d_N} \int_0^t ds \, e^{\wt{K} (t-s)} \left\| (\cN+1)^{2} \, \cU^* (s;0) \phi (J \wt \ph_t) \cU_2 (t;0) \Omega \right\| \\ \leq \; &\frac{C}{\sqrt{N} d_N} \int_0^t ds \, e^{\wt{K} t} \left\| (\cN+1)^{5} \phi (J \wt \ph_t) \cU_2 (t;0) \Omega \right\| \,.
\end{split} \]
Writing $(\cN+1)^5 \, \phi (J \wt \ph_t) = a^* (J \wt \ph_t) (\cN+2)^5 + a (J \wt \ph_t) \cN^5$, and using Lemma \ref{lm:0}, we conclude (again by Lemma \ref{lm:1}) that
\begin{equation}\label{eq:A-pr} \begin{split} 
 \text{A} \leq \; & \frac{C \| J \|}{\sqrt{N} d_N} \int_0^t ds \, e^{\wt{K} t} \left\| (\cN+1)^{11/2} \,\cU_2 (t;0) \Omega \right\|  \leq  \frac{C \| J \| e^{K t}}{\sqrt{N} d_N}
\end{split} \end{equation}
for an appropriate constant $K >0$. 

Next, we estimate the term $\text{B}$ on the r.h.s. of (\ref{eq:AB}). On the one hand, we have 
\begin{equation}\label{eq:B0} \begin{split} \left\| (\cN+1)^\kappa \cU^* (s;0) \phi (J \wt \ph_t) \cU_2 (t;0) \Omega \right\| \leq \; &C e^{\wt{K} s} \left\| (\cN+1)^{2\kappa+1} \phi (J \wt \ph_t) \cU_2 (t;0) \Omega \right\| \\\leq \; &C e^{\wt{K} s} \| J \| \, \left\| (\cN+1)^{2\kappa +3/2}  \, \cU_2 (t;0) \Omega \right\| \leq C  \| J \| \, e^{\wt{K} (s+t)} \end{split} \end{equation}
where the constants $C, \wt{K}$ depend on $\kappa >0$. To bound the other norm in the term $\text{B}$, we write
\[ W^* (\sqrt{N} \ph ) \frac{a^* (\ph)^N}{\sqrt{N!}} \Omega = \sum_{m\geq 0} \cA_m \, \frac{a^* (\ph)^m}{\sqrt{m!}} \Omega\,. \] 
Note that, by the unitarity of the Weyl operators (recall that $a^* (\ph)^{m} \Omega = \sqrt{m!} \, \{ 0, \dots, 0 , \ph^{\otimes m}, 0, \dots \}$)
\[ \sum_{m \geq 1} |\cA_m|^2  = 1 \, . \]
Hence
\begin{equation}\label{eq:B2} \begin{split} \Big\| (\cN+1)^{-\kappa} \, \cL_4 \, &\cU_2 (t;s) W^* (\sqrt{N} \ph) 
\frac{a^* (\ph)^N}{\sqrt{N!}} \Omega \Big\|^2 \\ = \; & \sum_{m,\ell \geq 0} \cA_m \, \overline{\cA}_\ell \, \left\langle \frac{a^* (\ph)^\ell}{\sqrt{\ell !}}\Omega, \cU_2^* (t;s) \,  (\cN+1)^{-\kappa} \, \cL_4^2  \,  (\cN+1)^{-\kappa}\, \cU_2 (t;s) \frac{a^* (\ph)^m}{\sqrt{m!}} \Omega \right\rangle \\ \leq \; & \sum_{m,\ell \geq 0} |\cA_m|^2 \left\langle \frac{a^* (\ph)^\ell}{\sqrt{\ell !}} \Omega, \cU_2^* (t;s) \,  (\cN+1)^{-\kappa}\,  \cL_4^2  \,  (\cN+1)^{-\kappa}\,\cU_2 (t;s) \frac{a^* (\ph)^\ell}{\sqrt{\ell !}} \Omega \right\rangle \\ \leq \; & \frac{1}{N^2 \alpha_N^2} \sum_{\ell \geq N^{\delta}} \left\| (\cN+1)^{-\kappa+2} \,  \cU_2 (t;s) \frac{a^* (\ph)^\ell}{\sqrt{\ell !}} \Omega \right\|^2 \\ &+ \sum_{\ell < N^{\delta}} 
 \left\langle \frac{a^* (\ph)^\ell}{\sqrt{\ell !}} \Omega , \cU_2^* (t;s) \, \cL_4^2  \, \cU_2 (t;s) \frac{a^* (\ph)^\ell}{\sqrt{\ell !}}\Omega \right\rangle \end{split} \end{equation}
where $0 < \delta < 1$ will be fixed later on. In the regime $\ell \geq N^{\delta}$, we estimated \[ \cL_4^2 \leq \frac{1}{N^2 \alpha_N^2} (\cN+1)^4 \, . \] In the regime $\ell < N^\delta$, on the other hand, we used the fact that $\cL_4$ commutes with $\cN$, and that $(\cN+1)^{-2\kappa} \leq 1$. By Lemma \ref{lm:1}, we have
\[ \begin{split} \left\|  (\cN+1)^{-\kappa+2}  \, \cU_2 (t;s) \frac{a^* (\ph)^\ell}{\sqrt{\ell !}} \Omega \right\| \leq \; & \left\|  (\cN+1)^{-\kappa+2}  \, \cU_2 (t;s)  (\cN+1)^{\kappa-2} \right\| \,  \left\| (\cN+1)^{-\kappa+2}  \, \frac{a^* (\ph)^\ell}{\sqrt{\ell !}} \Omega \right\|
\\ \leq\; & C e^{\wt{K} (t-s)/2} \frac{1}{(\ell+1)^{\kappa-2}}  \end{split} \]
where the constants $C,\wt{K}$ depend on $\kappa$. Therefore,
\begin{equation}\label{eq:B21} \frac{1}{N^2 \alpha_N^2} \sum_{\ell \geq N^{\delta}} \left\| (\cN+1)^{-\kappa+2} \,  \cU_2 (t;s) \frac{a^* (\ph)^\ell}{\sqrt{\ell !}} \Omega \right\|^2 \leq \frac{C e^{\wt{K} (t-s)}}{N^2 \alpha_N^2} \sum_{\ell \geq N^\delta} \frac{1}{(\ell+1)^{2\kappa-4}} \leq \frac{C e^{\wt{K} (t-s)}}{N^{2+\delta (2\kappa-6)} \alpha_N^2} \,.\end{equation}

To bound the second term on the r.h.s. of (\ref{eq:B2}), we observe that, on the $n$-particle sector,
\[ \begin{split} \cL_4^2 |_{\cF_n} = \; & \frac{1}{N^2} \left(\sum_{i<j}^n \wt{V} (x_i - x_j) \right)^2 \leq   \frac{n^2}{N^2} \sum_{i<j}^n \wt{V}^2 (x_i -x_j) \\ \leq \; &
\frac{n^3 {\bf 1} (n \leq N^{1/12})}{N^2} \sum_{j=1}^n (1-\Delta_{x_j}) + \frac{n^4 {\bf 1} (n \geq N^{1/12})}{N^2 \alpha_N^2} \,.
\end{split} \]
This implies that 
\[  \cL_4^2 \leq  \frac{1}{N^{7/4}} (\cK+\cN) + \frac{\cN^4 {\bf 1} (\cN \geq N^{1/12})}{N^2 \alpha_N^2} \, , \]
where $\cK$ is the kinetic energy operator defined in (\ref{eq:cK}). We find
\[ \begin{split}
\sum_{\ell < N^\delta} \Big\langle  \frac{a^* (\ph)^\ell}{\sqrt{\ell !}} \Omega, \, & \cU_2^* (t;s) \, \cL_4^2 \, \cU_2 (t;s)  \frac{a^* (\ph)^\ell}{\sqrt{\ell !}} \Omega \Big\rangle \\ \leq \; & \frac{1}{N^{7/4}} \sum_{\ell < N^\delta} 
 \Big\langle  \frac{a^* (\ph)^\ell}{\sqrt{\ell !}} \Omega, \cU_2^* (t;s) \, (\cK+\cN)  \, \cU_2 (t;s) \frac{a^* (\ph)^\ell}{\sqrt{\ell !}} \Omega \Big\rangle \\ &+ \frac{1}{N^{2+p/12} \alpha_N^2} \sum_{\ell < N^\delta}  \Big\langle  \frac{a^* (\ph)^\ell}{\sqrt{\ell !}} \Omega , \cU_2^* (t;s) \, \cN^{4+p}  \, \cU_2 (t;s) \frac{a^* (\ph)^\ell}{\sqrt{\ell !}} \Omega \Big\rangle \, , \end{split} \]
for arbitrary $p \geq 0$ (we use here the fact that ${\bf 1} (\cN \geq N^{1/12}) \leq N^{-p/12} \, \cN^p$, for any $p \geq 0$). Combining Lemma \ref{lm:7}, Lemma \ref{lm:L2} and Lemma \ref{lm:1}, we conclude that
\[  \begin{split}
\sum_{\ell < N^\delta}  \Big\langle  &\frac{a^* (\ph)^\ell}{\sqrt{\ell !}} \Omega,  \, \cU_2^* (t;s) \, \cL_4^2 \, \cU_2 (t;s) \frac{a^* (\ph)^\ell}{\sqrt{\ell !}} \Omega \Big\rangle  \\  \leq \; & \frac{C e^{\wt{K} (t-s)}}{N^{7/4}} \sum_{\ell < N^\delta}  \Big\langle  \frac{a^* (\ph)^\ell}{\sqrt{\ell !}} \Omega , (\cK+\cN)  \, \frac{a^* (\ph)^\ell}{\sqrt{\ell !}} \Omega \Big\rangle 
+ \frac{C e^{\wt{K} (t-s)}}{N^{2+p/12} \alpha_N^2} \sum_{\ell < N^\delta}  \Big\langle  \frac{a^* (\ph)^\ell}{\sqrt{\ell !}} \Omega,  \, \cN^{4+p}  \,\frac{a^* (\ph)^\ell}{\sqrt{\ell !}} \Omega \Big\rangle \\
 \leq \; & \frac{C e^{\wt{K} (t-s)}}{N^{7/4}} \| \wt{\ph}_t \|_{H^1} \sum_{\ell < N^\delta}  \ell + \frac{C e^{\wt{K} (t-s)}}{N^{2+p/12-(5+p)\delta} \alpha_N^2} \, \sum_{\ell < N^\delta} \ell^{4+p} \\
 \leq \; & \frac{C e^{\wt{K} (t-s)}}{N^{7/4-2\delta}}  + \frac{C e^{\wt{K} (t-s)}}{N^{2+p/12-(5+p)\delta} \alpha_N^2} \, ,
\end{split} \] 
where we used that $\| \wt \ph_t \|_{H^1}$ remains uniformly bounded in $t \in \bR$ (by a constant depending only on $\| \ph \|_{H^1}$). Together with (\ref{eq:B21}), we obtain, from (\ref{eq:B2}), 
\[ \begin{split}  \Big\| (\cN+1)^{-\kappa} \, \cL_4 \, & \cU_2 (t;s) W^* (\sqrt{N} \ph) 
\frac{a^* (\ph)^N}{\sqrt{N!}} \Omega \Big\|^2 \\ \leq \; &C e^{\wt{K} (t-s)} \left(\frac{1}{N^{2+\delta (2\kappa-6)} \alpha_N^2} +  \frac{1}{N^{7/4-2\delta}}  + \frac{1}{N^{2+p/12-(5+p)\delta} \alpha_N^2} \right)\,.  \end{split} \]
We fix $\delta < 1/12$. Moreover, we choose $\kappa >0$ so large that $N^{2+ \delta (2\kappa - 6)} \alpha_N^2 \geq N^{3/2}$, and $p >0$ so large that $N^{1+p (1/12 - \delta)} \alpha_N^2 \geq N^{3/2}$. Then, together with (\ref{eq:B0}), we find that the term $\text{B}$ on the r.h.s. of (\ref{eq:AB}) can be bounded by 
\[ \text{B} \leq  \frac{C \| J \| e^{K t}}{d_N \sqrt{N}} \]
because $d_N \leq C N^{1/4}$ (recall that $d_N = \sqrt{N!} / (e^{-N/2} N^{N/2})$). Together with (\ref{eq:A-pr}), this completes the proof of the proposition. 
\end{proof}

The next lemma is used to bound the kinetic energy by the generator $\cL_2 (t)$ of the dynamics $\cU_2 (t;s)$. 
\begin{lem} \label{lm:7}
Let $\cL_2 (t)$ be as defined in (\ref{eq:L2}) and let $\cK$ be the kinetic energy operator 
\begin{equation}
\cK = \int dx \; \nabla_x a_x^* \, \nabla_x a_x.
\end{equation}
Then there exists a constant $C>0$ such that the operator inequalities
\begin{equation}
-C (\N+1) \leq \cL_2(t) - \cK \leq C (\N+1)
\end{equation}
hold true for all $t \in \bR$.
\end{lem}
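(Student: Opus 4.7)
The plan is to decompose $\cL_2(t) - \cK$ into the four non-kinetic summands appearing in \eqref{eq:L2} and bound each as a hermitian operator by $\pm C(\cN+1)$, uniformly in $t \in \bR$. Two ingredients feature throughout: the operator bound $\wt V^2 \le D(1-\Delta)$ from \eqref{eq:wtVcond}, and the fact that $\|\wt\ph_t\|_{H^1}$ is uniformly bounded by a constant depending only on $\|\ph\|_{H^1}$, as established in the proof of Lemma \ref{lm:comp-h}.

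For the two number-preserving summands the argument is direct. The density term $\int dx\,(\wt V*|\wt\ph_t|^2)(x)\,a_x^* a_x$ equals $d\Gamma(M_{f_t})$, where $M_{f_t}$ denotes multiplication by $f_t = \wt V*|\wt\ph_t|^2$. By Cauchy--Schwarz and \eqref{eq:wtVcond},
\[|f_t(x)|^2 \le \int |\wt\ph_t(y)|^2 dy \cdot \int \wt V^2(x-y)|\wt\ph_t(y)|^2 dy \le D\,\|\wt\ph_t\|_{H^1}^2 \le C,\]
so $\|M_{f_t}\|\le C$ and hence $\pm d\Gamma(M_{f_t}) \le \|M_{f_t}\|\,\cN \le C\cN$ (on $\cF^{(n)}$, $d\Gamma(M_{f_t})=\sum_j (M_{f_t})_j \le n\|M_{f_t}\|$). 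The exchange term is $d\Gamma(K_t)$ with hermitian kernel $K_t(x;y)=\wt V(x-y)\wt\ph_t(x)\overline{\wt\ph_t(y)}$; an analogous Cauchy--Schwarz estimate
\[\|K_t f\|^2 \le \int |\wt\ph_t(x)|^2 dx \cdot \sup_x \int \wt V^2(x-y)|\wt\ph_t(y)|^2 dy \cdot \|f\|^2 \le D\|\wt\ph_t\|_{H^1}^2 \|f\|^2\]
gives $\|K_t\|\le C$, so again $\pm d\Gamma(K_t) \le \|K_t\|\,\cN \le C\cN$.

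The main obstacle is the pair of off-diagonal (Bogoliubov) summands $B+B^*$ with $B=\int dxdy\,\wt V(x-y)\wt\ph_t(x)\wt\ph_t(y)\,a_x^* a_y^*$, which do not commute with $\cN$ and cannot be written as $d\Gamma(\cdot)$. My plan is to bound the expectation of $B$ directly: introducing $h_x(y)=\wt V(x-y)\wt\ph_t(y)$ and shifting one $a^*$ to the left via $\langle\psi,a_x^* Z\rangle=\langle a_x\psi,Z\rangle$,
\[\langle\psi,B\psi\rangle=\int dx\,\wt\ph_t(x)\,\langle a_x\psi,\,a^*(h_x)\psi\rangle.\]
Invoking Lemma \ref{lm:0}, $\|a^*(h_x)\psi\|\le\|h_x\|\,\|(\cN+1)^{1/2}\psi\|$, and noting by \eqref{eq:wtVcond} that $\|h_x\|^2 = \int \wt V^2(x-y)|\wt\ph_t(y)|^2 dy \le D\|\wt\ph_t\|_{H^1}^2\le C$ uniformly in $x$, a pointwise Cauchy--Schwarz followed by Cauchy--Schwarz in $x$ with $\int\|a_x\psi\|^2 dx=\langle\psi,\cN\psi\rangle$ yields
\[|\langle\psi,B\psi\rangle| \le C\,\|\wt\ph_t\|\,\|(\cN+1)^{1/2}\psi\|\,\|\cN^{1/2}\psi\| \le C\,\langle\psi,(\cN+1)\psi\rangle.\]
Since $B+B^*$ is hermitian this promotes to the operator bound $\pm(B+B^*)\le 2C(\cN+1)$. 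Summing the four contributions produces the two-sided operator inequality claimed in the lemma.
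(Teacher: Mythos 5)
Your proof is correct and follows essentially the same route as the paper: reduce the operator inequality to a quadratic-form estimate, bound the density and exchange terms via $\sup$-norm/operator-norm estimates coming from $\wt V^2\le D(1-\Delta)$ and the uniform $H^1$ bound on $\wt\ph_t$, and bound the pairing term by peeling off one creation operator, invoking Lemma \ref{lm:0}, and applying Cauchy--Schwarz with $\int dx\,\|a_x\psi\|^2=\langle\psi,\cN\psi\rangle$. The only cosmetic difference is that you treat the exchange term abstractly as $d\Gamma(K_t)$ with $\|K_t\|$ estimated by a Schur-type argument, whereas the paper writes out the same estimate directly; this changes nothing of substance.
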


\begin{proof}
It suffices to show that there exists a constant $C$ such that
\begin{equation} \label{H2-H_kin estimate}
\left| \langle \psi, (\cL_2 (t) - \cK) \psi \rangle \right| \leq C \langle \psi, (\N+1) \psi \rangle
\end{equation}
for all $\psi \in \F$. By definition, we have that
\begin{equation} \label{H2-H_kin} \begin{split}
\langle \psi, (\cL_2 (t) - \cK) \psi \rangle &= \int dx (\widetilde{V}*|\wt \ph_t|^2)(x) \langle a_x \psi, a_x \psi \rangle + \int dx dy \; \widetilde{V}(x-y) \overline{\wt \ph}_t (x) \wt \ph_t (y) \langle a_y \psi, a_x \psi \rangle \\
& \qquad + 2 \text{Re } \int dx dy \; \widetilde{V}(x-y) \wt \ph_t (x) \wt \ph_t (y) \langle a_y \psi, a_x^{*} \psi \rangle \, . 
\end{split} \end{equation}
The first term on the r.h.s. of the last equation can be estimated by
\begin{equation}
\left| \int dx (\widetilde{V}*|\wt \ph_t|^2)(x) \langle a_x \psi, a_x \psi \rangle \right| \leq \| \widetilde{V}* |\wt \ph_t|^2 \|_{\infty} \int dx \| a_x \psi \|^2 \leq C \langle \psi, \N \psi \rangle.
\end{equation}
The second term on the r.h.s. of (\ref{H2-H_kin}) can be handled similarly:
\begin{equation} \begin{split}
\left| \int dx dy \; \widetilde{V}(x-y) \overline{\wt \ph}_t (x) \wt \ph_t (y) \langle a_y \psi, a_x \psi \rangle \right| & \leq \int dx dy \; |\widetilde{V}(x-y)| |\wt \ph(y)|^2 \| a_x \psi \|^2 \\
&\leq \| |\widetilde{V}| * |\wt \ph_t|^2 \|_{\infty} \int dx \| a_x \psi \|^2 \leq C \langle \psi, \N \psi \rangle.
\end{split} \end{equation}
Finally, the last term on the r.h.s. of (\ref{H2-H_kin}) is bounded by 
\begin{equation} \begin{split}
 \left| \int dx dy \; \widetilde{V}(x-y) \wt \ph_t (x) \wt \ph_t (y) \langle a_y \psi, a_x^{*} \psi \rangle \right| &= \left| \int dy \,  \wt \ph_t (y) \, \langle a_y \psi, a^* (\wt{V} (y-.)\wt \ph_t) \psi \rangle \right| \\
&\leq \int dy \| a_y \psi \|^2 + \int dy \; |\wt \ph_t (y)|^2  \| a^* (\widetilde{V}(y-.) \wt \ph_t)\psi \|^2 \\ &\leq C \left(1 + \sup_y \, \| \wt{V} (y-.) \wt \ph_t \| \right) \, \langle \psi, (\N+1) \psi \rangle \\ &\leq C \langle \psi, (\N+1) \psi \rangle , 
\end{split} \end{equation}
where we used Lemma \ref{lm:0}.
\end{proof}

After controlling the kinetic energy with the expectation of the generator $\cL_2 (t)$, we have to show that this expectation remains bounded in time. This is the content of the next lemma.
\begin{lem} \label{lm:L2}
Let $\cU_2 (t;s)$ be the evolution defined in (\ref{eq:U2}), with generator $\cL_2 (t)$. 
There exist constants $C$ and $K$ such that
\begin{equation}
\left| \langle \U_2 (t;s) \psi, \cL_2 (t) \U_2 (t;s) \psi \rangle \right| \leq C e^{K|t-s|} \langle \psi, (\cL_2 (s) + \cN +1 ) \psi \rangle
\end{equation}
for all $\psi \in \F$ and all $t,s \in \bR$.
\end{lem}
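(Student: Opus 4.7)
The plan is to differentiate $f(t) := \langle \cU_2(t;s)\psi, \cL_2(t)\cU_2(t;s)\psi\rangle$ in $t$, establish a quadratic-form estimate
\[ \pm\, \dot{\cL}_2(t) \leq C\,(\cL_2(t) + \cN + 1) \quad \text{on } \cF, \]
and then close the resulting differential inequality via Gronwall. Since $\cU_2(t;s)$ is unitary with generator $\cL_2(t)$, the commutator $[\cL_2(t),\cL_2(t)]=0$ cancels, leaving
\[ f'(t) = \langle \cU_2(t;s)\psi, \dot{\cL}_2(t)\,\cU_2(t;s)\psi\rangle, \]
where $\dot{\cL}_2(t)$ collects the contributions from the time-dependent kernels in (\ref{eq:L2}). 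Substituting the regularized Hartree equation $\partial_t\wt\ph_t = i\Delta\wt\ph_t - i(\wt V * |\wt\ph_t|^2)\wt\ph_t$ gives an explicit expression for $\dot{\cL}_2(t)$ as a sum of second-quantized operators quadratic in $a^\sharp$.

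Writing out $\dot{\cL}_2(t)$ produces two types of terms. Those in which $\partial_t\wt\ph_t$ enters only through the bounded multiplier $\wt V*|\wt\ph_t|^2$ are controlled exactly as in the proof of Lemma~\ref{lm:7}, using the uniform bound $\|\wt\ph_t\|_{H^1}\leq C$ guaranteed by Lemma~\ref{lm:comp-h}, the condition $\wt V^2\leq D(1-\Delta)$, and Lemma~\ref{lm:0}; these contributions are $(\cN+1)$-bounded. The main obstacle is the second class: terms in which $\Delta\wt\ph_t$ appears, e.g.\ pieces of the form
\[ \int dxdy\,\wt V(x-y)(\Delta\wt\ph_t)(x)\wt\ph_t(y)\, a^\sharp_x a^\sharp_y. \]
Since $\wt\ph_t$ is only known to lie in $H^1$, one cannot treat $\Delta\wt\ph_t$ as an $L^2$ function. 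The remedy is to integrate by parts in $x$ and distribute the gradient between the second-quantized operators and the remaining $\wt\ph_t(x)$, producing expressions of the schematic form
\[ \int dxdy\,\wt V(x-y)(\nabla\wt\ph_t)(x)\wt\ph_t(y)\, (\nabla_x a^\sharp_x) a^\sharp_y, \]
plus lower-order pieces. These are then controlled by Cauchy--Schwarz: the $\nabla a^\sharp_x$ factor produces $\langle \chi,\cK\chi\rangle^{1/2}$, which by Lemma~\ref{lm:7} is $\leq (\langle\chi,\cL_2(t)\chi\rangle+C\langle\chi,(\cN+1)\chi\rangle)^{1/2}$, while the remaining kernel is bounded in $L^2(dxdy)$ by $\sqrt{D}\,\|\wt\ph_t\|_{H^1}^2$ using the multiplier bound $\wt V^2 \leq D(1-\Delta)$, and the leftover $a^\sharp_y$ is controlled by $(\cN+1)^{1/2}$ via Lemma~\ref{lm:0}. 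Summing the contributions yields the claimed form estimate.

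Once the form bound is in hand, $|f'(t)| \leq C|f(t)| + C\,\langle \cU_2(t;s)\psi,(\cN+1)\cU_2(t;s)\psi\rangle$. The growth of the number operator along $\cU_2$ is handled by an analogous but strictly simpler argument: only the pairing terms of $\cL_2(t)$ fail to commute with $\cN$, and they are directly $(\cN+1)$-bounded (as in Lemma~\ref{lm:7}), so Gronwall gives $\langle \cU_2(t;s)\psi,(\cN+1)\cU_2(t;s)\psi\rangle \leq Ce^{K|t-s|}\langle\psi,(\cN+1)\psi\rangle$. Inserting this estimate and applying Gronwall one more time yields
\[ |f(t)| \leq C e^{K|t-s|}\big(|\langle\psi,\cL_2(s)\psi\rangle| + \langle\psi,(\cN+1)\psi\rangle\big) \leq C e^{K|t-s|}\langle\psi,(\cL_2(s)+\cN+1)\psi\rangle, \]
where the last inequality uses the lower bound $\cL_2(s)\geq -C(\cN+1)$ from Lemma~\ref{lm:7} to replace $|\langle\psi,\cL_2(s)\psi\rangle|$ by $\langle\psi,\cL_2(s)\psi\rangle + 2C\langle\psi,(\cN+1)\psi\rangle$. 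The hardest part of the whole argument is the integration-by-parts manipulation used to control the $\Delta\wt\ph_t$ terms in $\dot{\cL}_2(t)$, since a naive treatment of $\partial_t\wt\ph_t$ as an $L^2$ function would require $\wt\ph_t\in H^2$, which is not available.
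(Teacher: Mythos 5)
Your proposal is correct and follows essentially the same route as the paper: differentiate $\langle\cU_2(t;s)\psi,\cL_2(t)\cU_2(t;s)\psi\rangle$, substitute the regularized Hartree equation into $\dot{\cL}_2(t)$, and handle the troublesome $\Delta\wt\ph_t$ terms by integration by parts so that only $\nabla\wt\ph_t\in L^2$ appears in the kernels while the derivatives that land on the state are absorbed into $\cK$, which is then controlled by $\cL_2(t)+C(\cN+1)$ via Lemma~\ref{lm:7}; the growth of $\cN$ along $\cU_2$ is Proposition~\ref{lm:1}, and Gronwall closes the argument. You correctly identified that this integration by parts is the crux, since $\wt\ph_t\in H^2$ is not available.
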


\begin{proof}
We use the shorthand notation $\psi_2 = \U_2 (t;s) \psi$. To control $\langle \psi_2, \cL_2 (t) \psi_2 \rangle$, we first observe that
\begin{equation}
\frac{d}{dt} \langle \psi_2, \cL_2 (t) \psi_2 \rangle = \langle \psi_2, \dot{\cL}_2 (t) \psi_2 \rangle
\end{equation}
with the time-derivative 
\begin{equation} \label{eq:L2dot} \begin{split}
\dot{\cL}_2 (t) = \; &\int dx dy \widetilde{V}(x-y) \left( \overline{\wt \ph}_t (y) \dot{\wt \ph}_t (y)  + \dot{\overline{\wt \ph}}_t (y) \wt \ph_t (y) \right) a_x^{*} a_x \\ &+ \int dx dy \widetilde{V}(x-y) \left( \overline{\wt \ph}_t (x) \dot{\wt \ph}_t (y) + \dot{\overline{\wt \ph}}_t (x) \wt \ph_t (y) \right) a_y^{*} a_x  \\
&+ 2 \int dx dy \widetilde{V}(x-y) \left( \wt \ph_t (x) \dot{\wt \ph}_t (y)  a_x^{*} a_y^{*} + \overline{\wt \ph}_t (x) \dot{\overline{\wt \ph}}_t (y) a_x a_y \right) \, .
\end{split}
\end{equation}
Next, we want to control $\langle \psi_2, \dot{\cL}_2 (t) \psi_2 \rangle$ in terms of $\langle \psi_2 , (\cL_2 (t) +\cN) \psi_2 \rangle$. There are several contributions to the expectation $\langle \psi_2 , \dot{\cL}_2 (t) \psi_2 \rangle$ arising from the terms on the r.h.s. of (\ref{eq:L2dot}). For example, the contribution from the last line of (\ref{eq:L2dot}) is given by
\begin{equation}\label{eq:I-H2}
\begin{split}
\text{I} = \; &4 \text{Re} \int dx dy\,  \widetilde{V}(x-y) \wt \ph_t (x) \dot{\wt \ph_t}(y) \langle \psi_2, a_x^{*} a_y^{*} \psi_2 \rangle \\
=\; &  - 4 \text{Im } \int dx dy \, \widetilde{V}(x-y) \wt \ph_t (x) \Delta \wt \ph_t (y) \, \langle a_x a_y \psi_2, \psi_2 \rangle \\
&+ 4 \text{Im } \int dx dy \, \widetilde{V}(x-y) \wt \ph_t (x) (\widetilde{V} * |\wt \ph_t|^2 )(y) \wt \ph_t (y)
\,  \langle a_x a_y \psi_2, \psi_2 \rangle 
\end{split} \end{equation}
since $\wt \ph_t$ solves the regularized Hartree equation (\ref{eq:hartreereg}).
Since $\| \widetilde{V} * |\wt \ph_t|^2 \|_{\infty} \leq \| \wt \ph_t \|_{H^1}^2 \leq C$, the second line on the r.h.s. of the last equation can be bounded by 
\begin{equation} \label{eq:Ia-H2} \begin{split}
\Big| \int dy \, (\widetilde{V} * |\wt \ph_t|^2) (y) \, & \wt \ph_t (y) \, \langle a_y \psi_2, a^* (\wt{V} (y-.)\wt \ph_t) \psi_2 \rangle \Big| \\ 
\leq \; &  \int dy \,  \, \left|(\widetilde{V} * |\wt \ph_t|^2) (y)\right|^2 \, | \wt \ph_t (y)|^2 \, \| a^*  (\wt{V} (.-y)\wt \ph_t) \psi_2 \|^2  +  \int dy  \, \| a_y \psi_2 \|^2 \\ 
\leq \; &  \left( 1 + \| \widetilde{V} * |\wt \ph_t|^2 \|_{\infty}^2 \,\sup_y \| \wt{V} (y-.) \wt \ph_t \|^2_2 \right)  \| (\cN+1)^{1/2} \psi_2 \|^2 \\
\leq \; & C \langle \psi_2, (\cN+1) \psi_2 \rangle \, . 
\end{split} \end{equation}

As for the first term on the r.h.s. of (\ref{eq:I-H2}), we write $\psi_2 = \{ \psi_2^{(n)} \}_{n\geq 0}$ and
\[ \begin{split} \langle a_x a_y \psi_2 , \psi_2 \rangle = & \; \sum_{n \geq 0} \int dx_1 \dots dx_n \, \overline{( a_x a_y \psi_2)^{(n)}} (x_1, \dots , x_n) \, \psi_2^{(n)} (x_1, \dots , x_n) 
\\ = & \; \sum_{n \geq 0} \sqrt{(n+1)(n+2)}  \int dx_1 \dots dx_n \, \overline{\psi}_2^{(n+2)} (x,y, x_1, \dots , x_n) \, \psi_2^{(n)} (x_1, \dots , x_n) \,.
\end{split} \]
Therefore, introducing the notation $\bx_n = (x_1, \dots, x_n)$, 
\begin{equation} \begin{split}
\int dx dy \, &\widetilde{V}(x-y) \wt \ph_t (x) \Delta \wt \ph_t (y) \, \langle a_x a_y \psi_2, \psi_2 \rangle \\ = \; & \sum_{n \geq 0} \sqrt{(n+1)(n+2)} \int dx dy d\bx_n \,  \widetilde{V}(x-y) \wt \ph_t (x) \Delta \wt \ph_t (y) \, \overline{\psi}_2^{(n+2)} (x,y,\bx_n) \, \psi_2^{(n)} (\bx_n) . \end{split} 
\end{equation}
Integrating by parts, we find
\[ \begin{split}
\int dx dy \, &\widetilde{V}(x-y) \wt \ph_t (x) \Delta \wt \ph_t (y) \, \langle a_x a_y \psi_2, \psi_2 \rangle \\ = \; & \sum_{n \geq 0} \sqrt{(n+1)(n+2)} \int dx dy d\bx_n \, \nabla \widetilde{V}(x-y) \wt \ph_t (x) \nabla\wt \ph_t (y) \, \overline{\psi}_2^{(n+2)} (x,y, \bx_n) \, \psi_2^{(n)} (\bx_n) \\ &- \sum_{n \geq 0} \sqrt{(n+1)(n+2)} \int d\bx_n \, \widetilde{V}(x-y) \wt \ph_t (x) \nabla\wt \ph_t (y) \, \nabla_y \overline{\psi}_2^{(n+2)} (x,y, \bx_n) \, \psi_2^{(n)} (\bx_n) .
\end{split}
\]
In the first term, we integrate by parts once more, but this time w.r.t. the variable $x$:
\[ \begin{split}
\int dx dy \, &\widetilde{V}(x-y) \wt \ph_t (x) \Delta \wt \ph_t (y) \, \langle a_x a_y \psi_2, \psi_2 \rangle \\ = \; &- \sum_{n \geq 0} \sqrt{(n+1)(n+2)} \int dx dy d\bx_n \, \widetilde{V}(x-y) \nabla\wt \ph_t (x) \nabla\wt \ph_t (y) \, \overline{\psi}_2^{(n+2)} (x,y, \bx_n) \, \psi_2^{(n)} (\bx_n) \\ &- \sum_{n \geq 0} \sqrt{(n+1)(n+2)} \int dx dy d\bx_n \, \widetilde{V}(x-y) \wt \ph_t (x) \nabla\wt \ph_t (y) \, \nabla_x \overline{\psi}_2^{(n+2)} (x,y, \bx_n) \, \psi_2^{(n)} (\bx_n)
\\ &- \sum_{n \geq 0} \sqrt{(n+1)(n+2)} \int dx dy d\bx_n \, \widetilde{V}(x-y) \wt \ph_t (x) \nabla\wt \ph_t (y) \, \nabla_y  \overline{\psi}_2^{(n+2)} (x,y,\bx_n) \, \psi_2^{(n)} (\bx_n).
\end{split}
\]
Taking absolute value, and using Cauchy-Schwarz, we find
\[ \begin{split}
\Big|\int dx dy \, \widetilde{V}(x-y) &\wt \ph_t (x) \Delta \wt \ph_t (y) \, \langle a_x a_y \psi_2, \psi_2 \rangle \Big| \\ \leq \; & \sum_{n \geq 0} (n+2) \int dx dy d\bx_n \, \widetilde{V}^2 (x-y) |\psi_2^{(n+2)} (x,y, \bx_n)|^2 \\ &+\sum_{n \geq 0} (n+1) \int dx dy d\bx_n \, |\nabla\wt \ph_t (x)|^2 \, |\nabla\wt \ph_t (y)|^2 \, 
|\psi_2^{(n)} (\bx_n)|^2 \\
&+ 2 \sum_{n \geq 0} (n+2) \int dx dy d\bx_n \, |\nabla_x \psi_2^{(n+2)} (x,y, \bx_n)|^2 \\
&+ 2 \sum_{n \geq 0} (n+1) \int dx dy d\bx_n \, \wt{V}^2 (x-y) |\wt \ph_t (x)|^2 \, |\nabla\wt \ph_t (y)|^2 \,  |\psi_2^{(n)} (\bx_n)|^2 .
\end{split}\]
Using the fact that $\| \wt{V}^2 *|\wt \ph_t|^2 \|_{\infty} \leq \| \wt \ph_t \|_{H^1}^2 \leq C$ for all $t \in \bR$, and since 
\[  \int dx dy d\bx_n \, \widetilde{V}^2 (x-y) |\psi_2^{(n+2)} (x,y, \bx_n)|^2 \leq \int dx dy d\bx_n \, \left(|\nabla_x \psi_2^{(n+2)} (x,y, \bx_n)|^2 + |\psi_2^{(n+2)} (x,y,\bx_n)|^2 \right) \]
we conclude that
\[ \Big|\int dx dy \, \widetilde{V}(x-y) \wt \ph_t (x) \Delta \wt \ph_t (y) \, \langle a_x a_y \psi_2, \psi_2 \rangle \Big| \leq C \langle \psi_2 , (\cK + \cN + 1) \psi_2 \rangle  \, .\]
Together with (\ref{eq:Ia-H2}) and (\ref{eq:I-H2}), this implies that
\[ | \text{I} | \leq C  \langle \psi_2 , (\cK + \cN + 1) \psi_2 \rangle \, . \] 

The contribution from the second line on the r.h.s. of (\ref{eq:L2dot}) can be bounded analogously. The contribution from the first term on the r.h.s. of (\ref{eq:L2dot}) is given by
\[ \begin{split} 
2 \text{Re } \int dx dy \, \wt{V} (x-y) \overline{\wt \ph}_t (y) \dot{\wt \ph}_t (y) \| a_x \psi_2 \|^2   = \; &2 \text{Im } \int dx dy \wt{V} (x-y) \overline{\wt \ph}_t (y) \Delta \wt \ph_t (y) \, \| a_x \psi_2 \|^2 \\ &+ 2 \text{Im } \int dx dy \wt{V} (x-y) |\wt \ph_t (y)|^2 (\wt{V} * |\wt \ph_t|^2) (y) \, \| a_x \psi_2 \|^2 \\ = \; &\text{A } + \text{B}  \, .
\end{split}
\]
The second term can be estimated by
\[ |\text{B}| \leq 2 \| \wt{V} * |\wt \ph_t|^2 \|^2_{\infty} \langle \psi_2 , \cN \psi_2 \rangle \, . \]
As for the first term, we integrate by parts. Since
\[ \int dx dy \wt{V} (x-y) |\nabla \wt \ph_t (y)|^2 \, \| a_x \psi_2 \|^2 \] is clearly a real number, we find
\[ \text{A} = 2 \text{Im } \int dx dy \nabla\wt{V} (x-y) \overline{\wt \ph}_t (y) \nabla \wt \ph_t (y) \, \| a_x \psi_2 \|^2 \, . \]
Integrating by parts with respect to $x$, we conclude that
\[ \begin{split} |\text{A}| \leq \; &4  \int dx dy \, |\wt{V} (x-y)| \, |\wt \ph_t (y)| |\nabla \wt \ph_t (y)| \, \| \nabla_x a_x \psi_2 \| \, \| a_x \psi_2 \| \\ \leq \; &4 \int dx dy \,|\wt{V} (x-y)|^2 \, |\wt \ph_t (y)|^2 \| \nabla_x a_x \psi_2 \|^2 + \int dx dy \, |\nabla \wt \ph_t (y)| \,  \| a_x \psi_2 \|^2
\\ \leq \; &4 \| \wt{V} * |\wt \ph_t|^2 \|_\infty \langle \psi_2, \cK \psi_2 \rangle + \| \wt \ph_t \|_{H^1} \langle \psi_2 , \cN \psi_2 \rangle \\ \leq \; & C \langle \psi_2, (\cK + \cN) \psi_2 \rangle.
\end{split} \]

Summarizing, we showed that
\[ \left| \frac{d}{dt}  \langle \psi_2 , \cL_2 (t) \psi_2 \rangle \right| \leq C \langle \psi_2, (\cK+ \cN + 1) \psi_2 \rangle \, . \]
Together with Lemma \ref{lm:7} and with Proposition \ref{lm:1}, we conclude that
\[ \left| \frac{d}{dt}  \langle \psi_2 , (\cL_2 (t) + \cN + 1) \psi_2 \rangle \right| \leq C \langle \psi_2, (\cL_2 (t) + \cN + 1) \psi_2 \rangle  \, .\]
Hence, the lemma follows from Gronwall inequality.
\end{proof}

Lemma \ref{lm:7} and Lemma \ref{lm:L2} allow us to control the contribution of the term with $\cL_4$ in (\ref{eq:L3+L4}). To bound the contribution containing $\cL_3 (\tau)$, we use the following lemma. 
\begin{lem}\label{lm:L3}
Suppose $\cL_3 (t)$ is defined as in (\ref{eq:defL3L4}). Then there exists a constant $C>0$ such that 
\[ \| (\cN+1)^j \cL_3 (t) \psi \| \leq \frac{C}{\sqrt{N}} \| (\cN+1)^{j+3/2} \psi \| \]
for all $t \in \bR$. 
\end{lem}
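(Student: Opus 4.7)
The plan is to decompose $\cL_3(t) = T_+ + T_-$ with
\[
T_+ = \frac{1}{\sqrt{N}} \int dx\, a_x^* a^*(g_x) a_x, \qquad T_- = \frac{1}{\sqrt{N}} \int dx\, a_x^* a(g_x) a_x,
\]
where $g_x(y) := \wt V(x-y) \wt \ph_t(y)$. The only fact about $g_x$ that I need is the uniform bound $\sup_x \|g_x\|_{L^2}^2 \leq \|\wt V^2 * |\wt\ph_t|^2\|_\infty \leq C$, which follows from the operator inequality $\wt V^2 \leq D(1-\Delta)$ together with the already-used fact that $\|\wt\ph_t\|_{H^1}$ stays bounded uniformly in $t$ (the same input used in Lemma \ref{lm:7} and Lemma \ref{lm:L2}).

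Since $T_+$ raises the particle number by one and $T_-$ lowers it by one, the commutation identities
\[
(\cN+1)^j T_+ = T_+ (\cN+2)^j, \qquad (\cN+1)^j T_- = T_- \, \cN^j
\]
reduce the claim to the base case $j=0$: it is enough to prove
\[
\|T_\pm \chi\| \leq \frac{C}{\sqrt{N}}\, \|(\cN+1)^{3/2}\chi\| \qquad \text{for all } \chi \in \cF.
\]
Applying this to $(\cN+2)^j \psi$ or $\cN^j \psi$, and absorbing a $j$-dependent constant into $C$, then gives the lemma.

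For the base case I would decompose $\chi = \sum_n \chi_n$ into sectors $\chi_n \in \cF^{(n)}$. Because $T_+ \chi_n \in \cF^{(n+1)}$ and $T_- \chi_n \in \cF^{(n-1)}$ live in orthogonal sectors, $\|T_\pm \chi\|^2 = \sum_n \|T_\pm \chi_n\|^2$, so it suffices to establish the sectorial estimate $\|T_\pm \chi_n\| \leq C N^{-1/2}(n+1)^{3/2}\|\chi_n\|$. Testing against an arbitrary $\eta$ in the appropriate target sector,
\[
\langle \eta, T_\pm \chi_n\rangle = \frac{1}{\sqrt{N}} \int dx\, \langle a_x \eta,\, a^{(*)}(g_x)\, a_x \chi_n\rangle,
\]
and Lemma \ref{lm:0} combined with the observation $a_x \chi_n \in \cF^{(n-1)}$ gives $\|a^{(*)}(g_x)\, a_x \chi_n\| \leq \|g_x\|\, \sqrt{n}\, \|a_x \chi_n\| \leq C\sqrt{n}\,\|a_x \chi_n\|$. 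A Cauchy-Schwarz in $x$, combined with the identity $\int dx\,\|a_x \xi\|^2 = \langle \xi,\cN \xi\rangle$, then produces two further factors bounded by $\sqrt{n+1}$, which multiply to the claimed $(n+1)^{3/2}$.

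There is no real obstacle; the argument is essentially bookkeeping. The only thing to keep straight is where each power of $\cN^{1/2}$ comes from: one $\sqrt{\cN}$ from Lemma \ref{lm:0} applied to $a^{(*)}(g_x)$ on the shifted sector, and one $\sqrt{\cN}$ from each of the two integrals $\int dx \,\|a_x \eta\|^2$ and $\int dx\,\|a_x\chi_n\|^2$. These combine to the factor $(\cN+1)^{3/2}$ on the right-hand side, with no delicate cancellations required, because the uniform-in-$x$ bound on $\|g_x\|_{L^2}$ coming from (\ref{eq:wtVcond}) handles the singularity of the potential on its own.
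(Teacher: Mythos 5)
Your proof is correct, and it takes a genuinely cleaner route than the paper's. The paper proves the estimate by writing $\|\cN^j \cL_3 (t) \psi\|^2 = \langle \psi, \cL_3(t) \cN^{2j} \cL_3(t) \psi\rangle$, expanding the product into several terms (using that $\cL_3$ contains both a double-creation and a creation-annihilation piece), commuting the powers of $\cN$ past the operator-valued distributions, and applying Cauchy--Schwarz term by term; this produces a small case analysis (terms $\mathrm{I}$, $\mathrm{II}$, and sub-terms $\mathrm{A}$, $\mathrm{B}$, $\mathrm{C}$). You instead split $\cL_3 = T_+ + T_-$, commute $(\cN+1)^j$ through each $T_\pm$ once and for all, reduce to $j=0$, and then estimate sector by sector via duality against a test vector $\eta$. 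The cost is a $j$-dependent constant (from replacing $(\cN+2)^j$ by $2^j(\cN+1)^j$), which is harmless given how the lemma is used; the gain is that the Cauchy--Schwarz bookkeeping collapses to a single step and the power counting $(n+1)^{3/2}$ is transparent (one $\sqrt{\cN}$ from $a^{(*)}(g_x)$ via Lemma~\ref{lm:0}, one each from $\int dx\,\|a_x \eta\|^2$ and $\int dx\,\|a_x \chi_n\|^2$). Both proofs rest on the same analytic input, namely $\sup_x \|g_x\|_{L^2}^2 = \|\wt V^2 * |\wt\ph_t|^2\|_\infty \le C$ coming from the operator inequality \eqref{eq:wtVcond} together with the uniform $H^1$ bound on $\wt\ph_t$, so nothing new is required beyond the paper's toolkit. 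One small remark: the lemma is invoked in Proposition~\ref{prop:comp2} with a half-integer exponent $j=1/2$, but your commutation identities $(\cN+1)^j T_\pm = T_\pm (\cN + 1 \pm 1)^j$ are purely spectral and hold for any real $j\geq 0$, so this causes no difficulty.
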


\begin{proof}
We compute
\begin{equation} \label{eq:I+II}\begin{split}
\| \cN^j \cL_3 (t) \psi \|^2 = \; & \langle \psi, \cL_3 (t) \cN^{2j} \cL_3 (t) \psi \rangle \\ = \; &\int dx dy dz dw \, \wt V(x-y) \, \wt V(z-w) \, \\ &\hspace{1.5cm} \times \langle \psi , a_x^* (a_y^* \wt \ph_t (y) + a_y \overline{\wt \ph}_t (y) ) a_x \cN^{2j} a_z^* \left( a_w^* \wt \ph_t (w) + a_w \overline{\wt \ph}_t (w) \right) a_z \psi \rangle \\ = \; & 2 \text{Re } \int dx dy dz dw \, \wt V(x-y) \, \wt V(z-w) \,\wt \ph_t (y) \wt \ph_t (w)  \langle \psi , a_x^* a_y^* a_x \cN^{2j} a_z^* a_w^* a_z \psi \rangle \\ &+ 2 \text{Re } \int dx dy dz dw \, \wt V(x-y) \, \wt V(z-w) \,\wt \ph_t (y) \overline{\wt \ph}_t (w)  \langle \psi , a_x^* a_y^* a_x \cN^{2j} a_z^* a_w a_z \psi \rangle \\ =\; & \text{I } + \text{II} \, .
\end{split} \end{equation}
Using the canonical commutation relations and the formula $a_x \cN = (\cN+1) a_x$, we find 
\begin{equation}\label{eq:I} \begin{split}
\text{I } =\; & 2 \text{Re }  \int dx dy dz dw \, \wt V(x-y) \, \wt V(z-w) \,\wt \ph_t (y) \wt \ph_t (w)  \langle 
a_x a_y a_z a_w (\cN+1)^{j-1/2} \psi , a_x a_z (\cN + 1)^{j+1/2} \psi \rangle \\ &+ 2 \text{Re } \int dx dy dw \,  \wt V(x-y) \, \wt V(x-w) \,\wt \ph_t (y) \wt \ph_t (w)  \langle 
a_x a_y a_w (\cN+1)^{j-1/2} \psi , a_x (\cN + 1)^{j+1/2} \psi \rangle \\ &+ 2 
\text{Re } \int dx dy dz \,  \wt V(x-y) \, \wt V(z-x) \,\wt \ph_t (y) \wt \ph_t (x)  \langle 
a_x a_y a_z (\cN+1)^{j-1/2} \psi , a_z (\cN + 1)^{j+1/2} \psi \rangle \\ = \; & \text{A } + \text{B } + \text{C}  \, .\end{split} \end{equation}
Applying Schwarz inequality, we find 
\[ \begin{split} 
|\text{A}| \leq \; &  \int dx dy dz dw \, \| a_x a_y a_z a_w (\cN+1)^{j-1/2} \psi \|^2 \\ &+ \int dx dy dz dw \, \wt V^2 (x-y) \, \wt V^2 (z-w) \, |\wt \ph_t (y)|^2  |\wt \ph_t (w)|^2 \| a_x a_z (\cN + 1)^{j+1/2} \psi \|^2 \\ \leq \; & \| (\cN + 1)^{j+3/2} \psi \|^2 \left( 1 + \| \wt V^2 * |\wt \ph_t (y)|^2 \|_\infty^2 \right) \\ \leq \; &C  \| (\cN + 1)^{j+3/2} \psi \|^2
\end{split} \]
where we used that, since $\wt V^2 \leq (1-\Delta)$, $\| \wt V^2 * |\wt \ph_t|^2 \|_\infty  \leq \| \wt \ph_t \|_{H^1} \leq C$ uniformly in $t \in \bR$. The second term on the r.h.s. of (\ref{eq:I}) can be bounded by 
\[ \begin{split} 
|\text{B}| \leq  \; & \int dx dy dw \,  \|  
a_x a_y a_w (\cN+1)^{j-1/2} \psi \|^2 \\ &+ \int dx dy dw \, \wt V^2 (x-y) \, \wt V^2 (x-w) \, |\wt \ph_t (y)|^2 |\wt \ph_t (w)|^2   \| a_x (\cN + 1)^{j+1/2} \psi \|^2   \\  \leq \; &C  \| (\cN + 1)^{j+1} \psi \|^2  \, . \end{split} \]
Similarly, the third term on the r.h.s. of (\ref{eq:I}) is controlled by 
\[\begin{split} 
| \text{C}| \leq \; & \int dx dy dz \,  \|  
a_x a_y a_z (\cN+1)^{j-1/2} \psi \|^2 \\ &+ \int dx dy dz \, \wt V^2 (x-y) \, \wt V^2 (z-x) \, |\wt \ph_t (y)|^2 |\wt \ph_t (x)|^2   \| a_z (\cN + 1)^{j+1/2} \psi \|^2   \\ \leq \; &C  \| (\cN + 1)^{j+1} \psi \|^2 \end{split} \]
where, on the second line, we first integrate over $y$ and we extract the supremum over $x$ of $|\wt V^2 * |\wt \ph_t|^2 )(x)|$. Afterwards we integrate over $x$ and extract the supremum over $z$ of $|(\wt V^2 * |\wt \ph_t|^2)(z)|$ and, finally, we integrate over $z$.

The second term on the r.h.s. of (\ref{eq:I+II}) can be written as 
\[ \begin{split}
\text{II } = \; & 2 \text{Re }  \int  dx dy dz dw \, \wt V(x-y) \, \wt V(z-w) \,\wt \ph_t (y) \overline{\wt \ph}_t (w)  \langle a_x a_y a_z  (\cN-1)^j \psi , a_x a_w a_z  (\cN-1)^j \psi \rangle \\ &+ 2 \text{Re }  \int  dx dy dw \, \wt V(x-y) \, \wt V(x-w) \,\wt \ph_t (y) \overline{\wt \ph}_t (w)  \langle a_x a_y (\cN-1)^j \psi , a_x a_w  (\cN-1)^j \psi \rangle  \, .
\end{split} 
\]
Hence, we can estimate 
\[ \begin{split} 
| \text{II}| \leq \; & 2 \int dx dy dz dw \, \wt V^2 (z-w) |\wt \ph_t (w)|^2 \, \| a_x a_y a_z (\cN-1)^j \psi \|^2\\ &+  2 \int  dx dy dw \, \wt V^2 (x-w) \, |\wt \ph_t (w)|^2  \, \| a_x a_y (\cN-1)^j \psi \|^2 \\ \leq \; & \| \wt V^2 * |\wt \ph_t|^2 \|_{\infty} \, \| (\cN+1)^{j+3/2} \psi \|^2  \, .
\end{split} \]
This completes the proof of the lemma.
\end{proof}

\section{Relation between product states and coherent states}
\label{sec:pro-coh}

In this paper we are interested in the evolution of factorized initial data of the form $\ph^{\otimes N}$ with a fixed number of particles $N$. Since it is more convenient to work with coherent states, we write
\[ \{ 0, \dots , 0, \ph^{\otimes N}, 0, \dots \} = \frac{a^* (\ph)^{\otimes N}}{ \sqrt{N!}} = d_N P_N W(\sqrt{N} \ph) \Omega \]
where the constant $d_N \simeq N^{1/4}$ takes into account the fact that only a small part of the coherent state $W(\sqrt{N} \ph) \Omega$ lies in the $N$-particles sector. Similarly, if we apply the inverse Weyl operator $W^* (\sqrt{N} \ph)$ to the factorized state $\{ 0, \dots , 0, \ph^{\otimes N} , 0,\dots \}$, only a small part (of size $d_N^{-1}$) 
of the resulting Fock space vector will have a small number of particles. This is the content of the next lemma, whose proof can be found in \cite{CL}. 
\begin{lem} \label{lm:2}
There exists a constant $C>0$ such that, for any $\ph \in L^2 (\bR^3, dx)$, we have
\[ \left\| (\cN+1)^{-1/2} W^* (\sqrt{N} \ph) \frac{a^* (\ph)^N}{\sqrt{N!}} \Omega \right\| \leq \frac{C}{d_N}  \, . \]
\end{lem}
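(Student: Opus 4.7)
The plan is to reduce to a single-mode computation, derive an explicit generating function for the expansion coefficients of $\xi_N := W^*(\sqrt{N}\ph)\frac{a^*(\ph)^N}{\sqrt{N!}}\Omega$, and represent the desired squared norm as a two-dimensional Gaussian integral of size $O(d_N^{-2})$.

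First I would note that $W^*(\sqrt{N}\ph)$ depends only on $a^*(\ph)$, $a(\ph)$ and acts on $\Omega$, so $\xi_N$ lies in the closed span of the orthonormal family $\chi_n := \frac{a^*(\ph)^n}{\sqrt{n!}}\Omega$, $n\ge 0$. Writing $\xi_N = \sum_n c_n\chi_n$, unitarity gives $\sum_n |c_n|^2 = 1$ and the estimate becomes $\sum_{n\ge 0}|c_n|^2/(n+1) \le C/d_N^2$. Using the intertwining $W^*(\sqrt{N}\ph)a^*(\ph) = (a^*(\ph)+\sqrt{N})W^*(\sqrt{N}\ph)$ from Lemma \ref{lm:coh}(iii), together with $W^*(\sqrt{N}\ph)\Omega = e^{-N/2}\exp(-\sqrt{N}a^*(\ph))\Omega$, one finds
\[
\xi_N = \frac{e^{-N/2}}{\sqrt{N!}}\bigl(a^*(\ph)+\sqrt{N}\bigr)^N\exp\bigl(-\sqrt{N}\,a^*(\ph)\bigr)\Omega.
\]
Expanding both factors in powers of $a^*(\ph)$ and identifying coefficients yields the entire generating function
\[
F(w) := \sum_{n\ge 0}\frac{c_n\,w^n}{\sqrt{n!}} = \frac{1}{d_N}\Bigl(1+\frac{w}{\sqrt{N}}\Bigr)^N e^{-\sqrt{N}w}, \qquad w\in\bC;
\]
in particular $c_0 = 1/d_N$ and $c_1 = 0$, matching the identity $P_1 W^*(\sqrt{N}\ph)a^*(\ph)^N\Omega = 0$ noted in the main text.

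Next I would use the Bargmann reproducing-kernel identity $\sum_n |c_n|^2 s^n = \int\frac{d^2w}{\pi}e^{-|w|^2}|F(\sqrt{s}w)|^2$, together with $1/(n+1) = \int_0^1 s^n\,ds$ and the rescaling $v = \sqrt{s}w$, to obtain
\[
\sum_n \frac{|c_n|^2}{n+1} = \frac{1}{d_N^2}\int_0^1\frac{ds}{s}\int\frac{d^2v}{\pi}\,e^{-|v|^2/s}\Bigl|1+\frac{v}{\sqrt{N}}\Bigr|^{2N}e^{-2\sqrt{N}\,\Re v}.
\]
Writing $v = u+i\tau$, the expansion $N\log|1+v/\sqrt{N}|^2 - 2\sqrt{N}u = -u^2 + \tau^2 + O(|v|^3/\sqrt{N})$ shows that, in the region $|v| \ll \sqrt{N}$, the integrand behaves like $e^{-|v|^2/s}e^{-u^2+\tau^2}$; its $(u,\tau)$-integral equals $s/\sqrt{1-s^2}$, and $\int_0^1 ds/\sqrt{1-s^2} = \pi/2$ then produces the leading bound $\pi/(2 d_N^2)$.

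The main obstacle is controlling the tail $|v|\gtrsim\sqrt{N}$, where the quadratic expansion fails and the naive estimate $\log(1+x)\le x$ yields only $|F(\sqrt{s}w)|^2 \le d_N^{-2} e^{|v|^2}$, producing a logarithmically divergent $s$-integral. To handle this, one uses the sharpened identity $(1+u/\sqrt{N})^{2N}e^{-2\sqrt{N}u} = \exp(-2N\int_0^{u/\sqrt{N}}\tfrac{t}{1+t}\,dt)$ to obtain genuine quadratic decay in $u$ on $\Re v > -\sqrt{N}/2$, combined with the factorization $|1+v/\sqrt{N}|^2 = (1+u/\sqrt{N})^2 + \tau^2/N$ and the elementary bound $(1+\tau^2/(N(1+u/\sqrt{N})^2))^N \le \exp(\tau^2/(1+u/\sqrt{N})^2)$ to control the $\tau$-growth; the remote region $\Re v \le -\sqrt{N}/2$ contributes an exponentially small tail by direct estimation. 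Assembling these pieces delivers the claimed $O(d_N^{-2})$ bound.
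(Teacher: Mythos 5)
Your reduction to a single mode, the generating function
\[
F(w) = \sum_{n\ge 0}\frac{c_n w^n}{\sqrt{n!}} = \frac{1}{d_N}\Bigl(1+\frac{w}{\sqrt{N}}\Bigr)^{N}e^{-\sqrt{N}w},
\]
the consistency checks $c_0=1/d_N$, $c_1=0$, the Bargmann reproducing-kernel identity, the substitution $v=\sqrt{s}\,w$, and the leading Gaussian computation $-u^2+\tau^2$ with $\int_0^1\frac{ds}{\sqrt{1-s^2}}=\pi/2$ are all correct, and the method is genuinely different from the paper's (which simply defers the proof to the reference~\cite{CL}). However, the sketch of the tail control contains a real gap.

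The gap is in the region $-\sqrt{N}/2 < u < u_0(s)$ where $u_0(s):=\sqrt{N}(\sqrt{s}-1)$. Your estimate for the $\tau$-growth,
\[
\Bigl(1+\frac{\tau^2}{N(1+u/\sqrt{N})^2}\Bigr)^{N}\le \exp\Bigl(\frac{\tau^2}{(1+u/\sqrt{N})^2}\Bigr),
\]
combined with the Gaussian factor $e^{-\tau^2/s}$, yields $\exp\bigl(-\tau^2\bigl[\tfrac1s-\tfrac{1}{(1+u/\sqrt{N})^2}\bigr]\bigr)$, which integrates to a finite number over $\tau\in\bR$ only if $(1+u/\sqrt{N})^2>s$, i.e.\ $u>u_0(s)$. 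For every $s\in(1/4,1)$ one has $u_0(s)>-\sqrt{N}/2$, so the strip $-\sqrt{N}/2<u<u_0(s)$ is nonempty, lies inside the region you claim is controlled by the ``quadratic decay'' estimate, and yet is not touched by your ``remote region $\Re v\le -\sqrt{N}/2$'' escape. On that strip the exponential bound for the $\tau$-factor makes the $\tau$-integral divergent, so as written the argument does not close.

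This is fixable, but it requires replacing the exponential bound by the genuine polynomial structure of $\bigl(1+\tau^2/(N(1+u/\sqrt{N})^2)\bigr)^{N}$ and running a saddle-point estimate of the resulting $\tau$-integral (the critical $\tau^2\approx N(s-(1+u/\sqrt{N})^2)$ together with the $u$-decay and the $s^N e^{-(1-s)N}$ factor does produce a net negative exponent, with a $\sim-N(1-s)^2$ rate near $s=1$). Alternatively, rather than integrating twice, one can compute $\sum_n|c_n|^2 s^n$ in closed form: using $s^{a^*a}=\,:\!e^{(s-1)a^*a}\!:\,$ and the Weyl conjugation one finds $\sum_n|c_n|^2 s^n = e^{-(1-s)N}s^N L_N\bigl(-(1-s)^2N/s\bigr)$, which reduces the problem to a one-dimensional Laplace asymptotic for Laguerre polynomials and avoids the two-dimensional tail bookkeeping. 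Either route requires more work than your sketch provides; you should flag that the exponential $\tau$-bound is insufficient in the intermediate strip and supply the polynomial/saddle-point estimate there.
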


\section{A property of the quadratic evolution $\cU_2 (t;s)$}
\label{sec:U2}

One of the reasons why we obtain precise error bounds is the observation that, for arbitrary $t,s \in \bR$ and $f \in L^2 (\bR^3, dx)$, the vectors $\cU_2 (t;s) a^* (f) \cU_2 (t;s) \Omega$ and $\cU_2 (t;s) a (f) \cU_2 (t;s) \Omega$ are localized in the one-particle sector. This fact is proven in the following lemma. 

\begin{lem}\label{lm:3} Suppose that the evolution $\cU_2 (t;s)$ is defined as in (\ref{eq:U2}). Then we have, for any $f \in L^2 (\bR^3, dx)$ and any $t \in \bR$, 
\[ \cU_2 (t;0)^* \phi (f) \cU_2 (t;0) \Omega = P_1 U_2 (t;0)^* \phi (f) \cU_2 (t;0) \Omega  \, . \]
\end{lem}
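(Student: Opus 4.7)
The crux of the lemma is that $\cL_2(t)$ is purely quadratic in creation and annihilation operators, so $\cU_2(t;0)$ implements a (time-dependent) Bogoliubov transformation on the CCR algebra: conjugation by $\cU_2(t;0)$ maps the linear space $\{a(g) + a^*(h) : g,h \in L^2(\bR^3)\}$ into itself. Granting this, the lemma is immediate, since then $\cU_2(t;0)^* \phi(f) \cU_2(t;0) = a^*(u_t) + a(v_t)$ for suitable $u_t, v_t \in L^2(\bR^3)$, and applying this operator to $\Omega$ kills the annihilation piece, leaving $a^*(u_t) \Omega = \{0, u_t, 0, \dots\} \in \cF^{(1)}$.

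To establish this Bogoliubov structure directly, I would first use the canonical commutation relations together with the explicit form (\ref{eq:L2}) of $\cL_2(t)$ to verify that $[\cL_2(t), a(g)]$ and $[\cL_2(t), a^*(g)]$ are linear combinations of $a, a^*$ with $L^2$-valued kernels depending linearly on $g$; this is a routine computation, since each term in $\cL_2(t)$ is of the form $a^\# a^\#$ and so its commutator with a linear object has degree at most one. Next, make the ansatz
\[
\cU_2(t;0)^* \phi(f) \cU_2(t;0) = a^*(u_t) + a(v_t), \qquad u_0 = v_0 = f,
\]
and match the Heisenberg equation $\partial_t \bigl( \cU_2^* \phi(f) \cU_2 \bigr) = i \cU_2^* [\cL_2(t), \phi(f)] \cU_2$ with the formal derivative $a^*(\dot u_t) + a(\dot v_t)$. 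Tracking the analogous ansatz for $\cU_2^* a^\#(g) \cU_2$ closes the system: one obtains a linear ODE on $L^2(\bR^3) \oplus L^2(\bR^3)$ whose coefficient operators are bounded uniformly in $t$ (thanks to $\| \wt\ph_t \|_{H^1} \leq C$ and $|\wt V| \leq \alpha_N^{-1}$), so global unique solutions $u_t, v_t$ exist. By uniqueness of the Heisenberg evolution on a suitable dense domain (finite-particle vectors, for which Proposition~\ref{lm:1} provides the necessary a priori bounds to apply $\cU_2(t;0)$), the ansatz holds as an operator identity. Equivalently, one may invoke the Bogoliubov implementation of quadratic dynamics established in \cite{GV}, already cited in the excerpt for the existence of $\cU_2(t;s)$.

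Applying this identity to the vacuum, $a(v_t) \Omega = 0$ gives $\cU_2(t;0)^* \phi(f) \cU_2(t;0) \Omega = a^*(u_t) \Omega \in \cF^{(1)}$, on which $P_1$ acts as the identity. The algebra is entirely transparent; the main obstacle is the functional-analytic step of rigorously identifying the ODE-generated operator $a^*(u_t) + a(v_t)$ with the conjugated operator $\cU_2(t;0)^* \phi(f) \cU_2(t;0)$ in the face of unbounded creation/annihilation operators, which is precisely what the growth bounds from Proposition~\ref{lm:1} are designed to handle.
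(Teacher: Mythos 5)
Your approach is genuinely different from the paper's. The paper does not establish the full Bogoliubov implementation of $\cU_2(t;s)$. Instead it fixes a unit vector $\psi$ in an $m$-particle sector with $m \neq 1$, introduces
\[
F(t) = \sup_{f} \frac{1}{\|f\|} \left| \langle \psi, \cU_2(t;0)^* a(f) \cU_2(t;0) \Omega \rangle \right| + \sup_{f} \frac{1}{\|f\|} \left| \langle \psi, \cU_2(t;0)^* a^*(f) \cU_2(t;0) \Omega \rangle \right|,
\]
notes that $F(0)=0$, and derives a Gronwall inequality $F(t) \le C\int_0^t F(s)\,ds$ to conclude $F\equiv 0$. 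The crucial device there is conjugation by the free propagator: they differentiate $\langle \psi, \cU_2^* e^{i\cK t} a(f) e^{-i\cK t}\cU_2 \Omega\rangle$ so that the commutator with $\cL_2(t) - \cK$ appears, and that commutator has only bounded, $L^2$-valued coefficients. Your route — solve the Heisenberg ODE for the implementing map $(u_t, v_t)$ and then hit the vacuum, or invoke \cite{GV} to get the Bogoliubov structure outright — would prove the stronger operator identity $\cU_2^* \phi(f) \cU_2 = a^*(u_t) + a(v_t)$, from which the lemma follows by $a(v_t)\Omega = 0$. That is a legitimate and more conceptual path.

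There is, however, a genuine gap in your direct ODE argument as stated: the claim that the resulting linear ODE on $L^2 \oplus L^2$ has \emph{bounded} coefficient operators is false. The kinetic term in $\cL_2(t)$ contributes $[\,\cK, a(g)\,] = -a(\Delta g)$, and $g \mapsto \Delta g$ is not bounded on $L^2$ (indeed $\Delta g$ need not even be in $L^2$ for $g \in L^2$). Thus you cannot simply appeal to Picard iteration with uniformly bounded generators. This is precisely the obstruction the paper's proof is designed around: conjugating by $e^{-i\cK t}$ removes $\cK$ from the commutator, leaving only the interaction part, whose coefficients \emph{are} bounded via $\wt V^2 \le D(1-\Delta)$ and $\|\wt\ph_t\|_{H^1} \le C$. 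To repair your argument you would either have to perform the same interaction-picture reduction, or treat the ODE for $(u_t,v_t)$ as a Schr\"odinger-type equation (generator $-\Delta$ plus bounded, time-dependent perturbation) and invoke semigroup theory for global well-posedness. Alternatively, your fallback of citing \cite{GV} for the Bogoliubov implementation sidesteps the issue entirely, but that is a heavier import than the paper's self-contained Gronwall estimate.
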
 

\begin{proof}
For any $\psi \in  \cF$ with $\| \psi \| = 1$ and $\psi = {\bf 1} (\cN = m) \psi$, with $m \not = 1$, we define the quantity
\[ F(t) = \sup_{f \in L^2 (\bR^3)} \frac{1}{\| f \|} \left| \langle \psi , \cU_2 (t;0)^* a(f) \cU_2 (t;0) \Omega \rangle \right| + \sup_{f \in L^2 (\bR^3)}  \frac{1}{\| f \|} \left| \langle \psi , \cU_2 (t;0)^* a^* (f) \cU_2 (t;0) \Omega \rangle \right| \, .
\] Note that $F(0) = 0$. Let 
\[ \cK = \int dx \, \nabla_x a^*_x \nabla_x a_x  \, .\]
We observe that, for any $f \in L^2 (\bR^3, dx)$, 
\[ e^{i\cK t} a(f) e^{-i\cK t} = a (e^{-i\Delta t} f)  \, .  \]
Since $e^{-i\Delta t}$ is a unitary operator on $L^2 (\bR^3, dx)$, we conclude that
\[ \sup_{f \in L^2 (\bR^3)} \frac{1}{\| f \|} \left| \langle \psi , \cU_2 (t;0)^* a(f) \cU_2 (t;0) \Omega \rangle \right| = \sup_{f \in L^2 (\bR^3)} \frac{1}{\| f \|} \left|\langle \psi , \cU_2 (t;0)^* e^{i\cK t} a(f) e^{-i\cK t} \cU_2 (t;0) \Omega \rangle \right|  \] and similarly if we replace $a(f)$ with $a^* (f)$. This implies that
\[\begin{split}  F(t) = \; & \sup_{f \in L^2 (\bR^3)}  \frac{1}{\| f \|} \left| \langle \psi , \cU_2 (t;0)^* e^{i\cK t} a(f) e^{-i\cK t} \cU_2 (t;0) \Omega \rangle \right| \\ &+ \sup_{f \in L^2 (\bR^3)}  \frac{1}{\| f \|} \left| \langle \psi , \cU_2 (t;0)^* e^{i\cK t} a^* (f) e^{-i\cK t} \cU_2 (t;0) \Omega \rangle \right| \, .  \end{split}
\]

For $f \in L^2(\bR^3, dx)$, we compute
\[ \begin{split} i\frac{d}{dt}  \langle \psi , \cU_2 (t;0)^* &e^{i\cK t} a(f) e^{-i\cK t} \cU_2 (t;0) \Omega \rangle \\ & = \langle \psi, \cU_2 (t;0)^* \left[  e^{i\cK t} a(f) e^{-i\cK t},  \cL_2 (t) - \cK \right]  \cU_2 (t;0) \Omega \rangle \\ &
 = \langle \psi, \cU_2 (t;0)^* \left[  a(f_t), \cL_2 (t) - \cK \right]  \cU_2 (t;0) \Omega \rangle \end{split} \]
 with $f_{t} = e^{-i\Delta t} f$.  Using the canonical commutation relations, it is simple to check that
\[ [ a(f_t) , \cL_2 (t) - \cK ] = a ( (\wt V*|\wt \ph_t|^2) f_t + (\wt V*f_t \overline{\wt \ph_t}) \wt \ph_t) + a^* (2 (\wt V*\overline{f}_t \wt \ph_t ) \wt \ph_t ) \, . \]

Notice that, under the assumption $\wt V^2 \leq C (1-\Delta)$, we find
\[ \| (\wt V*|\wt \ph_t|^2) f_t \| \leq \| f_t \| \sup_x \int dy \wt V(x-y) |\wt \ph_t (y)|^2 \leq C \| f_t \| \|\wt \ph_t \|_{H^{1/2}}^2 \leq C \| f \| \]
and 
\[  \| (\wt V*f_t \overline{\wt \ph_t}) \wt \ph_t \| \leq \| \wt \ph_t \| \sup_x \int dy \, \wt V(x-y) |f_t(y)| |\wt \ph_t (y)|
\leq C \| f_t \| \| \wt \ph_t \|_{H^1}^2 \leq C \| f \| \]
for a constant $C$, independent of $t$ and $f$. 
Therefore, we conclude that 
\[  
\frac{1}{\| f \|} \left| \langle \psi , \cU_2 (t;0)^* a(f) \cU_2 (t;0) \Omega \rangle \right| \leq C \int_0^t ds \, F(s) \]
for every $f \in L^2 (\bR^3, dx)$. The same bound can be obtained with $a(f)$ replaced by $a^* (f)$. Hence, we obtain 
\[ 0 \leq F(t) \leq C \int_0^t ds F(s) \, . \]
This bound, together with $F(0) = 0$ and with the a-priori bound (which follows from Proposition \ref{lm:1}) 
\[ F(t) \leq 2 \| (\cN+1)^{1/2} \cU_2 (t;0) \Omega \|  \leq C e^{K |t|} \]
implies that $F(t) = 0$ for all $t \in \bR$. 
\end{proof}

\section*{Acknowledgment}
We are grateful to H.-T. Yau for helpful discussions.

\thebibliography{hhh}

\bibitem{CL} Chen, L.; Lee, J. O.: Rate of Convergence in Nonlinear Hartree Dynamics with Factorized Initial Data. Preprint arXiv:1008.3942.

\bibitem{BGM} C. Bardos, F. Golse and N. Mauser: {\sl Weak coupling limit of the
$N$-particle Schr\"odinger equation.} Methods Appl. Anal. {\bf 7}
(2000) 275--293.

\bibitem{ES} Elgart, A.; Schlein, B.: Mean field dynamics of boson stars. {\it Comm. Pure Appl. Math.} {\bf 60} (2007), no. 4, 500-545.

\bibitem{ErS} Erd\H os, L.; Schlein, B.: Quantum dynamics with mean field interactions: a new approach. {\em J. Stat. Phys.} {\bf 134} (2009), no. 5, 859-870.

\bibitem{ESY1} Erd{\H{o}}s, L.; Schlein, B.; Yau, H.-T.:
Derivation of the cubic nonlinear Schr\"odinger equation from
quantum dynamics of many-body systems. {\it Invent. Math.} {\bf 167} (2007), 515-614.

\bibitem{ESY2} Erd{\H{o}}s, L.; Schlein, B.; Yau, H.-T.: Derivation of the Gross-Pitaevskii equation for the dynamics of Bose-Einstein condensate. Preprint arXiv:math-ph/0606017. To appear in {\it Ann. Math.}

\bibitem{ESY3}  Erd{\H{o}}s, L.; Schlein, B.; Yau, H.-T.: Rigorous derivation of the Gross-Pitaevskii equation with a large interaction potential. Preprint arXiv:0802.3877. To appear in {\it J. Amer. Math. Soc.}

\bibitem{EY} Erd{\H{o}}s, L.; Yau, H.-T.: Derivation
of the nonlinear {S}chr\"odinger equation from a many body {C}oulomb
system. \textit{Adv. Theor. Math. Phys.} \textbf{5} (2001), no. 6, 1169--1205.

\bibitem{GMM}
Grillakis, M.; Machedon, M.; Margetis, D.: Second-order corrections to mean field evolution of weakly interacting bosons. I. {\it Comm. Math. Phys.} {\bf 294} (2010), no. 1, 273--301.

\bibitem{GMM2}
Grillakis, M.; Machedon, M.; Margetis, D.: Second-order corrections to mean field evolution of weakly interacting bosons. II. Preprint arXiv:1003.4713.

\bibitem{GV} Ginibre, J.; Velo, G.: The classical
field limit of scattering theory for non-relativistic many-boson
systems. I and II. \textit{Commun. Math. Phys.} \textbf{66} (1979),
37--76, and \textbf{68} (1979), 45--68.

\bibitem{He} Hepp, K.: The classical limit for quantum mechanical
correlation functions. \textit{Commun. Math. Phys.} \textbf{35}
(1974), 265--277.

\bibitem{MS} Michelangeli, A.; Schlein, B.: Dynamical Collapse of Boson Stars. Preprint arXiv:1005.3135.

\bibitem{KP} Knowles, A.; Pickl, P.: Mean-field dynamics: singular potentials and rate of convergence. Preprint arXiv:0907.4313.

\bibitem{P} Pickl, P.: Derivation of the time dependent Gross Pitaevskii equation with external fields. Preprint arXiv:1001.4894.

\bibitem{RS}
Rodnianski, I.; Schlein, B.: Quantum fluctuations and rate of convergence towards mean field dynamics. {\it Comm. Math. Phys.} {\bf 291} (2009), no. 1, 31--61.

\bibitem{Sp} Spohn, H.: Kinetic equations from Hamiltonian dynamics.
   \textit{Rev. Mod. Phys.} \textbf{52} (1980), no. 3, 569--615.

\end{document}